\documentclass{article}

 \usepackage[preprint]{neurips_2026}

\usepackage[utf8]{inputenc} 
\usepackage[T1]{fontenc}    
\usepackage{hyperref}       
\usepackage{url}            
\usepackage{booktabs}       
\usepackage{amsfonts}       
\usepackage{nicefrac}       
\usepackage{microtype}      
\usepackage{xcolor}         
\usepackage{amsmath}
\usepackage{algorithm}
\usepackage{algpseudocode}
\usepackage{amsthm}
\usepackage{thmtools}
\usepackage{thm-restate}
\usepackage{graphicx}
\usepackage{caption}
\usepackage{enumitem}

\usepackage{wrapfig}

\newtheorem{theorem}{Theorem}[section]
\newtheorem{corollary}{Corollary}[theorem]
\newtheorem{lemma}[theorem]{Lemma}

\newtheorem{definition}[theorem]{Definition}

\DeclareMathOperator*{\argmin}{argmin}
\DeclareMathOperator*{\argmax}{argmax}

\title{Robust Simulation Based Inference Through Robust Optimal Transport}

\author{%
  Peter Matthew Jacobs \\
  Department of Statistics\\
  University of Wisconsin-Madison\\
  Madison, WI 53706 \\
\texttt{pjacobs5@wisc.edu} \\
\And
Lekha Patel \\
  Scientific Machine Learning Department\\
  Sandia National Laboratories\\
  Albuquerque, NM 87185\\
\texttt{lpatel@sandia.gov}
\And
Anirban Bhattacharya \\
Department of Statistics \\
Texas A\&M University \\
College Station, TX 77843 \\
\texttt{anirbanb@stat.tamu.edu}
\And
Debdeep Pati\\
Department of Statistics \\
University of Wisconsin-Madison\\
Madison, WI 53706 \\
\texttt{dpati2@wisc.edu}
}

\begin{document}

\maketitle

\begin{abstract}
  When a statistical model $\{P_{\theta} : \theta \in \Theta\}$ lacks analytically tractable likelihoods, parametric statistical inference based on data generated from an unknown underlying distribution $P$ can still be performed as long as simulations from the model are possible. This approach is called Simulation Based Inference (SBI). Statistical models are rarely exactly correct (that is, $P \notin \{P_{\theta}: \theta \in \Theta\}$), and Robust SBI focuses on inferring a reasonable parameter even under model mis-specification. We focus on the setting where $P$ possesses potentially both geometric and Total Variation type discrepancies from $P_{\theta^*}$. For this problem, we use a Kullback-Liebler informed robust Optimal Transport divergence, motivated by Empirical Likelihood considerations. We introduce a stochastic sub-gradient ascent algorithm with a convergence guarantee for estimating the semi-discrete version of this robust Optimal Transport divergence, and design a parallelized SBI algorithm which employs the regular bootstrap on top of minimum semi-discrete robust Optimal Transport for parameter uncertainty quantification. We demonstrate mathematically why the divergence is robust under a joint geometric plus Total Variation type contamination and then illustrate the robustness of inferences on a complex benchmark SBI task.
\end{abstract}

\section{Introduction}
\label{sec:intro}

Consider the scenario where a practitioner selects a  statistical model $\{P_{\theta} : \theta \in \Theta\}$ and would like to infer the parameters based on a collection of independent and identically distributed data $Y_1,\dots,Y_n \overset{iid}{\sim} P$. Standard statistical inference makes the often unrealistic assumption the data generating probability distribution $P$ is a member of the model (i.e $P = P_{\theta^{*}}$ for some $\theta^{*} \in \Theta$). Contamination models allow to study the robustness properties of a statistical method. For example the standard Huber Contamination model \citep{huber1992robust} assumes $P = (1-\epsilon) P_{\theta^{*}} + \epsilon F$ for an arbitrary distribution $F$ and some small $\epsilon > 0$. A method is called robust to a contamination if a small amount of it, for example small $\epsilon$ in the Huber model, cannot lead to arbitrarily large changes in the estimate for $\theta^{*}$. The Huber contamination assumption is rigid in that it does not allow for the introduction of a geometric perturbation to $P_{\theta^{*}}$. Consider for example the case that with probability $1-\epsilon$, a sample is drawn from $P_{\theta^{*}}$, but then is moved a distance at most $\rho$. Specifically, a sample $Y \sim P$ if $X \sim P_{\theta^{*}},W \sim bernoulli(\epsilon),Z \sim F$ are independent and $Y = (1-W) T(X) + W Z$ where $|T(x) -x| \leq \rho$ for all $x$. This type of \textit{all point} contamination is well captured by the Wasserstein distance. Letting $\mathcal{P}_{p}(\mathbb{R}^{m})$ be the Borel probability measures on $\mathbb{R}^{m}$ with finite $p^{th}$ moment for $p \geq 1$, the Wasserstein-$p$ distance \citep{villani2009optimal,chewi2025statistical} is defined as $
W_p(P,Q) = \left( \inf_{\pi \in \Pi(P,Q)} \mathbb{E}_{(A,B) \sim \pi} \| A - B\|_2^{p}\right)^{1/p}$ where $P,Q \in \mathcal{P}_{p}(\mathbb{R}^{m})$ and $\Pi$ is the set of couplings between $P$ and $Q$ (joint distributions whose marginals are constrained to be $P$ and $Q$ respectively). A much more general contamination model encompassing this deterministic error plus Huber Contamination scenario supposes only that $P \in \{\nu: \exists Q, \mathrm{TV}(Q,\nu) \leq \epsilon, W(Q,P_{\theta^{*}})\leq \rho\}$ where $\mathrm{TV}$ stands for Total Variation distance \citep{gibbs2002choosing}. We consider a tractable and interpretable subclass of this general contamination model which assumes $P = (1-\epsilon)Q + \epsilon F$ for some distributions $F$ and $Q$ such that $W_2(Q,P_{\theta^{*}}) \leq \rho$. $\rho = 0$ reduces to the Huber Contamination model, while $\epsilon = 0$ reduces to only geometric contamination. We call this the Geometric+Huber (\textbf{G+H}) contamination model, and it captures both mass contamination and geometric perturbations.



\begin{wrapfigure}{r}{0.47\textwidth}
    \centering
\includegraphics[width=0.38\textwidth]{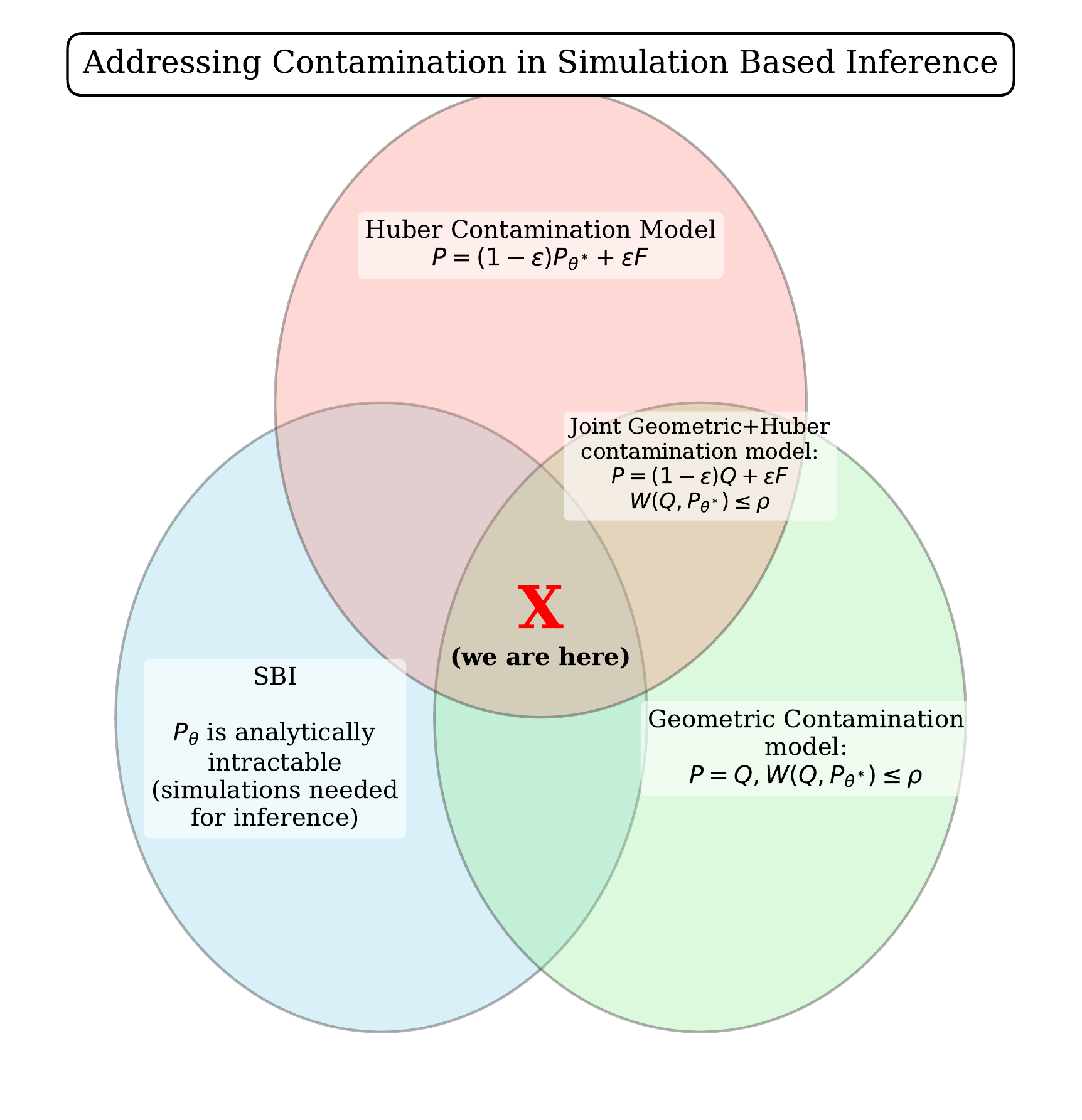}
    \caption{In iid statistical inference, there is a statistical model, $\{P_{\theta} : \theta \in \Theta\}$, with the measures defined on a ground space $\mathcal{Y}$. Data $Y_1,\dots,Y_n \overset{iid}{\sim} P$, and the parameters need to be inferred from the data.}
    \label{fig:my_wrapped_image}
\end{wrapfigure}

However, robustness to multiple forms of contamination is not the only challenge faced by the practitioner. Another common challenge is that the  likelihood functions associated to distributions in the model are analytically intractable (i.e not available in closed form), but it is possible to simulate random vectors from $P_{\theta}$ for any $\theta$ by running a potentially complex and expensive computer code. Specifically, there exists a reference measure $\mu$ supported on $\mathbb{R}^{z}$ for some $z \geq 1$ and a function $G: \Theta \times \mathbb{R}^{z}$ such that for any $\theta \in \Theta$, if $X \sim \mu$, then $G(\theta,X) \sim P_{\theta}$. Performing inference by generating simulations from  members of the statistical model using the potentially expensive computer code $G$ is the Simulation Based Inference (SBI) \citep{brehmer2022simulation,gourieroux1993simulation,wang2024comprehensive} setting (also sometimes known as the Likelihood Free setting), and it arises in many important domains, such as Neuroscience, Epidemiology, Robotics, and Ecology \citep{nelson1998hodgkin,wang2016predicting,marlier2023simulation,wangersky1978lotka}.

In complex SBI applications, immense effort is put into constructing simulators based on domain science to reflect state of the art scientific knowledge. Yet, they are approximations to reality, and therefore it is important to account for departures from the simulator in a flexible manner. This motivates the study of more general contamination models in the SBI context. We introduce a method that can jointly address these challenges. Namely robustness under the G+H contamination model in addition to utilizable in the SBI setting.

Our inference method is a minimum divergence approach \citep{basu2011statistical} integrated with the regular bootstrap \citep{efron1987better} for quantification of uncertainty. Specifically, we define a robust divergence $\ell$ between probability distributions and target the parameter achieving the population minimum loss $\argmin_{\theta \in \Theta} \ell(P,P_{\theta})$, which serves as a robust surrogate for $\theta^{*}$ in the G+H Contamination model. Since $P$ is unknown but iid samples $Y_1,\dots,Y_n \overset{iid}{\sim} P$ are available, the minimum divergence estimator is $\argmin_{\theta \in \Theta} \ell(P_n,P_{\theta})$ where $P_n := \sum_{j=1}^{n} \frac{1}{n} \delta_{Y_j}$ and where $\delta_{x}$ refers to the Dirac measure at position $x$. Given a number of bootstrap samples $M$, to produce measures of uncertainty for the underlying parameter, for $j \in \{1,2,\dots, M\}$, a new dataset $Y'_{1,j},\dots,Y'_{n,j}$ is constructed by sampling uniformly and with replacement from $Y_1,\dots,Y_n$. The $j^{th}$ bootstrap sample for the parameter estimate is $\argmin_{\theta \in \Theta} \ell(P^{j}_n,P_{\theta})$ where $P^{j}_n$ is the empirical measure of the $j^{th}$ bootstrapped dataset.
\paragraph{Divergence selection}
Although it is geometrically descriptive, Wasserstein distance is not robust to Huber contamination. Consider for example attempting to infer the mean of a Normal distribution. If $P = (1-\epsilon) Normal(\theta^{*},1)+\epsilon Normal(\theta_1,1)$, then $\theta^{*} (1-\epsilon) + \theta_1 \epsilon = \argmin_{\theta \in \mathbb{R}} W_2( P, Normal(\theta,1) )$. Taking $\theta_1 \to \infty$, the minimizer can be made arbitrarily far from $\theta^{*}$ for any $\epsilon >0$. One way to robustify Wasserstein distance to Huber Contamination is to open a small ball $B_{\epsilon}(P)$ around $P$, and define a modified Wasserstein distance as $\inf_{Q \in B_{\epsilon}(P)} W(Q,Normal(\theta,1))$. The sentiment is that if $P$ is a Huber contamination of $P_{\theta^{*}}$, then $Normal(\theta^{*},1)$ will be in the ball. This is the motivation behind the following version of Robust Optimal Transport, which can be viewed as a soft-dual to the optimization $\inf_{Q:  \mathrm{KL}(Q,P) \leq \epsilon} W(Q,P_{\theta})$ where $\mathrm{KL}$ denotes the Kullback-Leibler divergence \citep{gibbs2002choosing}.

\begin{definition}[$\lambda$-Robust Semi-Constrained Wasserstein-2 ($\lambda$-RSW)]
    \label{def:RSW2}
    For probability measures $P_1$ and $P_2$ in $\mathcal{P}_{2}(\mathbb{R}^{m})$ and $\lambda > 0$,
\vspace{-1pt}
\begin{equation}\ell_{\lambda}(P_1,P_2) :=  \inf_{Q \ll P_1} \left[ \frac{1}{\lambda} \mathrm{KL}(Q,P_1)+W_2^2(P_2,Q) \right]
\end{equation}
\vspace{-1pt}
where $Q \ll P_1$ means $Q$ is absolutely continuous with respect to $P_1$.
\end{definition}
We select $\ell_{\lambda}$ as our divergence, motivated by recent developments in distributionally constrained Empirical Likelihood \citep{owen2001empirical,chakraborty2025robust}; see Appendix Section \ref{sec:EmpiricalLikelihoodMotivation} for more details on that motivation. Although motivating it in an entirely different context (via its connection to Unbalanced Optimal Transport \citep{chizat2017unbalanced}), \citet{le2021robust} appears to be the first to introduce this $\mathrm{KL}$ version of Robust Semi-Constrained Optimal Transport, where it is used in Barycenter computation. While the Wasserstein distance is defined through a infima over couplings of $P_1,P_2$, where the coupling by definition has a hard constraint that its first marginal is $P_1$ and its second marginal is $P_2$, $\ell_{\lambda}(P_1,P_2) = \inf_{Q \ll P_1} \left[ \frac{1}{\lambda} \mathrm{KL}(Q,P_1) + \inf_{\pi \in \Pi(P_2,Q)} \mathbb{E}_{(A,B) \sim \pi} \| A- B\|_2^2 \right]$ relaxes the first marginal constraint, hence it is called Semi-Constrained. We now define the $\lambda$-RSW parameter and its estimator.

\begin{definition}[$\lambda$-RSW Parameter]
    \label{def:MinDivergenceEstimator}
    For a given probability distribution $P$ and a statistical model $\{ P_{\theta}: \theta \in \Theta\}$, all probability measures defined on $\mathcal{P}_{2}(\mathbb{R}^{m})$, and a small $\eta > 0$, a $\lambda$-Robust Parameter is any $\theta_{\lambda,\eta} \in \Theta$ satisfying $\ell_{\lambda}(P,P_{\theta_{\lambda,\eta}}) \leq \inf_{\theta \in \Theta} \ell_{\lambda}(P,P_{\theta})+\eta$. Given data $Y_1,\dots,Y_n$, a Minimum $\lambda$-RSW Divergence estimator is any $\hat{\theta}_{\lambda,\eta} \in \Theta$ such that $ \ell_{\lambda}(P_n,P_{\hat{\theta}_{\lambda,\eta}}) \leq \inf_{\theta \in \Theta} \ell_{\lambda}(P_n,P_{\theta})+\eta$.
\end{definition}
$\eta > 0$ is in the definition only to recognize the pathological scenario where the infima is not achieved. The parameter and estimator are of interest when $\eta \approx 0$, and so for notational convenience in the remainder of this work, unless it is needed for mathematical clarity, we drop $\eta$ from the notation, referring instead to a $\lambda$-Robust parameter as $\theta_{\lambda}$ and a minimum $\lambda$-RSW Divergence estimator as $\hat{\theta}_{\lambda}$. These values are understood to correspond to some fixed small $\eta$.

\paragraph{Our contributions}

Our primary contribution is the development of a tool-kit for SBI under G+H Contamination via \underline{B}ootstrapping \underline{M}inimum \underline{R}obust \underline{S}emi-constrained \underline{W}asserstein-2 (B-MRSW). In the process of achieving this, we make the following contributions.
\begin{enumerate}[leftmargin=*,itemsep=0pt, topsep=2pt, parsep=0pt, partopsep=0pt]
    \item In Section \ref{sec:basicProperties}, we present mathematical insights about the $\lambda \to 0$ and $\lambda \to \infty$ behavior of the $\lambda$-RSW divergence.
    \item In Section \ref{sec:basicProperties}, we show that under the G+H Contamination setting where $P = (1-\epsilon) Q + \epsilon F$ and $W_2(P_{\theta^{*}},Q) \leq \rho$, $\ell_{\lambda}(P,P_{\theta^{*}})$ must be near to $\inf_{\theta \in \Theta} \ell_{\lambda}(P,P_{\theta})$ for intermediate $\lambda$ values, indicating minimum $\lambda$-RSW Divergence estimation is a reasonable approach to handling the G+H Contamination assumption. In particular, by targeting $\theta_{\lambda}$ for the appropriate $\lambda$, we can target $\theta^{*}$ (up to its identifiability).
    \item In Section \ref{sec:algorithms}, we provide a full algorithm for approximating the Minimum $\lambda$-RSW Divergence estimator given only the ability to simulate from members $P_{\theta}$ of the statistical model. The algorithm relies on computing the Semi-Discrete $\lambda$-RSW $(\ell_{\lambda}(P_n,P_{\theta}))$ and we develop a stochastic sub-gradient ascent algorithm for this purpose. In the process, we identify a dual formulation of $\lambda$-RSW arising from a natural interplay between entropy and Optimal Transport. We derive and highlight this dual formulation, which may be of independent interest.
    \item In Section \ref{sec:lamSelect}, we provide a principled, data-driven heuristic strategy for choosing $\lambda$, thereby completing the algorithmic pipeline for robust inferences under the G+H Contamination setting via Bootstrapped Minimum Robust Semi-constrained Wasserstein-2 Estimation.
    \item In Section \ref{sec:convergenceProperties} we provide a convergence guarantee for the stochastic sub-gradient ascent algorithm for computing Semi-Discrete $\lambda$-RSW, achieving $O(\epsilon)$ expected absolute error (with respect to the sampling from $P_{\theta}$) in approximating $\ell_{\lambda}(P_n,P_{\theta})$ in $O(\frac{n^2}{\epsilon^2})$ time. We believe this result will be of independent interest in the robust computational Optimal Transport domain.
    \item In Section \ref{sec:numericalIllustrations}, on a difficult benchmark task (inferring the parameters of the g-and-k distribution), we compare our approach to the NPL-MMD SBI method of \cite{dellaporta2022robust}, illustrating that robustness of Maximum Mean Discrepancy using the Gaussian Kernel is highly sensitive to small changes in choice of bandwidth, whereas using our data-driven heuristic for $\lambda$ selection, our highly parallelizable Bootstrapped, Minimum $\lambda$-RSW approach is robust in a large range of intermediate $\lambda$ values determined by the heuristic.
\end{enumerate}

We emphasize that our approach is designed to both target $\theta^{*}$ (up to its identifiability), while also providing confidence sets with the proper asymptotic coverage (via using the regular bootstrap).

\paragraph{Related work}
\label{sec:relatedWork}

Robust parametric statistics deals with the general setting where the parameters of a statistical model $\{P_{\theta}: \theta \in \Theta\}$ are to be inferred from data generated according to a probability distribution $P \notin \{P_{\theta}: \theta \in \Theta\}$, but $P$ is in some sense $\epsilon$ perturbed from $P_{\theta^{*}}$. A statistical method is robust if the small $\epsilon$ perturbation to the data generating distribution cannot be made to cause massive changes in inferences. What is meant by a small perturbation is made explicit using a contamination model \citep{huber1992robust,diakonikolas2019robust,liu2023robust}. We are not aware of works that point out the utility of a method for simultaneously addressing a geometric and Huber contamination.

 SBI is ultimately concerned with providing uncertainty quantification for the parameter of a statistical model, and this can either be accomplished in a Bayesian or Frequentist way. Notable Bayesian thrusts include Approximate Bayesian Computation \citep{beaumont2010approximate,beaumont2002approximate,blum2010approximate,beaumont2002approximate,lintusaari2017fundamentals,bernton2019approximate}, and Neural Network modeling of the likelihood \citep{papamakarios2019sequential} (which can be used in a frequentist or Bayesian inference framework) or posterior \citep{papamakarios2016fast,greenberg2019automatic}. A central question in robust statistics is how can one simultaneously estimate the proper quantity, while also providing confidence sets with proper asymptotic coverage. The $n \to \infty$ target of Maximum Likelihood Estimation is $\inf_{\theta \in \Theta} \mathrm{KL}(P,P_{\theta})$, which is not robust to Huber Contamination. Bayesian inference, being based on the likelihood, inherits this problem, but has the additional problem that posterior credible intervals can fail to have the correct coverage \citep{kleijn2012bernstein} around this non-robust target.
 \citet{miller2019robust} introduce the Coarsened Posterior to address the improper coverage of Bayesian inference, although this does not address the point estimation limitation of likelihood based inference (the non-robust target problem). To address this, a popular generalization of Bayesian inference used in the SBI community is the generalized posterior $p(\theta | X_1,\dots,X_n) \propto \exp(-\beta n \mathcal{L}(P_n,P_{\theta})) \pi(\theta)$ \citep{bissiri2016general,lyddon2019general}, where $\mathcal{L}$ is a robust empirical loss and $\beta$ is a user specified parameter. $\beta =1$ and $\mathcal{L}(P_n,P_{\theta}) = -\frac{1}{n} \sum_{i=1}^{n} \log p_{\theta}(X_i)$ recovers the original, non-robust posterior distribution (which targets $\mathrm{KL}(P,P_{\theta})$). Methods in this space utilizing Maximum Mean Discrepancy (MMD) \citep{li2017mmd} include \cite{cherief2020mmd} and \cite{pacchiardi2024generalized}. \cite{bharti2026amortised} uses a weighted score matching divergence. \cite{dellaporta2022robust} use minimum squared MMD within a non-parametric likelihood scheme which reduces to the Bayesian Bootstrap \citep{rubin1981bayesian} on top of minimum squared MMD when there is no prior. Robustness to Huber contamination is studied by \cite{cherief2020mmd} and \cite{dellaporta2022robust}, but in practice depends on careful selection of the kernel, which is a difficult functional parameter to tune. From the frequentist perspective \cite{bernton2019parameter} use the regular Bootstrap on top of vanilla, non robust, minimum Wasserstein distance.

The quantity $\ell_{\lambda}(P_1,P_2)$ we use for our min divergence approach is exactly the Robust Semi-constrained Optimal Transport introduced by \cite{le2021robust}. They focus on its application to Barycenter Computation for discrete distributions. They provide an algorithm for computing $\ell_{\lambda}(P_1,P_2)$ up to $\epsilon$ precision in $O(\frac{n^2}{\epsilon})$ time when $P_1,P_2$ are discrete; note that we address semi-discrete computation, which is naturally relevant in parametric statistical inference, where it is necessary to compare $P_n$ (discrete) with $P_{\theta}$ (which is often absolutely continuous with respect to Lebesgue measure). We also note that there are other ways of defining Robust Optimal Transport based on Total Variation \citep{mukherjee2021outlier,nietert2023robust,nietert2023outlier} or $\chi$-squared \citep{balaji2020robust} penalties. See our extended Related Work (Appendix Section \ref{sec:ExtendedRelatedWork}) for additional details.

\section{Properties of $\ell_{\lambda}$ and connections to robust inference}
\label{sec:basicProperties}
$\ell_{\lambda}$ is a divergence between probability measures in the following sense. 
\begin{restatable}{lemma}{divergence}
    \label{prop:positiveDefinite}
    $\ell_{\lambda}(P_1,P_2) \geq 0$ for every $P_1,P_2$ and $\ell_{\lambda}(P_1,P_2) = 0$ if and only if $P_1 = P_2$ for every $P_1,P_2 \in \mathcal{P}_{2}(\mathbb{R}^{m})$.
\end{restatable}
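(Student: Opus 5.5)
Nonnegativity is immediate: for every $Q \ll P_1$ both $\mathrm{KL}(Q,P_1) \geq 0$ (Gibbs' inequality) and $W_2^2(P_2,Q) \geq 0$. Since $\lambda > 0$, every term inside the infimum is nonnegative, and hence $\ell_{\lambda}(P_1,P_2) \geq 0$. The "if" direction is equally direct. When $P_1 = P_2$, choose $Q = P_1$, which is trivially absolutely continuous with respect to $P_1$. Then $\mathrm{KL}(P_1,P_1) = 0$ and $W_2^2(P_1,P_1) = 0$, so the infimum is at most $0$. Combined with nonnegativity, this gives $\ell_\lambda(P_1,P_1)=0$.

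For the "only if" direction, suppose $\ell_{\lambda}(P_1,P_2) = 0$. The infimum need not be attained, so I work with a minimizing sequence $Q_k \ll P_1$ satisfying
\[
\frac{1}{\lambda}\mathrm{KL}(Q_k,P_1) + W_2^2(P_2,Q_k) \to 0 .
\]
Both terms are nonnegative, so each tends to zero separately: $\mathrm{KL}(Q_k,P_1) \to 0$ and $W_2(P_2,Q_k) \to 0$. Pinsker's inequality, $\mathrm{TV}(Q_k,P_1) \leq \sqrt{\mathrm{KL}(Q_k,P_1)/2}$, then shows $Q_k \to P_1$ in total variation, hence weakly. Convergence in $W_2$ also implies weak convergence, so $Q_k \to P_2$ weakly. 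Weak limits are unique, so $P_1 = P_2$.

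The only point needing care is exactly this non-attainment of the infimum; the sequence argument removes it. Alternatively, one could use that $\mathrm{KL}(\cdot,P_1)$ is weakly lower semicontinuous and that $W_2$ is lower semicontinuous under weak convergence. Together with tightness of $\{Q_k\}$, which holds because the sequence converges in $W_2$ to $P_2$, this would show the infimum is attained at some $Q^\star$ with $\mathrm{KL}(Q^\star,P_1) = 0$ and $W_2(P_2,Q^\star)=0$. That forces $Q^\star = P_1 = P_2$. The first route is cleaner, and I would use it.
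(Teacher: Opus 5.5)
Your proposal is correct and follows essentially the same argument as the paper: nonnegativity of both terms, the choice $Q = P_1$ for the ``if'' direction, and for the converse a minimizing sequence $Q_i \ll P_1$ with $\mathrm{KL}(Q_i,P_1)\to 0$ and $W_2(P_2,Q_i)\to 0$, followed by Pinsker's inequality, the two weak-convergence conclusions, and uniqueness of weak limits. The paper justifies the two weak-convergence steps with the Portmanteau lemma and Villani's Theorem 6.9, matching your argument step for step.
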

Regarding small $\lambda$ behavior, we have the following lemma.
\begin{restatable}{lemma}{smallLam}($\lambda$ small limit)
    \label{prop:smallLambda}
    Suppose $P_1,P_2 \in \mathcal{P}_{2}(A)$ where $A \subseteq \mathbb{R}^{m}$ with finite diameter. Then $\lim_{\lambda \to 0}  {\ell}_{\lambda}(P_1,P_2) = W_2^2(P_1,P_2)$.
\end{restatable}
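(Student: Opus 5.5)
The plan is to prove two matching bounds. The upper bound is immediate: taking $Q = P_1$ in Definition \ref{def:RSW2} gives $\ell_\lambda(P_1,P_2) \le W_2^2(P_2,P_1)$ for every $\lambda>0$. Moreover, $\lambda \mapsto \ell_\lambda$ is nonincreasing as $\lambda$ decreases toward $0$ (the penalty $\frac{1}{\lambda}\mathrm{KL}$ grows). Hence $\lim_{\lambda\to 0}\ell_\lambda$ exists and is at most $W_2^2(P_1,P_2)$. All that remains is the matching lower bound $\liminf_{\lambda\to 0}\ell_\lambda(P_1,P_2)\ge W_2^2(P_1,P_2)$.

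For the lower bound, fix $\lambda$ and pick a near-minimizer $Q_\lambda \ll P_1$ with
\[
\tfrac{1}{\lambda}\mathrm{KL}(Q_\lambda,P_1)+W_2^2(P_2,Q_\lambda)\le \ell_\lambda(P_1,P_2)+\lambda .
\]
Since the left side is at most $W_2^2(P_1,P_2)+\lambda$ and $W_2^2\ge 0$, we get $\mathrm{KL}(Q_\lambda,P_1)\le \lambda\,(W_2^2(P_1,P_2)+\lambda)\to 0$. By Pinsker's inequality, $\mathrm{TV}(Q_\lambda,P_1)\le \sqrt{\mathrm{KL}(Q_\lambda,P_1)/2}\to 0$.

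Next we use the finite diameter $D$ of $A$. Since $Q_\lambda\ll P_1$, $Q_\lambda$ is also supported in $A$. On a set of diameter $D$ there is the standard bound $W_2^2(Q,P)\le D^2\,\mathrm{TV}(Q,P)$: keep the common mass $Q\wedge P$ in place and couple the remaining mass arbitrarily, at squared cost at most $D^2$. Therefore $W_2(Q_\lambda,P_1)\le D\sqrt{\mathrm{TV}(Q_\lambda,P_1)}\to 0$. The triangle inequality for $W_2$ then gives
\[
W_2(P_2,Q_\lambda)\ge W_2(P_2,P_1)-W_2(P_1,Q_\lambda)\to W_2(P_1,P_2).
\]
Dropping the nonnegative KL term,
\[
\ell_\lambda(P_1,P_2)\ge W_2^2(P_2,Q_\lambda)-\lambda ,
\]
and the right side tends to $W_2^2(P_1,P_2)$. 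This gives the lower bound and completes the proof.

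The main obstacle is the step passing from KL (or TV) convergence to $W_2$ convergence. This is exactly where the finite-diameter hypothesis is needed: without it, a small amount of mass moved far away could keep $W_2$ bounded away from zero. So care is needed in writing out the TV coupling bound and in checking that $Q_\lambda$ inherits support in $A$ from absolute continuity.
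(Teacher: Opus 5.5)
Your proposal is correct and follows essentially the same route as the paper's proof. The upper bound comes from $Q=P_1$; the KL term of near-minimizers is forced to zero; Pinsker plus the bounded-diameter estimate $W_2^2\le D^2\,\mathrm{TV}$ gives $W_2(Q_\lambda,P_1)\to 0$; and the triangle inequality gives the lower bound. Tying the slack to $\lambda$ instead of a fixed $\epsilon$ just removes the paper's final $\epsilon\to 0$ step.

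One small slip: as $\lambda$ decreases the penalty grows, so $\ell_\lambda$ is non\emph{decreasing} as $\lambda\downarrow 0$ (nonincreasing in $\lambda$). This monotonicity is not needed anyway, since you bound the $\liminf$ and $\limsup$ directly.
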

For very small $\lambda$, we thus expect the Minimum $\lambda$-RSW estimator ( $\hat{\theta}_{\lambda}$) will behave like the non robust minimum Wasserstein-2 estimator. We also study the large $\lambda$ behavior in the semi-discrete setting.




\begin{restatable}{lemma}{largeLamDiscrete}(Large $\lambda$ Empirical Behavior)
    \label{largeLambdaEmpiricalBehavior}
    Let $y_1,y_2,\dots,y_n \in \mathbb{R}^{m}$, $P_2 \in \mathcal{P}_2(\mathbb{R}^{m})$ absolutely continuous with respect to Lebesgue.
    Then $\lim_{\lambda \to \infty}  {\ell}_{\lambda}(P_n,P_2) = \mathbb{E}_{X \sim P_2} \min_{j \in [n]} \| X - y_j \|_2^2$ where $P_n := \sum_{i=1}^{n} \frac{1}{n} \delta_{y_i}$.
\end{restatable}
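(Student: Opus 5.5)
Since $Q \ll P_n$, every admissible $Q$ has the form $Q_w = \sum_{i=1}^n w_i \delta_{y_i}$ with $w$ in the probability simplex $\Delta_n$. Then $\mathrm{KL}(Q_w,P_n) = \sum_i w_i \log(n w_i) \in [0,\log n]$, so
\begin{equation*}
\ell_\lambda(P_n,P_2) = \min_{w \in \Delta_n} \Bigl[ \tfrac{1}{\lambda} \textstyle\sum_i w_i \log(n w_i) + W_2^2(P_2,Q_w) \Bigr].
\end{equation*}
Set $L := \mathbb{E}_{X\sim P_2}\min_j \|X-y_j\|_2^2$, which is finite because $P_2 \in \mathcal{P}_2$. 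The plan is to show that $L$ is both a lower bound for $\ell_\lambda(P_n,P_2)$ for every $\lambda>0$ and the limit of a family of upper bounds as $\lambda\to\infty$.

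\textbf{Lower bound.} Take any $w$ and any coupling $\pi \in \Pi(P_2,Q_w)$. The pair $(X,B)$ under $\pi$ has $B \in \{y_1,\dots,y_n\}$ almost surely, so $\|X-B\|_2^2 \ge \min_j\|X-y_j\|_2^2$ pointwise. Hence $W_2^2(P_2,Q_w) \ge L$. The KL term is nonnegative, so $\ell_\lambda(P_n,P_2) \ge L$ for all $\lambda$, and therefore $\liminf_{\lambda\to\infty} \ell_\lambda \ge L$.

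\textbf{Upper bound.} Let $T(x)$ be a nearest-neighbor assignment $T(x) \in \argmin_j \|x-y_j\|_2$, with ties broken by the smallest index so that $T$ is Borel measurable. Define $w^*_j := P_2(T^{-1}(j))$, the mass of the $j$-th Voronoi cell. Absolute continuity of $P_2$ is used here: the cell boundaries are contained in finitely many hyperplanes (or coincide entirely when $y_i=y_j$, which the tie-break handles), so they are Lebesgue-null and $w^*$ does not depend on the tie-breaking. The deterministic coupling $(X, y_{T(X)})$ shows $W_2^2(P_2,Q_{w^*}) \le L$, hence equality. The one subtlety is that some $w^*_j$ may be zero. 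This causes no difficulty, because $Q_{w^*} \ll P_n$ still holds and, with the convention $0\log 0 = 0$, $\mathrm{KL}(Q_{w^*},P_n) \le \log n$. We therefore get $\ell_\lambda(P_n,P_2) \le \frac{\log n}{\lambda} + L$, which tends to $L$ as $\lambda\to\infty$.

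Combining the two bounds yields the claim. The main step that needs care is the upper bound: the measurable nearest-neighbor selection and the finiteness of the KL term at $w^*$, including when some cells carry zero mass. Beyond that, the only role of absolute continuity in this argument is to make $w^*$ well defined. In fact the lower and upper bounds together already show $L \le \ell_\lambda \le L + \log(n)/\lambda$ for every $\lambda$ and any tie-break, which is slightly stronger than the limit statement.
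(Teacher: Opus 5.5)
Your proof is correct and has the same skeleton as the paper's. Both proofs sandwich $\ell_\lambda(P_n,P_2)$ between $W_2^2(P_2,Q^*)$ and $\frac{1}{\lambda}\mathrm{KL}(Q^*,P_n)+W_2^2(P_2,Q^*)$, where $Q^*$ is the Voronoi reweighting of $P_n$. You differ in how the two key facts are established.

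The paper gets the lower bound $W_2^2(P_2,Q_w)\ge W_2^2(P_2,Q^*)$ from Lemma~\ref{voronoiWeightingIsBest}. That lemma uses the semi-discrete dual and Brenier's theorem, and hence absolute continuity, to show $w^*$ is the unique minimizer of $w\mapsto W_2^2(Q_w,P_2)$. The paper then identifies $W_2^2(P_2,Q^*)$ with $\mathbb{E}_{X\sim P_2}\min_j\|X-y_j\|_2^2$ by noting that $g=\mathbf{0}_n$ is an optimal Kantorovich potential. Finally, it runs a liminf/limsup argument over $\epsilon$-near-optimal measures $Q_{\lambda,\epsilon}$.

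You replace all of this with two elementary facts: the pointwise inequality $\|X-B\|_2^2\ge\min_j\|X-y_j\|_2^2$ under any coupling, and the explicit nearest-neighbour coupling. This buys three things:
\begin{enumerate}
\item the non-asymptotic bound $L\le\ell_\lambda(P_n,P_2)\le L+\log(n)/\lambda$ for every $\lambda$, an explicit $O(1/\lambda)$ rate;
\item no $\epsilon$-bookkeeping;
\item a proof that absolute continuity of $P_2$ is unnecessary, since any measurable tie-break works.
\end{enumerate}

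The paper's lemma does give something extra: uniqueness of the optimal weights and continuity of $w\mapsto W_2^2(Q_w,P_2)$. These are what one would need to show that the optimal reweightings themselves converge to the Voronoi weights as $\lambda\to\infty$. The statement does not ask for that.

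One cosmetic point: the identity $\mathrm{KL}(Q_w,P_n)=\sum_i w_i\log(nw_i)$ presumes the $y_i$ are distinct. With repeated points it is only an upper bound on the true KL. However, you only use $0\le\mathrm{KL}(Q,P_n)\le\log n$, and this holds in general because $P_n$ gives every atom mass at least $1/n$.
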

The average nearest neighbor distance, $\mathbb{E}_{X \sim P_{\theta}} \min_{j \in [n]} \|X - y_j \|_2^2$, is not an appropriate quantity for robust statistical inference, as if the model has a variance parameter, the fitted model $P_{\hat{\theta}_{\lambda}}$ can collapse onto a data point. Lemmas \ref{prop:smallLambda} and \ref{largeLambdaEmpiricalBehavior} thus indicates that the tiny $\lambda$ and massive $\lambda$ regimes are not appropriate for simulteneously achieving both informative and robust inferences. The following lemma indicates that the $\lambda$-RSW Divergence achieves theoretical robustness in its value when $\lambda = 1$, in the sense that if $P = (1-\epsilon)Q+\epsilon F$ where $W_2(Q,P_{\theta^{*}}) \leq \rho$ and $\epsilon$ and $\rho$ are small, then the value of loss at $\theta^{*}$ is within $\epsilon+\epsilon^{2}+\rho^2$ of the value of the loss at the infima.


\begin{restatable}{lemma}{RobustnessUnderContaminationModel}(Robustness to G+H Contamination)
    \label{lem:mathRobust}
    Let $\{P_{\theta} : \theta \in \Theta\}$ satisfy $P_{\theta} \in \mathcal{P}_{2}(\mathbb{R}^{m})$. Suppose for some $\theta^{*} \in \Theta$, $Q,F \in \mathcal{P}_2(\mathbb{R}^{m})$, $0 < \epsilon < 1-\frac{1}{\sqrt{2}}$ and $\rho >0$, $P = (1-\epsilon) Q + \epsilon F$
    where $W_2(Q,P_{\theta^{*}}) \leq \rho$. Then for $\lambda >0$,
        $\ell_{\lambda}(P,P_{\theta^{*}}) \leq \inf_{\theta \in \Theta} \ell_{\lambda}(P,P_{\theta}) + \frac{\epsilon+\epsilon^2}{\lambda}+\rho^2$.
    \text{Also, for any }$\eta,\zeta>0$, $\theta_{\lambda,\eta}, \theta^{*} \in \Omega$ where
    \[\Omega := \{\theta \in \Theta: \exists Q, \mathrm{KL}(Q,P)+\lambda W_2^2(Q,P_{\theta}) \leq \epsilon+\epsilon^2+\lambda (\rho^2 + \eta+\zeta)\}.\]
    \label{lem:robustnessOfDivergence}
\end{restatable}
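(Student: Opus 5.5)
The plan is to bound $\ell_{\lambda}(P,P_{\theta^*})$ from above by plugging in a specific $Q$, and to use only nonnegativity (Lemma \ref{prop:positiveDefinite}) for the infimum. The natural candidate is the clean component $Q$ of the contamination model. Since $P = (1-\epsilon)Q+\epsilon F \geq (1-\epsilon)Q$, we have $Q \ll P$ with $\frac{dQ}{dP} \leq \frac{1}{1-\epsilon}$, so
\[
\mathrm{KL}(Q,P) = \mathbb{E}_Q \log \frac{dQ}{dP} \leq -\log(1-\epsilon).
\]
Combined with $W_2^2(P_{\theta^*},Q)\leq \rho^2$, this gives
\[
\ell_\lambda(P,P_{\theta^*}) \leq \frac{-\log(1-\epsilon)}{\lambda} + \rho^2 .
\]
The remaining step is the elementary inequality $-\log(1-\epsilon) \leq \epsilon+\epsilon^2$ on $0<\epsilon<1-\frac{1}{\sqrt 2}$. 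I would prove it by setting $g(\epsilon)=\epsilon+\epsilon^2+\log(1-\epsilon)$, noting $g(0)=0$, and computing
\[
g'(\epsilon)=1+2\epsilon-\frac{1}{1-\epsilon}=\frac{\epsilon(1-2\epsilon)}{1-\epsilon}.
\]
This is nonnegative for $\epsilon\le 1/2$, which covers the stated range since $1-\frac{1}{\sqrt2}<\tfrac12$. The precise constant $1-1/\sqrt2$ presumably comes from the authors' particular series bound; any sufficient argument on this range will do.

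For the first claim, I would then use $\inf_\theta \ell_\lambda(P,P_\theta)\ge 0$ from Lemma \ref{prop:positiveDefinite}. That alone gives
\[
\ell_\lambda(P,P_{\theta^*}) \le \inf_\theta \ell_\lambda(P,P_\theta) + \frac{\epsilon+\epsilon^2}{\lambda}+\rho^2 .
\]
I note that this is weaker than bounding by the infimum directly, but it is exactly what is stated, so nothing more is needed.

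For the set $\Omega$, I would handle $\theta^*$ directly: the same $Q$ gives $\mathrm{KL}(Q,P)+\lambda W_2^2(Q,P_{\theta^*})\le \epsilon+\epsilon^2+\lambda\rho^2$, which is within the threshold since $\eta,\zeta>0$. For $\theta_{\lambda,\eta}$, I would use Definition \ref{def:MinDivergenceEstimator}. It gives $\ell_\lambda(P,P_{\theta_{\lambda,\eta}})\le \inf_\theta\ell_\lambda(P,P_\theta)+\eta \le \ell_\lambda(P,P_{\theta^*})+\eta \le \frac{\epsilon+\epsilon^2}{\lambda}+\rho^2+\eta$. Since $\ell_\lambda$ is an infimum over $Q\ll P$, for the given $\zeta>0$ there is a near-optimal $Q'\ll P$ with $\frac1\lambda\mathrm{KL}(Q',P)+W_2^2(P_{\theta_{\lambda,\eta}},Q')\le \ell_\lambda(P,P_{\theta_{\lambda,\eta}})+\zeta$. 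Multiplying by $\lambda$ gives the defining inequality of $\Omega$ for $\theta_{\lambda,\eta}$; the $\zeta$ slack handles non-attainment of the infimum. This step also needs the order convention $W_2(Q,P_\theta)=W_2(P_\theta,Q)$, which holds by symmetry.

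The main obstacle is bookkeeping rather than substance. It consists of verifying the logarithm inequality on the stated $\epsilon$ range and checking the absolute-continuity requirement $Q\ll P$. The latter is immediate from the mixture decomposition, because $P(A)=0$ forces $Q(A)=0$.
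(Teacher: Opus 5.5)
Your proposal is correct and follows the paper's proof essentially step for step. You plug in the clean component $Q$, use $\frac{dQ}{dP}\le \frac{1}{1-\epsilon}$ to get $\mathrm{KL}(Q,P)\le -\log(1-\epsilon)$, invoke nonnegativity of the infimum for the first claim, and pass through a near-optimal reweighting (the $\zeta$ slack, which the paper leaves implicit) for $\theta_{\lambda,\eta}\in\Omega$.

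The only difference is cosmetic. The paper bounds $-\log(1-\epsilon)$ by a second-order Taylor expansion with remainder $\frac{\epsilon^2}{2\xi^2}$, $\xi\in[1-\epsilon,1]$, which is exactly where the threshold $1-\frac{1}{\sqrt{2}}$ comes from. Your monotonicity argument gives the same inequality on the larger range $\epsilon\le\frac12$.
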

Lemma \ref{lem:mathRobust} is a mathematical justification for the robustness of the minimum $\ell_{\lambda}$ parameter $\theta_{\lambda,\eta}$ because
for intermediate values of $\lambda$ (e.g $\lambda = 1$) and when $\rho$ and $\epsilon$ are small (and $\eta$ is chosen arbitrarily small), $\Omega$ contains only those parameters $\theta$ such that $P_{\theta}$ is geometrically close to a third distribution $Q$ which is a small $\mathrm{KL}$ re-weighting away from $P$. $\theta^{*}$ is one such parameter; there may be others, and $\theta_{\lambda,\eta}$ will be one of them (as $\eta$ is small). Unlike the Huber contamination model, the GH contamination model is inherently non identifiable even if the statistical model is identifiable in the sense that there can be infinitely many $\theta^{*}$ satisfying $P = (1-\epsilon)Q+\epsilon F$ such that $W_2(Q,P_{\theta^{*}}) \leq \rho$. This is why the best we can hope for (without imposing additional structure to identify $\theta^{*}$) is to recover a $\theta^{*} \in \Theta$ such that $P_{\theta^{*}}$ is a small geometric perturbation from a third distribution which is a small $\mathrm{KL}$ measured reweighting away from $P$. We emphasize that selecting an intermediate value for $\lambda$ is key to achieving robust inferences. $\lambda =1$ is a good starting point (since then $\Omega$ consists only of $\theta$ such that for some third distribution $Q$, $\max(\mathrm{KL}(Q,P),W_2^2(Q,P_{\theta})) \leq \epsilon+\epsilon^{2}+\rho^2+\eta+\zeta $) (for arbitrarily small $\eta,\zeta$) but in Section \ref{sec:lamSelect} we provide a systematic data-driven approach for $\lambda$ selection.

\section{Algorithm for robust SBI}
\label{sec:algorithms}

By Definition \ref{def:RSW2},  $\ell_{\lambda}(P_n,P_{\theta})$ is itself defined through a complex min-min problem. Finding $\hat{\theta}_{\lambda}$ thus appears to be a difficult min-min-min problem. 
In Section \ref{subsec:minmax}, we introduce a reformulation of ${\ell}_{\lambda}$ as a concave maximization problem in the semi-discrete case, and provide a stochastic sub-gradient ascent algorithm to solve it. In Section \ref{subsec:FullSBIpipeline} we provide the pipeline for SBI with uncertainty quantification using minimum $\lambda$-RSW estimation for a fixed $\lambda$. As $\lambda$ selection is critical to robustness, we describe in Section \ref{sec:lamSelect} a method for $\lambda$ selection that can be used before executing the algorithm of Section \ref{subsec:FullSBIpipeline}. 

\subsection{From min-min-min to min-max}
\label{subsec:minmax}

\begin{restatable}{theorem}{minMaxFlip}
    \label{thm:minMaxFlip}
    If $P_n$ is an empirical measure at points $Y_1,Y_2,\dots,Y_n \in \mathbb{R}^{m}$, and $F \in \mathcal{P}_2(\mathbb{R}^{m})$ is absolutely continuous with respect to Lebesgue measure, then
    \begin{equation}
        \label{dualExpectation}
        \begin{split}
        \ell_{\lambda}(P_n,F) = \sup_{g \in \mathbb{R}^{n}} \mathbb{E}_{X \sim F} h_1(X,g)
        \end{split}
    \end{equation}
     where $h_1: \mathbb{R}^{m} \times \mathbb{R}^{n} \to \mathbb{R}$ satisfies
     \vspace{-2pt}
    \begin{equation}
        \label{h1Def}
    h_1(x,g) =   \min_{j \in [n]} (\| x - Y_j \|_2^2 - g_j) - \frac{1}{\lambda} \log (\sum_{t=1}^{n} \frac{1}{n} \exp(-\lambda g_t)).
    \end{equation}
    \vspace{-2pt}
    and the suprema is achieved. Additionally if $\hat{Q}_{\lambda}$ is a discrete measure supported at the points $Y_1,Y_2,\dots,Y_n$ with weights constructed by choosing an arbitrary argument maximizer $g^{*}$ of Eqn. \ref{dualExpectation} and setting the $j^{th}$ weight to $\frac{\exp(-\lambda g_j^{*})}{\sum_{t=1}^{n} \exp(-\lambda g_t^{*})}$ for $j \in [n]$, then
    \vspace{-2pt}
    \begin{equation}
    \label{optWeightVectorExtraction}
    \hat{Q}_{\lambda} = \argmin_{Q \ll P_n} \frac{1}{\lambda} KL(Q,P_n) +  W_2^2(F,Q).
    \end{equation}
    \vspace{-2pt}
\end{restatable}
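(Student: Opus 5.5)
Since $Q \ll P_n$, every admissible $Q$ is a weight vector $w$ in the simplex $\Delta_n$ placed on $Y_1,\dots,Y_n$. The plan is to reduce the problem to a finite-dimensional min over $\Delta_n$, dualize the $W_2^2$ term semi-discretely, exchange min and max, and evaluate the inner KL minimization in closed form.

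\textbf{Step 1 (reduction).} Writing $\mathrm{KL}(w,P_n)=\sum_j w_j\log(n w_j)$, we get
\[\ell_\lambda(P_n,F)=\min_{w\in\Delta_n}\Big[\tfrac1\lambda\sum_j w_j\log(nw_j)+W_2^2\big(F,\textstyle\sum_j w_j\delta_{Y_j}\big)\Big].\]
This is a strictly convex continuous function on the compact set $\Delta_n$, so the minimizer exists and is unique. Convexity of $W_2^2$ in $w$ holds because it is a linear program over couplings. Strict convexity comes from the entropy term.

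\textbf{Step 2 (semi-discrete duality).} For $F$ absolutely continuous, the standard Kantorovich semi-discrete dual gives
\[W_2^2\big(F,\textstyle\sum_j w_j\delta_{Y_j}\big)=\sup_{g\in\mathbb R^n}\Big[\mathbb E_{X\sim F}\min_j(\|X-Y_j\|^2-g_j)+\sum_j w_j g_j\Big].\]
Here the $c$-transform is $\min_j(\|x-Y_j\|^2-g_j)$. The objective is therefore $\min_{w}\sup_g \Phi(w,g)$, where $\Phi$ is convex in $w$ (linear plus entropy) and concave in $g$ (an expectation of a minimum of affine functions plus a linear term). Since $\Delta_n$ is compact and convex, Sion's minimax theorem allows the exchange to $\sup_g\min_w$.

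\textbf{Step 3 (inner minimization).} Minimizing $\sum_j w_j g_j+\frac1\lambda\sum_j w_j\log(nw_j)$ over $\Delta_n$ is the Gibbs variational principle. It has the unique minimizer $w_j\propto \exp(-\lambda g_j)$ and value $-\frac1\lambda\log\sum_t \frac1n e^{-\lambda g_t}$. Substituting this yields exactly $\sup_g \mathbb E_F h_1(X,g)$, which proves \eqref{dualExpectation}.

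\textbf{Step 4 (attainment and recovery of $\hat Q_\lambda$).} I expect attainment of the sup to be the main obstacle. The function $h_1$ is invariant under $g\mapsto g+c\mathbf 1$, so I normalize, say $\sum_j g_j=0$ or $\min_j g_j=0$, and then show coercivity. If some $g_j\to+\infty$ while the normalization holds, the cell of $Y_j$ in the Laguerre-type partition absorbs all the mass; the expectation term decreases linearly and is not offset by the log-sum-exp term. I would bound $\mathbb E_F\min_j(\cdot)\le \mathbb E_F\|X-Y_k\|^2-g_k$ for the smallest $g_k$ and combine this with $-\frac1\lambda\log\sum_t \frac1n e^{-\lambda g_t}\le \min_t g_t+\frac{\log n}{\lambda}$, which makes the objective bounded above and tending to $-\infty$ along unbounded normalized directions. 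The objective is also continuous, by dominated convergence using $F\in\mathcal P_2$. Together these give a maximizer $g^*$.

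For the saddle point, $(w^*,g^*)$ with $w^*$ the primal minimizer is a saddle pair of $\Phi$. Then $w^*$ must minimize $\Phi(\cdot,g^*)$, whose minimizer is unique by strict convexity and equals the Gibbs weights $e^{-\lambda g_j^*}/\sum_t e^{-\lambda g^*_t}$. This identifies $\hat Q_\lambda$ as the unique argmin in \eqref{optWeightVectorExtraction}, for any choice of maximizer $g^*$.
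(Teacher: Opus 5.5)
Your Steps 1--3 and the existence of a maximizer follow the paper's proof essentially verbatim. Both use semi-discrete Kantorovich duality, Sion's theorem over the compact simplex, the Gibbs variational identity (the paper's Lemma~\ref{variationalIdentity}), and coercivity after removing the invariance $g\mapsto g+c\mathbf{1}_n$ (the paper normalizes $g_n=0$ and uses Lemma~\ref{coercivityOfO}).

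One correction in Step 4: the bound $\mathbb{E}_F\min_j(\cdot)\le \mathbb{E}_F\|X-Y_k\|_2^2-g_k$ must be applied with $k$ an index of the \emph{largest} $g_k$. With the smallest, $-g_k$ cancels against $+\min_t g_t$ and you only get boundedness. With the largest you get $\le \max_k \mathbb{E}_F\|X-Y_k\|_2^2+\frac{\log n}{\lambda}-(\max_t g_t-\min_t g_t)$. That tends to $-\infty$ along unbounded normalized directions, which is the coercivity you want.

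Where you genuinely differ is the identification of $\hat{Q}_\lambda$. The paper proceeds in four steps:
\begin{enumerate}
\item It differentiates $O(g)=\mathbb{E}_F h_1(X,g)$. This is where absolute continuity of $F$ is used, so that Laguerre cell boundaries are $F$-null.
\item From $\nabla O(g^*)=0$ it reads off the fixed-point equation $F(L_j(g^*))=e^{-\lambda g_j^*}/\sum_t e^{-\lambda g_t^*}$.
\item Via Lemma~\ref{kantorovichsFunctional} it concludes that $g^*$ is an optimal Kantorovich potential for $W_2^2(F,\hat{Q}_\lambda)$.
\item By the variational identity it checks that the primal objective at $\hat{Q}_\lambda$ equals $\mathbb{E}_F h_1(X,g^*)=\ell_\lambda(P_n,F)$.
\end{enumerate}
You instead use the abstract saddle-point property of $(w^*,g^*)$ together with strict convexity of $\Phi(\cdot,g^*)$.

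Your route is shorter and avoids differentiability entirely; nothing in your argument actually needs $F$ to be absolutely continuous. It also proves that the minimizer is unique and independent of which maximizer $g^*$ is chosen. The statement's ``$=\argmin$'' presupposes this, but the paper's computation only shows that $\hat{Q}_\lambda$ attains the infimum. The paper's route, in exchange, yields the explicit fixed-point (Laguerre-cell) characterization of $g^*$. That characterization is reused later, in Lemma~\ref{lem:infoAboutOurObjective}(4), to bound $\|g^*\|_\infty\le D^2$ for the SGA convergence rate.
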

Since $h_1$ is concave in $g$ (see Lemma \ref{lem:infoAboutOurObjective}), computation of $\ell_{\lambda}(P_n,F)$ can be approximated using a stochastic sub-gradient ascent. Our approach, presented in Algorithm \ref{alg:SGA}, uses a carefully chosen learning rate to achieve the convergence rate detailed in Section \ref{sec:convergenceProperties}. The learning rate scale input parameter $B$ will be discussed in Section \ref{sec:convergenceProperties}.

 \begin{algorithm}
\caption{$\lambda$-RSW Divergence Approximation via SGA}
\begin{algorithmic}[1]   
\Require $g_{0} \in \mathbb{R}^{n}$, number of iterations $s$, $\lambda > 0$, $B >0$ (learning rate scaling), $\mathcal{Y} = \{Y_1,Y_2,\dots,Y_n\}$, $\mathcal{X} = \{X_1,\dots,X_s\}$ where $X_1,\dots,X_s \overset{iid}{\sim} P_{\theta}$.
\Ensure Estimate of $\ell_{\lambda}(P_n,P_{\theta})$ where $P_n$ is empirical measure of $\mathcal{Y}$, and final iterate $g_{s}$
\State $RunningAvg = 0$
\For{$i = 1,\dots,s$}
    \State $\gamma_i = B\sqrt{\frac{n}{i}}$ (set learning rate)
    \State $RunningAvg = RunningAvg+\gamma_i  h_1(X_i,g_{i-1})$ (update running average)
    \State Compute $j^{*} = \argmin_{j \in [n]} \| X_i - Y_j \|_2^2 - g_{i-1,j}$ and $\sum_{\ell = 1}^{n} \exp(-\lambda g_{i-1,\ell})$
    \For{$j = 1,\dots,n$}  
        \State $g_{i,j} = g_{i-1,j} + \gamma_i \left( \frac{\exp(-\lambda g_{i-1,j})}{\sum_{\ell=1}^{n} \exp(-\lambda g_{i-1,\ell})} - \mathbb{I}(j = j^{*}) \right)$ (update sub-grad)
    \EndFor
\EndFor
\State \Return $(\frac{1}{\sum_{i=1}^{s} \gamma_i} RunningAvg,g_s)$
\end{algorithmic}
\label{alg:SGA}
\end{algorithm}

\subsection{Full SBI algorithm}
\label{subsec:FullSBIpipeline}

Theorem \ref{thm:minMaxFlip} also makes clear that the minimum $\lambda$-RSW problem is a non-convex concave stochastic min-max problem.  In particular for a statistical model $\{P_{\theta}: \theta \in \Theta\}$ where each member $P_{\theta} \in \mathcal{P}_{2}(\mathbb{R}^{m})$ is absolutely continuous with respect to Lebesgue measure,$
\inf_{\theta \in \Theta} \ell_{\lambda}(P_n,P_{\theta}) = \inf_{\theta \in \Theta} \sup_{g \in \mathbb{R}^{n}} \mathbb{E}_{X \sim P_{\theta}}  h_1(X,g).$
Algorithms for solving non-convex concave stochastic min-max problems via alternating stochastic gradient ascent-descent can at best guarantee convergence to a stationary point in the $\Theta$ space \citep{lin2020gradient}, and so we instead use a global function optimizer that only requires the ability to evaluate the function at any $\theta$, which we can do by using Algorithm \ref{alg:SGA} to approximate  $\ell_{\lambda}(P_n,P_{\theta})$. Our choice is the Covariance Matrix Adaptive-Evolution Strategy algorithm (CMA-ES) and we use the implementation provided in the \verb|cma| package in Python \citep{hansen2006cma}. The algorithm takes as input a population size $K$, number of rounds $R$, initial step size $\sigma_0 >0$, lower and upper bounds on the support of each parameter if they exist, and initial position $\theta^{0} \in \Theta$ where $\Theta \subset \mathbb{R}^{d}$ for some $d \geq 1$. We let $\Omega$ denote the full set of optimization tuning parameters. At round $i$, $K$ values in $\Theta$ are randomly sampled according to $Normal(\theta^{i},\sigma_i^2 C_{i})$ where $C_0 = \mathbf{I}_{d}$.  Then the function value is computed at each of the samples, and these are used to produce a new estimate for the minimum of the function, $\theta^{i+1}$, an updated step-size $\sigma_{i+1}$ and an updated covariance matrix around the estimate, $C_{i+1}$. CMA-ES comes with an automated strategy for bringing samples of $\theta$ back into $\Theta$ if they are sampled outside the bounds. An advantage of CMA-ES is that it can leverage parallel computing resources at each round, when the $R$ function values are computed.

Algorithm \ref{alg:singleBootstrapSample} is repeated $M$ times to produce the bootstrap samples $\{\theta_{1}^{boot},\dots,\theta_{M}^{boot}\}$ for uncertainty quantification for the parameter. We call this procedure B-MRSW, standing for (B)ootstrapped (M)inimum (R)obust (S)emi-constrained (W)asserstein-2. $M \times K \times R \times s$ executions of the simulator $G$ are used to generate $M$ bootstrap samples. Note that prior to executing Algorithm \ref{alg:singleBootstrapSample}, $s$ noises $Z_1,\dots,Z_s \overset{iid}{\sim} \mu$ must be generated where $\mu$ is the reference distribution.

\begin{algorithm}
\caption{Generation of a Bootstrap sample}
\label{alg:singleBootstrapSample}
\begin{algorithmic}[1]
    \Require $\lambda > 0$, observed data $\{Y_i\}_{i=1}^n$, simulator $G$, $g_0$ (initial position for SGA), $s$ iterations per SGA, $B$ (learning rate scaling in SGA), $s$ noise vectors $\mathcal{Z} = \{Z_1,\dots,Z_s\}$ (where $Z_1,\dots,Z_s \overset{iid}{\sim} \mu$), $\Omega = \{\text{CMA-ES tuning parameters (see Section \ref{subsec:FullSBIpipeline})}\}$
    \State Sample uniformly with replacement $n$ times from $(Y_1, \dots, Y_n)$ to generate a bootstrapped dataset $\mathcal{Y}^{'}= (Y_1', \dots, Y_n')$ with empirical measure $P'_n$
    \State Find the $j^{th}$ bootstrap sample $\theta^{boot} = \argmin_{\theta \in \Theta} \ell_{\lambda}(P'_n, P_{\theta})$ using CMA-ES with params $\Omega$
    \State Note: For each function evaluation at $\theta \in \Theta$ in CMA-ES in line (2), algorithm \ref{alg:SGA} executes with $g_0,s,B,\lambda,\mathcal{Y}^{'}$ and $\mathcal{X} = (X_1,\dots,X_s)$ where $X_i = G(Z_i,\theta)$ for $i \in [s]$
    \State \Return $\theta^{boot}$
\end{algorithmic}
\end{algorithm}

\subsection{$\lambda$ selection routine}
\label{sec:lamSelect}
$\lambda$ is the lever for data reweighting, and as detailed in Section \ref{sec:basicProperties}, if $\lambda$ is too close to zero, inferences lack robustness to Huber contamination because Wasserstein-2 distance is not robust to this form of contamination, whereas if $\lambda$ is too large, the loss landscape has lost the richness of Wasserstein distance and inference is uninformative. To achieve robustness and maintain a rich loss landscape requires appropriate selection of $\lambda$. To construct a data driven strategy for $\lambda$ selection, note that by Theorem \ref{thm:minMaxFlip} Equation \ref{optWeightVectorExtraction} the re-weighting of the data $Y_1,\dots,Y_n$ achieving the infima in $\ell_{\lambda}(P_n,P_{\theta})$ can be estimated by extracting the final iterate $g_s(\theta)$ from Algorithm \ref{alg:SGA} (when applied with given $\theta$) and computing $\hat{Q}_{\lambda}(\theta) := \sum_{t=1}^{n} w_t(\theta) \delta_{Y_t}$ where  $w_t(\theta) = \frac{\exp(-\lambda g_{s,t}(\theta))}{\sum_{j=1}^{n} \exp(-\lambda g_{s,j}(\theta))}$ for $t \in [n]$. 

Our approach is to leverage this by carrying out a small preliminary $\lambda$ selection routine before executing the main bootstrap loop described in Section \ref{subsec:FullSBIpipeline}. 
We take a collection of logarithmically spaced $\lambda$ values $\Lambda$ where for $M'$ bootstrap samples, and $j \in [M']$ we first draw the noises $Z_1,\dots,Z_s \overset{iid}{\sim} \mu$. Then for each $\lambda \in \Lambda$, we run Algorithm \ref{alg:singleBootstrapSample}, and then approximate $W_2(P_{\theta^{boot,j}_{\lambda}}, \hat{Q}_{\lambda}(\theta^{boot,j}_{\lambda}) )$ (using the existing samples $X_1,\dots,X_s \overset{iid}{\sim} P_{\theta_{\lambda}^{boot,j}}$ to perform a standard discrete-discrete Optimal Transport computation). At each $\lambda$ this produces a set of $j$ estimates of the Wasserstein-2 distance between the fitted model and the reweighting of the data corresponding to this fitted model. 

We then look for an elbow in the plot of the distribution of $W_2(P_{\theta^{boot,j}_{\lambda}}, \hat{Q}_{\lambda}(\theta^{boot,j}_{\lambda}) )$ across $\lambda \in \Lambda$. The intuition is that as $\lambda$ grows, outliers corresponding to the Huber contaminant become down-weighted more in $\hat{Q}_{\lambda}(\hat{\theta}_{\lambda})$. Once the outliers are nearly completely downweighted (i.e $\lambda$ is high enough), the quantity $W_2(P_{\hat{\theta}_{\lambda}},\hat{Q}_{\lambda}(\hat{\theta}_{\lambda}))$ will stop decreasing sharply, yielding an elbow or U shape. If no elbow is discovered, $\lambda = 0$ is chosen. We find that choosing $\Lambda$ to be $15$ logarithmically equidistant values between $-2$ and $2$ works well.  We demonstrate the $\lambda$ selection method for a univariate normal distribution example in Figure \ref{fig:lamSelect} (code for all experiments is available at \verb|https://github.com/pjacobs0007-lang/BMRSW|). We provide  additional demonstrations of $\lambda$ selection across different contamination settings in Appendix Section \ref{sec:experimentalDetails}.

\begin{figure}
  \centering
  \includegraphics[width=\textwidth]{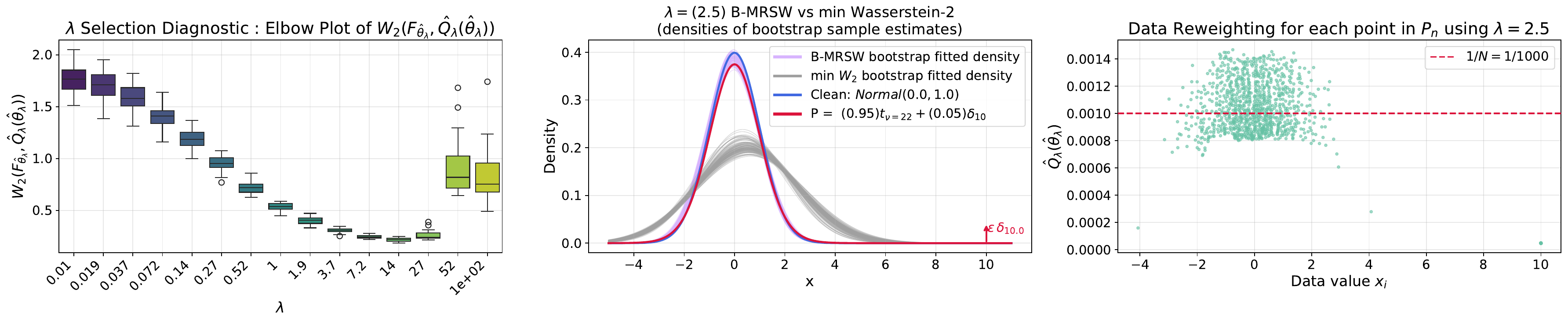}
  \caption{Illustration of the full B-MRSW inference pipeline and comparison to Minimum $W_2$ in inferring the mean and variance of a univariate normal distribution. Here $P = (.95) t_{\nu = 22}+(.05)\delta_{10}$ where $t_{\nu}$ denotes the $t$ distribution with $\nu$ degrees of freedom where $n = 1000$.
  Left: Elbow Plot of $\lambda$ selection diagnostic (Elbow observed at $\lambda \approx 2.5$). Middle: Comparison of fitted densities from $100$ bootstrap samples of the $\lambda = (2.5)$ B-MRSW vs $100$ bootstrap samples of Minimum $W_2$. Right: The reweighted empirical measure $P_n$ determined by running minimum $\lambda = (2.5)$ RSW directly on $P_n$ and then computing $\hat{Q}_{\lambda}(\hat{\theta}_{\lambda})$.}
  \label{fig:lamSelect}
\end{figure}
\section{Convergence guarantee for divergence approximation using SGA}
\label{sec:convergenceProperties}

Here we prove a rate of convergence for Algorithm \ref{alg:SGA} in approximating $\ell_{\lambda}(P_n,P_{\theta})$.

\begin{restatable}{theorem}{sgaConvergence}
    \label{thm:sgaConvergence}
    Let $Y_1,\dots,Y_n \in A \subset \mathbb{R}^{m}$, $P \in \mathcal{P}_2(A)$, $\sup_{x,y \in A} \| x- y\|_2 \leq D$, $\mathbf{0}_{m} \in A$ and $\lambda > 0$. Then letting $\hat{x}_{s,g_0,B}$ be the output of algorithm \ref{alg:SGA} with $s$ iterations, learning rate scaling $B$ and $g_{0} = \pmb{0}_n$,
    \begin{equation}
        \label{sgaPrecision}
        \mathbb{E}_{P} | \hat{x}_{s,g_0,B} -  \ell_{\lambda}(P_n,P) | \leq 3 D^2 \sqrt{\frac{1+\log(s)}{s}} + \sqrt{\frac{n}{s}} \begin{cases}
             \left( \frac{D^2}{2} + 2 D^2(1+\log(s))\right)& B = D^2 \\
            \frac{D^{4}}{2} + 2(1+\log(s))& B = 1
        \end{cases}
    \end{equation}
    Moreover, $\hat{x}_{s,g_0,B}$ is computed in $O(ns)$ time.
\end{restatable}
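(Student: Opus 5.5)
By Theorem \ref{thm:minMaxFlip}, $\ell_{\lambda}(P_n,P)=\sup_{g}\Phi(g)$ with $\Phi(g):=\mathbb{E}_{X\sim P}h_1(X,g)$ concave, and the supremum is attained at some $g^{*}$. Algorithm \ref{alg:SGA} is weighted-average stochastic subgradient ascent on $\Phi$. We therefore split the error of the weighted average $\hat x=\sum_i\gamma_i h_1(X_i,g_{i-1})/\sum_i\gamma_i$ as
\[
\hat x-\Phi(g^*)=\underbrace{\frac{\sum_i\gamma_i\big(h_1(X_i,g_{i-1})-\Phi(g_{i-1})\big)}{\sum_i\gamma_i}}_{\text{(I)}}+\underbrace{\frac{\sum_i\gamma_i\big(\Phi(g_{i-1})-\Phi(g^*)\big)}{\sum_i\gamma_i}}_{\text{(II)}} .
\]
Term (II) is $\le 0$ and is controlled by the standard subgradient analysis. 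Term (I) is a martingale-difference average and is controlled in $L^2$.

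\textbf{Step 1 (structural bounds).} The stochastic subgradient in the algorithm is $v_i=w(g_{i-1})-e_{j^*}$, where $w$ is the softmax weight vector. This vector has $\|v_i\|_1\le 2$ and $\|v_i\|_2^2\le 2$. Moreover its coordinates sum to zero, so $\sum_j g_{i,j}=0$ for all $i$ because $g_0=\mathbf 0$. The first term of $h_1$ is a min over $j$, so it has a subgradient $-e_{j^*}$; the log-sum-exp term has gradient $w$. Hence $v_i$ is an unbiased stochastic supergradient of $\Phi$ at $g_{i-1}$.

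Next we need a bounded comparator. Because $h_1(x,g+c\mathbf 1)=h_1(x,g)$, we may normalize $g^*$. We show some maximizer satisfies $\max_j g^*_j-\min_j g^*_j\le D^2$, hence $\|g^*\|_2\le\sqrt n D^2$ after centering. The argument is a $c$-concavity / semi-discrete OT argument: any $g_j$ exceeding $\min_k g_k+D^2$ can be lowered to that level without changing which index attains the min, since all costs lie in $[0,D^2]$. Doing so increases $-\tfrac1\lambda\log\sum e^{-\lambda g}$, so the truncated vector is still a maximizer.

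The same cost bounds give $|h_1(x,g)-\Phi(g)|\le D^2$ for every $g$, which is what Term (I) needs. For the first term of $h_1$, $\min_j(\|x-Y_j\|^2-g_j)$ varies over $x$ by at most $D^2$, because each cost is in $[0,D^2]$ and $\mathbf 0_m\in A$.

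\textbf{Step 2 (Term II, the optimization error).} We apply the usual expansion
\[
\|g_i-g^*\|^2=\|g_{i-1}-g^*\|^2+2\gamma_i\langle v_i,g_{i-1}-g^*\rangle+\gamma_i^2\|v_i\|^2 .
\]
Taking conditional expectations and using concavity gives
\[
2\sum_i\gamma_i\,\mathbb E[\Phi(g^*)-\Phi(g_{i-1})]\le\|g^*\|^2+2\sum_i\gamma_i^2 .
\]
With $\gamma_i=B\sqrt{n/i}$ we have $\sum_i\gamma_i\ge B\sqrt{ns}$ and $\sum_i\gamma_i^2\le B^2n(1+\log s)$. Dividing,
\[
\mathbb E|\text{(II)}|\le \frac{nD^4}{2B\sqrt{ns}}+\frac{B\,n(1+\log s)}{\sqrt{ns}} .
\]
Substituting $B=D^2$ and $B=1$ yields the two cases in the $\sqrt{n/s}$ term. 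The constant $2$ in front of $(1+\log s)$ comes from $\|v_i\|^2\le 2$.

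\textbf{Step 3 (Term I, the sampling error).} Here $X_i$ is independent of $g_{i-1}$, so the increments $\gamma_i\big(h_1(X_i,g_{i-1})-\Phi(g_{i-1})\big)$ are martingale differences bounded by $\gamma_iD^2$. Jensen and orthogonality give
\[
\mathbb E|\text{(I)}|\le D^2\frac{\sqrt{\sum_i\gamma_i^2}}{\sum_i\gamma_i}\le D^2\frac{\sqrt{1+\log s}}{\sqrt s}\cdot c
\]
for a small constant $c$ (using $\sum_i i^{-1/2}\ge 2\sqrt s-2\ge\sqrt s$). Part of the bound $3D^2\sqrt{(1+\log s)/s}$ absorbs this constant, together with any slack from centering.

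\textbf{Runtime and main obstacle.} Each iteration does an $O(n)$ argmin, an $O(n)$ sum and an $O(n)$ update, for $O(ns)$ total. I expect the main obstacle to be Step 1's bound on $\|g^*\|$: it requires showing that a maximizer exists with bounded oscillation. This is where the diameter and the $\mathbf 0_m\in A$ assumption enter, and the truncation argument must carefully preserve the objective value.
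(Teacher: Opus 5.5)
Your decomposition and Steps 2--3 follow the paper's structure. The optimization term is handled by expanding $\|g_i-g^*\|_2^2$ with $\sum_i\gamma_i\ge B\sqrt{ns}$ and $\sum_i\gamma_i^2\le nB^2(1+\log s)$; the sampling term is a weighted martingale controlled in $L^2$. Both steps are correct. Your range bound $|h_1(x,g)-\Phi(g)|\le D^2$ is a cleaner substitute for the paper's route (Lipschitz continuity of $h_1(\cdot,g)$ plus $\mathbb{E}\|X-\mathbb{E}X\|_2^2\le D^2$, which gives $9D^4$), and it yields a smaller constant; you may take $c=1$ since $\sum_i\gamma_i\ge s\gamma_s$. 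The problem is Step 1, exactly where you expected it: both claims in your truncation argument are false.

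First, lowering a coordinate $g_j$ \emph{decreases} $-\frac{1}{\lambda}\log\sum_t e^{-\lambda g_t}$, because its partial derivative in $g_j$ is the softmax weight $w_j(g)>0$. Second, lowering $g_j$ can change which index attains $\min_k(\|x-Y_k\|_2^2-g_k)$. You only know that $j$ still weakly beats the minimizing index $k_0$; an index $\ell$ with an intermediate value $g_\ell$ may now win on part of $A$. Since the min term is nonincreasing in each $g_k$, lowering $g_j$ can only raise the first part of $\Phi$ while it lowers the second. So nothing in your argument shows the truncated vector is still a maximizer.

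The repair is to truncate from below. With $M=\max_k g_k=g_{k_1}$, set $g'_k=\max(g_k,M-D^2)$. For every raised index and every $x\in A$, $\|x-Y_k\|_2^2-g'_k\ge D^2-M\ge \|x-Y_{k_1}\|_2^2-g_{k_1}$. Hence the pointwise minimum, and so the first part of $\Phi$, is unchanged, while the log-sum-exp part is coordinatewise nondecreasing. Thus $\Phi(g')\ge\Phi(g)$, strictly if any coordinate was raised, so in fact every maximizer already has oscillation at most $D^2$.

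The paper reaches the same conclusion from the first-order condition $P(L_j(g^*))=w_j(g^*)>0$ for all $j$ (Lemma~\ref{lem:infoAboutOurObjective}, parts 3--4). If $g^*_i-g^*_j>D^2$, index $i$ strictly beats $j$ on all of $A$, so $L_j(g^*)$ would be $P$-null. With either fix, your centered comparator satisfies $\|g^*\|_2^2\le nD^4$, and the rest of your proof goes through with constants at or below those in the stated bound.
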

The purpose of including the learning rate scaling $B$ in the algorithm is because if one has access to $D$, they can use it to reduce the diameter penalty in the rate from $D^{4}$ to $D^{2}$, which is seen Theorem \ref{thm:sgaConvergence}. The following corollary summarizes the asymptotics of Theorem \ref{thm:sgaConvergence}.
\begin{corollary}
    Under the conditions of Theorem \ref{thm:sgaConvergence} and absorbing $D$ into asymptotic notation, for $\epsilon > 0$, $s = n \epsilon^{-2}$ and $B \in \{D^2,1\}$,
    $\mathbb{E}_{P} | \hat{x}_{s,g_0,B} - \ell_{\lambda}(P_n,P) | = \tilde{O}(\epsilon) \text{ with runtime } O(\frac{n^2}{\epsilon^2})$.
\end{corollary}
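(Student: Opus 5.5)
The corollary follows by substituting $s = n\epsilon^{-2}$ into the bound of Theorem \ref{thm:sgaConvergence} and reading off the two terms. Treat $D$ as a constant. In either case $B \in \{D^2,1\}$, the right-hand side of Eqn. \ref{sgaPrecision} has the form
\[
c_1\sqrt{\frac{1+\log(s)}{s}} + c_2\sqrt{\frac{n}{s}}\,\bigl(1+\log(s)\bigr)
\]
for constants $c_1,c_2$ depending only on $D$. With $s = n\epsilon^{-2}$ we have $\sqrt{n/s} = \epsilon$.

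For the first term, $n \geq 1$ gives $\sqrt{1/s} = \epsilon/\sqrt{n} \leq \epsilon$. Hence both terms are bounded by $\epsilon$ times a factor polynomial in $\log(s) = \log(n) + 2\log(1/\epsilon)$. This factor is absorbed into the $\tilde{O}$ notation, giving $\mathbb{E}_{P}|\hat{x}_{s,g_0,B} - \ell_{\lambda}(P_n,P)| = \tilde{O}(\epsilon)$.

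For the runtime, Theorem \ref{thm:sgaConvergence} gives $O(ns)$ operations. Each of the $s$ iterations computes an argmin over $n$ indices, a sum of $n$ exponentials, and $n$ coordinate updates. Substituting $s = n\epsilon^{-2}$ yields $O(n^2/\epsilon^2)$.

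The only point that needs care is stating what $\tilde{O}$ hides: logarithmic factors in both $n$ and $1/\epsilon$, arising from $\log(s)$. These factors cannot be absorbed into a constant.
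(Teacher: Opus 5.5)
Your proposal is correct and is exactly the argument the paper has in mind: the paper gives no separate proof and simply reads off the bound of Theorem \ref{thm:sgaConvergence} with $s = n\epsilon^{-2}$ (so $\sqrt{n/s} = \epsilon$), together with the $O(ns)$ runtime. Your remark that the hidden factor $1+\log s = 1+\log n + 2\log(1/\epsilon)$ also carries a $\log n$ is accurate, and slightly more careful than the paper's gloss that $\tilde{O}$ ignores only logarithmic factors of $\epsilon$.
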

where $\tilde{O}$ indicates ignorance of logarithmic factors of $\epsilon$. For estimation of $\ell_{\lambda}(P,Q)$ where $P$ and $Q$ are both discrete, \citet{le2021robust} provide an algorithm which deterministically achieves $\epsilon$ precision in absolute error, in $O(\frac{n^2}{\epsilon})$ time. The semi-discrete problem is naturally more challenging because one of the measures is potentially smooth and unknown. To the best of our knowledge, Theorem \ref{thm:sgaConvergence} is the first algorithm with a convergence guarantee for the semi-discrete Robust Semi-constrained Optimal Transport.

\section{Numerical illustration and comparison with MMD}
\label{sec:numericalIllustrations}

The G-and-K distribution \citep{prangle2017gk} is a flexible four parameter $(a,b,g,k)$ extension of the univariate normal distribution incorporating two additional parameters which respectively measure skewness ($g$) and kurtosis ($k$). A random variable is generated from a four parameter G-and-K distribution with parameters $(a,b,g,k)$ by first generating $Z \sim Normal(0,1)$, and then passing it through the transformation $Y = a+b G_{g}(Z) K_{k}(Z)$
where $G_{g}(z) = 1+(0.8) \tanh( gz /2)$ and $K_{k}(z) = z (1+z^2)^{k}$. $a$ primarily controls location and $b$ primarily variance. The density of the G-and-K distribution is not analytically tractable and so it is a common benchmark for SBI.  For the simulation study the contaminated data generating distribution $P = (1-\epsilon) Q + \epsilon \delta_{z}$ where $Y \sim Q$ if $X \sim GandK(a = 3, b = 1,g = 2, k =0.5)$ and $Y = \lfloor \frac{X}{\rho} \rfloor \rho$. We choose $\epsilon = \rho = 0.05$ and $z =50$. $P_{clean}$ and the contaminated data generating distribution are visualized in figure \ref{fig:gandkCleanvsContam} Panel A. We compare to NPL-MMD of \cite{dellaporta2022robust}. Although MMD has not been studied under geometric contamination, several sources \citep{cherief2020mmd,dellaporta2022robust} have suggested that MMD is robust under Huber Contamination. Given a positive definite kernel function, $k:\mathbb{R}^{m} \times \mathbb{R}^{m} \to \mathbb{R}^{+}$, the MMD between two Borel probability measures $P$ and $Q$ on $\mathbb{R}^{m}$ can be expressed as
\begin{equation}
    \label{mmdDefinition}
    MMD^2_{k}(P,Q) = \mathbb{E}_{X \sim P, Y \sim P} k(X,Y) + \mathbb{E}_{X \sim Q, Y \sim Q} k(X,Y) - 2 \mathbb{E}_{X \sim P, Y \sim Q} k(X,Y).
\end{equation}
\cite{dellaporta2022robust} empirically demonstrate that by using the popular Gaussian kernel, $k(x,y) = \exp(\frac{-(x-y)^2}{2 \sigma_0^2})$, their SBI mechanism of using the Bayesian bootstrap on top of Minimum squared MMD achieves robustness to Huber contamination.\footnote{NPL-MMD also has the flexibility to incorporate prior knowledge, but here we consider the case prior knowledge is not available.} They choose $\sigma_0 = .15$ without providing a systematic procedure for selection. For a dataset of $N = 1000$ points from the contaminated distribution, we carry out the $\lambda$ selection procedure of Section \ref{sec:lamSelect} with $M' = 15$ bootstrap replications and $s = 20,000$ noise generations per bootstrap replication (with the conservative learning rate scaling $B = 1$) for $\Lambda = \{10^{-2+\frac{4k}{14}}: k \in \{0,1,\dots,14\} \}$ (see appendix Section \ref{sec:experimentalDetails} for our CMA-ES optimization parameter settings). The results of applying the $\lambda$ selection procedure are in Figure \ref{fig:gandkCleanvsContam} panel B. A conservatively large regional estimate for the elbow is $\lambda \in [0.5,3.5]$. To compare the sensitivity of NPL-MMD inferences to bandwidth selection, compared to the sensitivity of B-MRSW to $\lambda$ selection, and to assess how both methods handle G+H Contamination, we then generate $20$ datasets for each $n \in \{1000,5000\}$, and for each run B-MRSW and NPL-MMD with $M = 100$ bootstrap samples, with $\lambda \in \{0.5,1.5,2.5,3.5\}$ for B-MRSW and $\sigma_0 \in \{0.15,0.30,0.50,0.65,1\}$ for NPL-MMD. For each dataset, for both methods, we compute the marginal median estimate for each parameter based on the $M$ bootstrap samples. We also compute marginal $95\%$ confidence intervals for each parameter. To calibrate the comparison with NPL-MMD we use a common budget for noise generation for both methods. See Appendix Sections \ref{subsec:gandkExperimentDetails} and \ref{subsec:simEfficiencyComparison} for complete NPL-MMD Stochastic Gradient Descent settings and a comparison of simulation efficiency of the two methods.


Panel C and D of Figure \ref{fig:gandkCleanvsContam} respectively show the average MSEs of the marginal median estimators for both methods. We observe that small increases to the bandwidth lead to highly inaccurate estimates of the skewness $g$ for NPL-MMD, whereas for B-MRSW, average MSEs for all 4 parameters across all $\lambda$ values still provide accurate inferences. We note that the small bandwidth limit in Gaussian Kernel MMD suffers from a loss of geometric informativeness (which manifests as a vanishing gradient problem) because the distance mass is shifted in space is no longer properly encoded, whereas large bandwidth MMD with the Gaussian kernel is not inherently robust (as bandwidth tends to infinity Gaussian Kernel MMD tends towards a distance between first moments; see Appendix Section \ref{sec:MMDnotRobust} for details). Panels E and F of Figure \ref{fig:gandkCleanvsContam} present the results of the coverage experiment. Coverage rates for NPL-MMD are brittle for $g$ as a function of bandwidth. For B-MRSW, coverage rates are high across $\lambda$ at $n = 5000$ except for kurtosis, for which B-MRSW coverage improves considerably as $n$ increases. 


\begin{figure}[h!]
    \includegraphics[width=\textwidth]{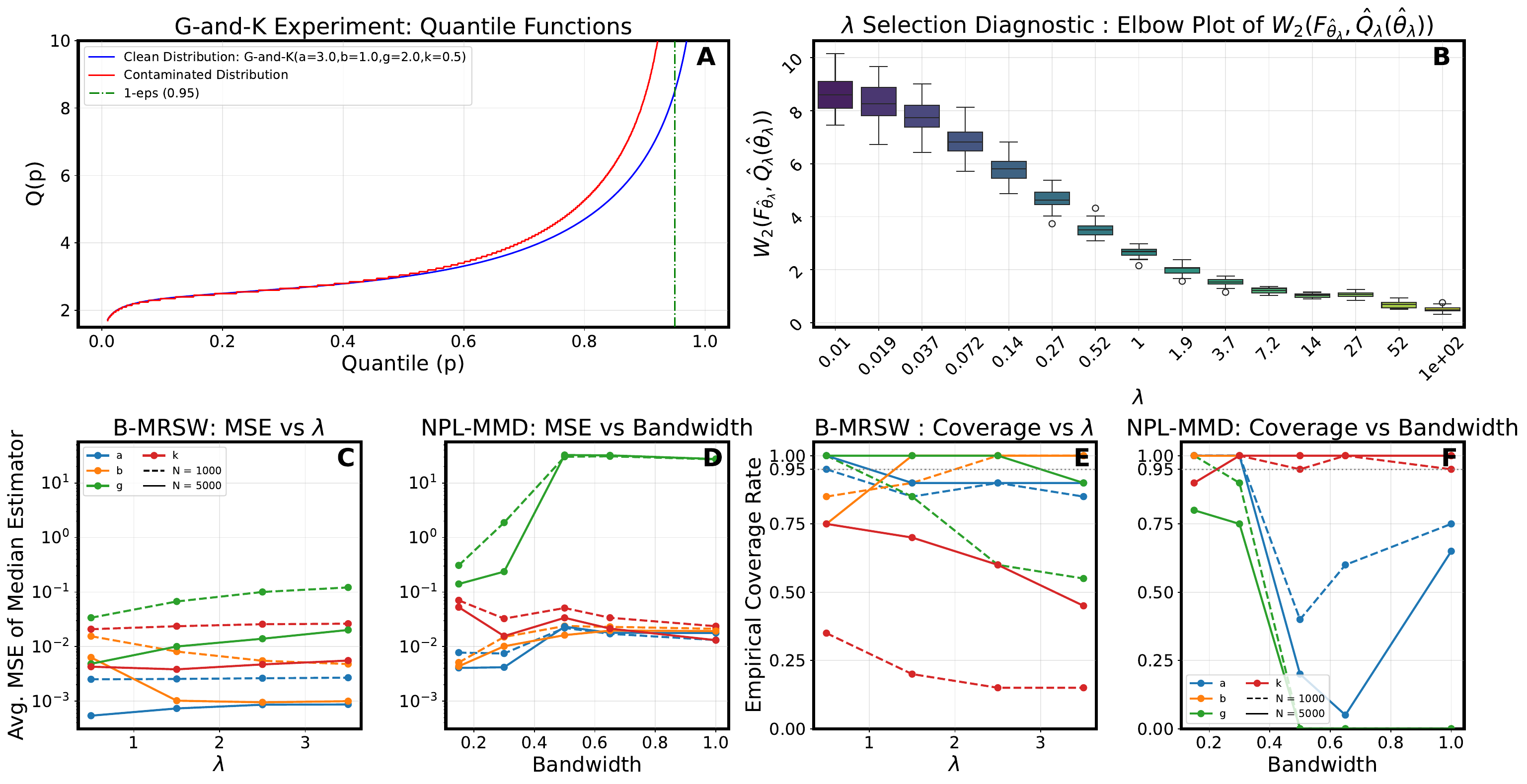}
    \caption{ Panel A: Visualization of the quantile functions of the clean distribution (G-and-K $(a=3,b=1,g=2,k=0.5)$ and Contaminated distribution $P$. Panel B: Results of applying the $\lambda$ selection diagnostic routine. Panel C and D: Average MSE of Bootstrap marginal Median estimators for B-MRSW and NPL-MMD. Panels E and F: Coverage rates for B-MRSW and NPL-MMD.} 
\label{fig:gandkCleanvsContam}
\end{figure}
\section{Conclusions}
We have presented B-MRSW, a framework for statistical inference for a parameter with uncertainty quantification. It works in the SBI regime, is robust to a general form of contamination involving both a geometric perturbation and a mass contamination, and is parallelizable, both at the bootstrap level, and within each individual optimization. Relative to existing works which can be robust to the less general Huber Contamination, B-MRSW has a principled approach for tuning parameter selection, a critical step in the robust inference pipeline. A limitation of B-MRSW is that it executes the simulator code $G$ a total of $(M+M'|\Lambda|$)*\verb|CMA-ES pop_size|*\verb|CMA-ES rounds|*\verb|iterations_per_SGA| where $M'$ is bootstrap replications during $\lambda$ selection and $M$ is bootstrap replications during inference. The total calls to the simulator can be significantly reduced by training a non-parametric model for the simulation code $G$ and plugging it directly into the existing framework, a subject for future work.
\bibliographystyle{plainnat}
\bibliography{references}

\appendix

\section{Empirical likelihood motivation for robust semi-constrained optimal transport}
\label{sec:EmpiricalLikelihoodMotivation}

Our approach for divergence selection is motivated by recent developments in distributionally constrained Empirical Likelihood.  \cite{chakraborty2025robust} develop a robust Bayesian method that centers around a parametric family $\{P_\theta \,:\, \theta \in \Theta\}$ while allowing flexible departures from it in a holistic manner. Their main idea was to {\it replace} the usual likelihood function $L(\theta) = \prod_{i=1}^{n} P_{\theta}(Y_i)$ with a principled notion of {\it distributionally-constrained empirical likelihood} $L_{\rm DCEL}$, 
\begin{align}\label{eqn:wbetel}
L_{\rm DCEL}(\theta) :\,= \prod_{i=1}^n w_i^\star(\theta), \ \text{ where } \ w^\star(\theta) = \argmax_{w \in \Delta^{n-1}} \prod_{i=1}^n w_{i}^{-w_{i}} \ \text{ subject to } \ \mbox{D}[Q_{n,w},P_{\theta}] \le  \varepsilon,
\end{align}
where $\varepsilon > 0$ is a pre-specified tolerance level, $D$  an appropriate statistical distance, 
and $Q_{n,w} :\, = \sum_{i=1}^n w_i \delta_{Y_i}$ a {\it weighted empirical distribution} with weight vector $w \in \Delta^{n-1}$, the $(n-1)$-dimensional probability simplex ($\Delta^{n-1} = \{(w_1,w_2,\dots,w_n) \in \mathbb{R}^{n} : \forall i,  0 \leq w_i , \sum_{i=1}^{n} w_i = 1\})$. The idea of constructing a likelihood function based on a {\it discrete distribution} supported on the {\it observed data points} falls under the umbrella of {\bf empirical likelihood} (EL; \citep{owen2001empirical,lazar2021review}). Observe that $\log \prod_{i=1}^n w_i^{-w_i} = -\sum_{i=1}^n w_i \log w_i$ is the {\it Shannon entropy} \citep{cover1999elements} of $Q_{n,w}$. In \eqref{eqn:wbetel}, the optimal weight vector $w^\star(\theta)$ underlying $L_{\rm DCEL}(\theta)$ is obtained by maximizing the Shannon entropy of $Q_{n,w}$ subject to $Q_{n,w}$ being within an $\varepsilon$-distance of $P_\theta$ w.r.t. the divergence $\mbox{D}$. The maximum entropic notion was specifically motivated by {\it exponentially tilted empirical likelihood} (ETEL; \citep{schennach2005bayesian,chib2018bayesian}),  traditionally used when the parameter $\theta$ is defined {\it implicitly} via {\it moment conditions}, e.g. in econometric applications. 

Rather than taking the likelihood based approach as in \cite{chakraborty2025robust}, in this work we construct a statistical divergence driven by the soft-dual formulation of the optimization problem \ref{eqn:wbetel}, and implement it with the geometrically sensitive choice of $\mbox{D} = W_2^2$. Specifically, observe that with $P_n := \sum_{i=1}^{n} \frac{1}{n} \delta_{Y_i}$ denoting the empirical measure of the observed data $Y_1,\dots,Y_n$,
\[
w^{*}(\theta) = \argmin_{w \in \Delta^{n-1}} \mathrm{KL}(Q_{n,w},P_n) \text{ subject to } W_2^2[Q_{n,w},P_{\theta}] \leq \epsilon.
\]
The soft-dual to this optimization problem finds for a specified $\lambda > 0$,
\begin{equation}
    \label{eqn:softDualExpression}
    \argmin_{w \in \Delta^{n-1}} \mathrm{KL}(Q_{n,w},P_n) + \lambda W_2^2(Q_{n,w},P_{\theta}).
\end{equation}

We take the quantity $\min_{w \in \Delta^{n-1}} \frac{1}{\lambda} \mathrm{KL}(Q_{n,w},P_n)+ W_2^2(Q_{n,w},P_{\theta})$ to be an empirical estimate of the divergence between $P_{\theta}$ and $P$; it is exactly $\ell_{\lambda}(P_n,P_{\theta})$, the quantity that we use as our divergence in this work. This is an entirely different motivation behind the divergence than the one provided by \cite{le2021robust}, where it arises from Unbalanced Optimal Transport \citep{chizat2017unbalanced}.

\section{Extended related work}
\label{sec:ExtendedRelatedWork}

Robust parametric statistics deals with the general setting where the parameters of a statistical model $\{P_{\theta}: \theta \in \Theta\}$ are to be inferred from data generated according to a probability distribution $P \notin \{P_{\theta}: \theta \in \Theta\}$, but $P$ is in some sense $\epsilon$ perturbed from $P_{\theta^{*}}$. A statistical method is robust if the small $\epsilon$ perturbation to the data generating distribution cannot be made to cause massive changes in inferences. What is meant by a small perturbation is made explicit using a contamination model. \citet{huber1992robust} influentially introduced the Huber Contamination model, where $P = (1-\epsilon) P_{\theta^{*}}+\epsilon F$ for some probability distribution $F$. Alternatively, the Adaptive Contamination model, also sometimes called the strong $\epsilon$-contamination model, considers the case where an adversary can take an $\epsilon$ fraction of the samples from the empirical measure of $n$ iid samples from $P_{\theta^{*}}$, and arbitrarily modifies them \citep{diakonikolas2019robust}. \cite{liu2023robust} studies the adversarial Wasserstein contamination to the empirical measure, where the adversary can apply a slight perturbation to every data point. At the population level, the adaptive contamination model and adversarial Wasserstein contamination model correspond to $TV(P_{\theta^{*}},P) \leq \epsilon$ and $W_1(P_{\theta^{*}},P) \leq \epsilon$ respectively. We are not aware of works that point out the utility of a method for simultaneously addressing a geometric and Huber contamination.

Although our own method is not Bayesian, our method has the same inferential goal (to provide uncertainty quantification for the parameter of the statistical model), hence here we provide a summary of SBI approaches that includes Bayesian approaches. In traditional Bayesian inference, the goal is  to target the posterior distribution $p(\theta | X_1,\dots,X_n) \propto \prod_{i=1}^{n} p_{\theta}(X_i) \pi(\theta)$ of $\theta$ where $p_{\theta}$ is the probability density function associate to $P_{\theta}$ and $\pi$ is a prior distribution. To handle the challenge that $p_{\theta}$ is not analytically available, there are various existing approaches. In Approximate Bayesian Computation \citep{beaumont2010approximate,beaumont2002approximate,blum2010approximate,beaumont2002approximate,lintusaari2017fundamentals}, samples from an $\epsilon$-indexed Posterior distribution are generated by sampling $\theta_j \sim \pi$ from the prior, generating a dataset $X_1',X_2',\dots,X_n' \overset{iid}{\sim} p_{\theta_j}$, and then accepting $\theta_j$ as sample from the Posterior if $D(P_n,P_j) \leq \epsilon$ where $P_j$ is the empirical measure of $(X_1',\dots,X_n')$ and $D$ is a carefully chosen measure of discrepancy. Early methods were based on choosing $D$ to measure distance between summary statistics. More recent attempts \citep{bernton2019approximate} use non-parametric losses such as Wasserstein distance. When $\epsilon$ is small, the simulation cost can be high, which has motivated alternative approaches. Neural Likelihood Estimation \citep{papamakarios2019sequential} also uses the ability to simulate pairs $(\theta,x)$ such that $x | \theta \sim p_{\theta}$ and $\theta \sim \pi$. But they use them to train a Neural Network model $\hat{q}(x | \theta)$ which approximates $p_{\theta}(x)$. The model of the likelihood can then be used in a frequentist inference framework, or plugged into Markov Chain Monte Carlo for Bayesian Inference \citep{brooks1998markov}. Neural Posterior Estimation \citep{papamakarios2016fast,greenberg2019automatic} uses a Neural network to directly learn the posterior distribution $p(\theta | x)$.

Likelihood based inference and subsequently Bayesian inference are generally not robust to mis-specification of the likelihood; specific to Bayesian inference, posterior credible intervals can fail to have the correct coverage \citep{kleijn2012bernstein}. To address these concerns, \cite{miller2019robust} introduce the Coarsened Posterior, which changes the conditioning event in Bayesian inference from $(X_1,\dots,X_n) = (x_1,\dots,x_n)$ to $D(\sum_{j=1}^{n} \frac{1}{n}\delta_{X_j}, \sum_{j=1}^{n} \frac{1}{n} \delta_{x_j}) \leq \epsilon$ for some choice of $D$ and $\epsilon$. For $D = \mathrm{KL}$, they show that for a $\zeta_n > 0$, the coarsened posterior is approximately proportional to $\exp(\zeta_n  \sum_{j=1}^{n} \log(p_{\theta}(X_j)) ) \pi(\theta)$, which is called the Power or Fractional posterior \citep{bhattacharya2019bayesian}. As likelihood based inference is still not robust to Huber contamination (maximum likelihood targets $\min_{\theta \in \Theta} \mathrm{KL}(P,P_{\theta})$), a popular extension used in SBI for robustifying Bayesian inference is the generalized posterior $p(\theta | X_1,\dots,X_n) \propto \exp(-\beta n \mathcal{L}(P_{\theta},P_n)) \pi(\theta)$ \citep{bissiri2016general,lyddon2019general}, where $\mathcal{L}$ is a robust empirical loss and $\beta$ is a user specified parameter. $\beta =1$ and $\mathcal{L}(P_n,P_{\theta}) = -\frac{1}{n} \sum_{i=1}^{n} \log p_{\theta}(X_i)$ recovers the original, non-robust posterior distribution (which targets $\mathrm{KL}(P,P_{\theta})$). MMD-Bayes, introduced by \cite{cherief2020mmd}, uses an unbiased estimate of squared Maximum Mean Discrepancy (MMD) as $\mathcal{L}(P_{\theta},P_n)$ and the method is proved to be robust under Huber Contamination (although this requires a careful selection of the Kernel, which is a difficult functional parameter to tune. The method is further generalized by \cite{pacchiardi2024generalized}, using a general scoring rule approach, which reduces to MMD-Bayes when using an a Kernel scoring rule. Recently \cite{bharti2026amortised} consider $\mathcal{L}(P_n,P_{\theta})$ an unbiased estimate of the weighted score matching divergence between $P_{\theta}$ and $P$, also showing robustness under Huber Contamination. An alterative non-parametric approach to Generalized Bayes is taken by placing a Dirichlet process prior directly on the data generating distribution, $P \sim DP(\mathbf{F},\alpha)$, and then exploiting conjugacy. Specifically, one can generate an induced posterior distribution over a parameter by repeatedly drawing samples from the non-parametric posterior $Q \sim DP(\frac{n}{\alpha+n} P_n + \frac{\alpha}{\alpha+n} \mathbf{F},\alpha+n)$ and finding a minimum distance estimator $\min_{\theta} D(P_{\theta},Q)$. \cite{dellaporta2022robust} take this approach with MMD, showing it is robust to Huber contamination. When $\alpha = 0$, their approach reduces to the Bayesian Bootstrap on top of minimum squared MMD. Achieving uncertainty quantification does not require the Bayesian perspective, and \cite{bernton2019parameter} takes this approach, implementing Efron's percentile bootstrap \citep{efron1987better} on top of vanilla minimum Wasserstein distance. Optimal transport richly characterizes the distance between probability distributions, perfectly respecting translations of probability distributions (a property MMD notoriously does not have), but it is not robust even to Huber contamination.

The quantity $\ell_{\lambda}(P_1,P_2)$ we use for our min distance approach is exactly the Robust Semi-constrained Optimal Transport introduced by \cite{le2021robust}. While we view it as arising from an Empirical Likelihood motivation, they view it as arising through a series of modifications to the Unbalanced Optimal Transport \citep{chizat2018scaling,chizat2017unbalanced} cost. \cite{le2021robust} focuses on its application to Barycenter Computation for discrete distributions. They provide an algorithm for computing $\ell_{\lambda}(P_1,P_2)$ up to $\epsilon$ precision in $O(\frac{n^2}{\epsilon})$ time when $P_1,P_2$ are discrete; note that we address semi-discrete computation, which is naturally relevant in parametric statistical inference, where it is necessary to compare $P_n$ (discrete) with $P_{\theta}$ (which is often absolutely continuous with respect to Lebesgue measure). \cite{wang2024outlier} give Robust Semi-constrained Optimal Transport a different name, calling it Unbalanced Wasserstein Distance, and use it in the context of Distributionally Robust Optimization (DRO). 


Whereas Robust Statistics is concerned with returning an estimate as close to $\theta^{*}$ as possible, DRO addresses the scenario that the decision for the parameter will be deployed in an out-of-sample environment (test distribution), which may slightly differ even from the clean training distribution $P_{\theta^{*}}$. This is generally addressed by constructing a ball of radius $\delta$ around the empirical measure $P_n$, $B_{\delta}(P_n)$, and choosing an estimator (decision)  $\theta^{**}_{\delta} = \inf_{\theta \in \Theta} \sup_{Q \in B_{\delta}(P_n)} \mathbb{E}_{\xi \sim Q} L(\theta,\xi)$ for some loss function $L$. $\delta$ introduces conservatism into the decision making process. $B$ equal to a Wasserstein ball corresponds to Wasserstein DRO \citep{gao2024wasserstein}. In, for example, the strong $\epsilon$-contamination model, as $n \to \infty$, a Wasserstein-2 ball around $P_n$ will not necessarily cover $P_{\theta^{*}}$, because Wasserstein-2 distance has explosive sensitivity to repositioning an $\epsilon$ fraction of mass an extremely far distance. This is one motivation for replacing the Wasserstein ball with a Robust Semi-Constrained Optimal Transport ball as in \cite{wang2024outlier}. But note that our work takes the Robust Statistics perspective,  where parameter recovery with uncertainty quantification, not systematic conservatism to handle out-of-sample distribution shift, is the goal of inference. A complete discussion of the difference between Robust Statistics and DRO is provided by \cite{blanchet2025distributionally}.

We also note that there are other ways of defining Robust Optimal Transport. \cite{mukherjee2021outlier} and \cite{nietert2023robust} consider a Total Variation, rather than Kullback-Liebler derived Robust Optimal Transport and \cite{nietert2023outlier} applies it in the context of DRO.

\section{Proofs of results from main body of text}
\label{sec:proofsInText}

\subsection{Helper lemmas}

For convenience in these appendices, we introduce the further notation $O: \mathbb{R}^{n} \to \mathbb{R}$,
\begin{equation}
\label{defOfO}
O(g) := \mathbb{E}_{X \sim P} h_1(X,g)
\end{equation}
where recall $h_1$ is defined in Equation \ref{h1Def}. Before completing the proofs of the results in the main body of the paper, we state and prove several useful helper lemmas.

We first prove a useful fact about $h_1$. Recall that for a concave function $f: \mathbb{R}^{n} \to \mathbb{R}$, its subdifferential at $g$, denoted $\partial f(g)$, is defined as the set of slope directions tangent to $f$ at $g$ such that the tangent line lies above the function $f$ everywhere. Formally,
\begin{equation}
\label{subdifferential}
\partial f(g) := \{ z \in \mathbb{R}^{n} : \forall \eta \in \mathbb{R}^{n}, f(\eta) \leq f(g) + z^T(\eta - g) \}.
\end{equation}

\begin{lemma}
    \label{helper:subDifferential}
    With $Y_1,\dots,Y_n \in \mathbb{R}^{m}$, for every $x \in \mathbb{R}^{m}$, $h_1(x,\cdot)$ is concave in its second argument. Its sub-differential satisfies for every $x \in \mathbb{R}^{m},g \in \mathbb{R}^{n}$
    \[
      \left\{  \left( \left( \frac{\exp(-\lambda g_1)}{\sum_{t=1}^{n} \exp(-\lambda g_t) }, \dots, \frac{\exp(-\lambda g_n)}{\sum_{t=1}^{n} \exp(-\lambda g_t)}  \right)-\mathbf{e}_{j*(g)} \right) | j \in [n], \argmin_{t \in [n]} \| x - Y_t \|_2^2 - g_t = j \right\}
     \subseteq 
     \partial h_1(x,g)
    \]
    where $\mathbf{e}_{j}$ is the vector with a $1$ at position $j$ and $0$ everywhere else.
\end{lemma}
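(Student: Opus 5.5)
We split $h_1(x,\cdot)$ into two pieces and treat each separately:
\[
h_1(x,g) = \phi_x(g) + \psi(g), \qquad \phi_x(g) := \min_{j \in [n]} \left(\| x - Y_j \|_2^2 - g_j\right), \qquad \psi(g) := -\frac{1}{\lambda}\log\Big(\sum_{t=1}^{n} \tfrac{1}{n}\exp(-\lambda g_t)\Big).
\]
Concavity then follows from two standard facts. First, $\phi_x$ is a pointwise minimum of finitely many affine functions of $g$, so it is concave. Second, $g \mapsto \frac{1}{\lambda}\log\sum_t \frac{1}{n} e^{-\lambda g_t}$ is a log-sum-exp composed with the linear map $g\mapsto -\lambda g$ and scaled by $1/\lambda>0$. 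It is therefore convex, and its negative $\psi$ is concave. A sum of concave functions is concave.

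For the subdifferential inclusion we check the defining inequality (\ref{subdifferential}) for each claimed vector, again piece by piece.

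\emph{The $\psi$ piece.} Since $\psi$ is differentiable and concave, its gradient is a supergradient. Its components are
\[
\partial_{g_j}\psi(g) = \frac{e^{-\lambda g_j}}{\sum_t e^{-\lambda g_t}} =: w_j(g),
\]
where the factor $\frac{1}{n}$ cancels. Hence for all $\eta$ we have $\psi(\eta) \le \psi(g) + w(g)^T(\eta-g)$.

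\emph{The $\phi_x$ piece.} Fix an index $j$ attaining $\min_t \|x-Y_t\|_2^2 - g_t$ (the statement writes this as ``$\argmin = j$''; we read it as $j$ belonging to the argmin set, which covers ties). For any $\eta$,
\[
\phi_x(\eta) \le \|x-Y_j\|_2^2 - \eta_j = \left(\|x-Y_j\|_2^2 - g_j\right) - (\eta_j - g_j) = \phi_x(g) + (-\mathbf{e}_j)^T(\eta-g),
\]
so $-\mathbf{e}_j \in \partial\phi_x(g)$.

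Adding the two supergradient inequalities gives
\[
h_1(x,\eta) \le h_1(x,g) + (w(g)-\mathbf{e}_j)^T(\eta - g) \quad \text{for all } \eta \in \mathbb{R}^n,
\]
which is exactly the statement that $w(g)-\mathbf{e}_j \in \partial h_1(x,g)$.

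The only step needing real care is the convexity of log-sum-exp, which supplies both the concavity of $\psi$ and the validity of its gradient as a global supergradient. We would cite it as standard, or verify it quickly via Hölder's inequality, or by noting that its Hessian $\lambda(\mathrm{diag}(w) - ww^T)$ is positive semidefinite because $w$ is a probability vector. The remaining steps are immediate.
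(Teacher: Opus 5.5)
Your proposal is correct and follows essentially the same route as the paper: you split $h_1(x,\cdot)$ into the log-sum-exp piece and the minimum-of-affine piece, show that the softmax vector and $-\mathbf{e}_j$ (for any minimizing index $j$) satisfy the respective supergradient inequalities, and add them. The only cosmetic difference is in how the log-sum-exp piece is handled. The paper gets its convexity from the variational identity (Lemma \ref{variationalIdentity}), and gets the supergradient inequality from a second-order Taylor expansion using the negative semidefinite matrix $\lambda(\pmb{p}_g\pmb{p}_g^T - \mathrm{diag}(\pmb{p}_g))$, which is the same Hessian check you mention.
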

\begin{proof}
    Fix $x \in \mathbb{R}^{m}$. Using the variational identity of Lemma \ref{variationalIdentity}, observe that $\log(\sum_{j=1}^{n} \frac{1}{n} \exp(- \lambda g_j) )$ is the suprema of convex functions of $g$. Thus $\log(\sum_{j=1}^{n} \frac{1}{n} \exp(- \lambda g_j) )$ is convex in $g$. Thus $-\frac{1}{\lambda} \log(\sum_{j=1}^{n} \frac{1}{n} \exp(- \lambda g_j) )$ is concave in $g$. Also each function of $g$ in the $\min$ in $h_1$ is concave in $g$, and the min of concave functions is concave. Thus the two summands of $h_1$ are concave in $g$, and the sum of concave functions is concave, hence $h_1(x,\cdot)$ is concave in its second argument.

    Regarding the subdifferential, Let $z_1 : \mathbb{R}^{n} \to \mathbb{R}$ be defined as
    \[
    z_1(g) = -\frac{1}{\lambda} \log(\sum_{t=1}^{n} \frac{1}{n} \exp(-\lambda g_t) ).
    \]
    $z_1$ is twice continuously differentiable and
    \[
    \nabla z_1(g) =  \pmb{p}_{g}
    \]
    and
    \[
    \nabla \pmb{p}_{g} = \lambda \left( \pmb{p}_{g} \pmb{p}_{g}^{T} - diag(\pmb{p}_{g}) \right)
    \]
    where $\pmb{p}_{g} = \left( \frac{\exp(-\lambda g_1)}{\sum_{t=1}^{n} \exp(-\lambda g_t) }, \dots, \frac{\exp(-\lambda g_n)}{\sum_{t=1}^{n} \exp(-\lambda g_t)}  \right)^{T}$. In particular, note that for any vector $\pmb{a} \in \mathbb{R}^{n}$, 
    \[
    a^T \nabla \pmb{p}_{g} a = -Var(\sum_{t=1}^{n} \delta_{a_t} \pmb{p}_{g,t}) \leq 0.
    \]
    For any $g \in \mathbb{R}^{n}$, by a two-term Taylor series expansion of $z$ about $g$, we thus have that for any $\eta \in \mathbb{R}^{n}$,
    \begin{equation}
        \label{z1TangentUpper}
    z_1(\eta) \leq z_1(g) + ( \pmb{p}_{g})^T(\eta - g).
    \end{equation}
    Now let $z_2 : \mathbb{R}^{n} \to \mathbb{R}$ be defined as
    \[
    z_2(g) =  \left(\min_{t \in [n]} \| x - Y_t \|_2^2 - g_t\right).
    \]
    For $g \in \mathbb{R}^{n}$, let $j^{*}(g)$ be any index achieving the $\min$ in $z_2$. Then for any $\eta \in \mathbb{R}^{n}$,
    \begin{equation}
        \label{subgradForDiscontinuousPart}
        \begin{split}
    z_2(\eta) =  \| x - Y_{j*(\eta)} \|_2^2 - \eta_{j*(\eta)} \leq & \\
    \| x - Y_{j*(g)} \|_2^2 - \eta_{j*(g)} = & \\
     \| x - Y_{j*(g)} \|_2^2 - g_{j*(g)} + g_{j*(g)} - \eta_{j*(g)} = & \\
     z_2(g) + (-\mathbf{e}_{j*(g)}^T) ( \eta - g).
    \end{split}
    \end{equation}
    For $g \in \mathbb{R}^{n}$, by summing equations \ref{z1TangentUpper} and \ref{subgradForDiscontinuousPart}, we have that 
    \begin{equation}
        z_1(\eta)+z_2(\eta) \leq z_1(g)+z_2(g) + (\pmb{p}_{g}- \mathbf{e}_{j*(g)})^T(\eta - g).
    \end{equation}
    As $z_1(\eta)+z_2(\eta) = h_1(x,g)$ for $g \in \mathbb{R}^{n}$, we conclude by the definition the subdifferential that $ (\pmb{p}_{g} - \mathbf{e}_{j*(g)}) \in \partial h_1(x,g)$.
\end{proof}

One more useful fact about $h_1$ before moving onto studying $O$.
\begin{lemma}
    \label{lipchitzH}
    Let $A \subset \mathbb{R}^{m}$ such that $\sup_{x,y \in A} \|x - y \|_2 \leq D$ and $\mathbf{0}_{m} \in A$ and $Y_1,\dots,Y_n \in A$ and $\lambda >0$. Then for every $g \in \mathbb{R}^{n}$, $h_1(\cdot,g)$ is $3 D$-Lipchitz continuous on $A$.
\end{lemma}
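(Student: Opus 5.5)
The plan is to write $h_1(x,g) = z_2(x) + z_1(g)$, as in the proof of Lemma \ref{helper:subDifferential}, where $z_2(x) := \min_{j \in [n]} (\|x - Y_j\|_2^2 - g_j)$ and $z_1(g) := -\frac{1}{\lambda}\log(\sum_{t=1}^n \frac{1}{n}\exp(-\lambda g_t))$. Since $z_1$ does not depend on $x$, it suffices to show that $z_2(\cdot)$ is $3D$-Lipschitz on $A$, uniformly in $g$. The key observation is that a pointwise minimum of finitely many $L$-Lipschitz functions is $L$-Lipschitz. So I will show that each map $f_j(x) := \|x - Y_j\|_2^2 - g_j$ is $3D$-Lipschitz on $A$; the shift by the constant $g_j$ plays no role.

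For the minimum step, fix $x, x' \in A$ and let $j'$ be an index attaining the minimum at $x'$. Then
\[
z_2(x) - z_2(x') \leq f_{j'}(x) - f_{j'}(x') \leq 3D\|x - x'\|_2 .
\]
Swapping the roles of $x$ and $x'$ gives the absolute-value bound.

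For each $f_j$, I will use the identity
\[
\|x - Y_j\|_2^2 - \|x' - Y_j\|_2^2 = \langle x - x',\, x + x' - 2Y_j \rangle,
\]
which by Cauchy--Schwarz is at most $\|x - x'\|_2 \, \|x + x' - 2Y_j\|_2$. Then I bound $\|x + x' - 2Y_j\|_2 \leq \|x - Y_j\|_2 + \|x' - Y_j\|_2 \leq 2D$, using that $x, x', Y_j \in A$ and $A$ has diameter at most $D$. This already gives Lipschitz constant $2D \leq 3D$. The hypothesis $\mathbf{0}_m \in A$ is not needed for this route. It would only matter for an alternative bound such as $\|x\|_2 + \|x'\|_2 + 2\|Y_j\|_2 \leq 4D$, which is weaker, so I expect the stated constant $3D$ simply has slack.

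There is no serious obstacle in this argument. The only care point is that the Lipschitz property is claimed on $A$, not on all of $\mathbb{R}^m$: the quadratic terms are not globally Lipschitz, so both points must be taken in $A$, and the claim is uniform over $g \in \mathbb{R}^n$ because $g$ enters only through additive constants.
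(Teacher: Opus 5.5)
Your proof is correct and follows essentially the same route as the paper: it too bounds each $f_j(x)=\|x-Y_j\|_2^2-g_j$ by $2D\|x_1-x_2\|_2$ via the same inner-product identity and Cauchy--Schwarz, handles the minimum by comparing against the minimizing indices at $x_1$ and $x_2$, and treats the log-sum-exp term as an $x$-independent constant. You are also right that the argument actually yields the constant $2D$ (so $3D$ is slack), and that the hypothesis $\mathbf{0}_m\in A$ is never used.
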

\begin{proof}
Fix $g \in \mathbb{R}^{n}$. Let $f_j(x) = \| x - Y_j \|_2^2 - g_j$ for $j \in [n]$. For $x_1,x_2 \in A$
\begin{equation}
    \begin{split}
        \label{lipchitzFJ}
        |f_j(x_1) - f_j(x_2) | = & \\
        |\| x_1 - Y_j \|_2^2 - \| x_2 - Y_j \|_2^2 | = & \\
        | (x_1-x_2)^{T}(x_1+x_2) - 2(x_1-x_2)^T Y_j | = & \\
        |(x_1-x_2)^T((x_1 - Y_j) +(x_2 - Y_j) ) | \leq & \\
        \| x_1 - x_2 \|_2 \left( \| x_1 - Y_j \|_2 + \| x_2 - Y_j \|_2 \right) \leq & \\
        2D \| x_1 - x_2 \|_2.
    \end{split}
\end{equation}
Additionally, with $F : A \to \mathbb{R}$ defined as $F(x) = \min_{j \in [n]} f_j(x)$, for any $x_1,x_2 \in A$, and $j_1 = \argmin_{j \in [n]} f_j(x_1)$ and $j_2 = \argmin_{j \in [n]} f(x_2)$, we have that
\begin{equation}
    f_{j_1}(x_1) - f_{j_1}(x_2) \leq f_{j_1}(x_1) - f_{j_2}(x_2) \leq f_{j_2}(x_1) - f_{j_2}(x_2).
\end{equation}
And $f_{j_1}(x_1) - f_{j_2}(x_2) = F(x_1)-F(x_2)$. Using this and the above equation, we have that
\[
|F(x_1) - F(x_2) | \leq \max_{j \in [n]} |f_{j}(x_1) - f_{j}(x_2)|.
\]
Applying Equation \ref{lipchitzFJ}, we conclude for any $x_1,x_2 \in A$,
\[
|F(x_1) - F(x_2)| \leq 3 D \| x_1 - x_2 \|_2.
\]
Noting that $h_1(x,g) =  F(x)+const$ where $const$ does not depend on $x$ completes the proof.
\end{proof}

Now we prove a couple of useful facts about $O$ in the next two lemmas.
\begin{lemma}
    \label{coercivityOfO}
    For $P \in \mathcal{P}_2(\mathbb{R}^{m})$ and $Y_1,Y_2,\dots,Y_n \in \mathbb{R}^{m}$ with $B_n := \max_{j \in [n]} \mathbb{E}_{X \sim P} \| X - Y_j \|_2^2$,
    \[
    O(g) \leq (B_n - \| g \|_{\infty})+\frac{\log(n)}{\lambda}.
    \]
    for every $g \in \mathbb{R}^{n}$ such that $g_n = 0$.
\end{lemma}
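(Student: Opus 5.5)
The plan is a direct two-case argument. Fix $g \in \mathbb{R}^{n}$ with $g_n = 0$, and pick an index $k \in [n]$ with $|g_k| = \| g \|_{\infty}$. Write $O(g)$ as the sum of two terms, using the definition of $h_1$ in Equation \ref{h1Def}:
\[
O(g) = \mathbb{E}_{X \sim P} \min_{j \in [n]} \left( \| X - Y_j \|_2^2 - g_j \right) \;-\; \frac{1}{\lambda} \log\Big( \sum_{t=1}^{n} \frac{1}{n} \exp(-\lambda g_t) \Big).
\]
The first term is an expectation and the second is deterministic. We bound each one by choosing a particular index: a single $j$ inside the minimum, and a single summand $t$ inside the log-sum-exp. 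Which indices to use depends on the sign of $g_k$. Two elementary facts do all the work. First, the minimum over $j$ is at most the value at any fixed $j$, so taking expectations gives
\[
\mathbb{E}_{X \sim P}\min_j(\cdot) \leq \mathbb{E}_{X \sim P}\| X - Y_{j}\|_2^2 - g_{j} \leq B_n - g_j .
\]
Second, the sum of positive exponentials is at least any one of its summands, so
\[
-\tfrac{1}{\lambda}\log\big(\tfrac{1}{n}\textstyle\sum_t e^{-\lambda g_t}\big) \leq -\tfrac{1}{\lambda}\log\big(\tfrac1n e^{-\lambda g_t}\big) = g_t + \tfrac{\log n}{\lambda}.
\]

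\emph{Case $g_k = \|g\|_\infty \geq 0$.} In the first term take $j = k$, which gives a bound of $B_n - \|g\|_\infty$. In the log term take $t = n$; since $g_n = 0$, this gives a bound of $\frac{\log n}{\lambda}$. Adding the two bounds yields the claim.

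\emph{Case $g_k = -\|g\|_\infty < 0$.} Now swap the roles. In the first term take $j = n$, which gives a bound of $B_n - g_n = B_n$. In the log term take $t = k$, which gives a bound of $g_k + \frac{\log n}{\lambda} = -\|g\|_\infty + \frac{\log n}{\lambda}$. Adding again yields $B_n - \|g\|_\infty + \frac{\log n}{\lambda}$.

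I do not expect a real obstacle. The only point needing care is the observation that makes the normalization $g_n = 0$ useful: the anchor coordinate $n$ absorbs whichever term is not used to extract $-\|g\|_\infty$. Beyond that, the argument only needs finiteness of $B_n$, which follows from $P \in \mathcal{P}_2(\mathbb{R}^m)$, so that all expectations are well defined.
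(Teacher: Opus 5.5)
Your proposal is correct and follows the paper's proof essentially step for step. Both split into two cases according to whether $\|g\|_\infty$ is attained as $\max_t g_t$ or as $-\min_t g_t$. In the first case the minimum is bounded at the extremal index and the log-sum-exp at the anchor summand $t=n$; in the second case the two choices are swapped. Your explicit inequality $-\frac{1}{\lambda}\log\bigl(\frac{1}{n}\sum_t e^{-\lambda g_t}\bigr) \le g_t + \frac{\log n}{\lambda}$ also states cleanly a step that the paper's second-case chain writes loosely (it drops both the $1/\lambda$ and the $1/n$ inside the logarithm).
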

\begin{proof}
Consider $g \in \mathbb{R}^{n}$ such that $g_n = 0$. There are two possibilities. Either $\| g \|_{\infty} = \max_{t} g_t$ or $\| g \|_{\infty} = -\min_{t} g_t$. If $\| g \|_{\infty}  = \max_{t} g_t$, then letting $i$ be an index achieving the maximum in $g$, we have
\begin{equation}
    \label{boundedObjectiveA}
    O(g) \leq  \left(\mathbb{E}_{X \sim P} \| X - Y_i \|_2^2 - g_i \right)-\frac{1}{\lambda}\log(\frac{1}{n} \sum_{t=1}^{n} \exp(-\lambda g_t) ) \leq (B_n -\| g \|_{\infty}) +\frac{\log(n)}{\lambda}.
\end{equation}
If $\| g \|_{\infty} = -\min_{t} g_t$, then we have that
\begin{equation}
    \label{boundedObjectiveB}
    O(g) \leq  \left(\mathbb{E}_{X \sim P} \|X - Y_n \|_2^2\right) - \frac{1}{\lambda} \log(\frac{1}{n} \sum_{t=1}^{n} \exp(-\lambda g_t) ) \leq  B_n - \log(\exp(-\lambda \min_{t \in [n]} g_t) ) \leq  (B_n - \| g \|_{\infty})+\frac{\log(n)}{\lambda}.
\end{equation}
Combining the two cases yields the final result.

\end{proof}

\begin{lemma}
    \label{lem:infoAboutOurObjective}
    With $Y_1,\dots,Y_n \in \mathbb{R}^{m}$, and $P \in \mathcal{P}_2(\mathbb{R}^{m})$ such that $P$ is absolutely continuous with respect to Lebesgue measure, the function $O(g) := \mathbb{E}_{X \sim P} h_1(X,g)$. satisfies the following:
    \begin{enumerate}
        \item $\nabla O(g) = \left( \left( \frac{\exp(-\lambda g_1)}{\sum_{t=1}^{n} \exp(-\lambda g_t)}, \dots, \frac{\exp(-\lambda g_n)}{\sum_{t=1}^{n} \exp(-\lambda g_t)} \right)^T - \left(P(L_1(g)),P(L_2(g)),\dots,P(L_n(g)) \right)^T \right)$ for $g \in \mathbb{R}^{n}$ where $L_j(g) = \{ x : \forall i \in [n], \| x - Y_j \|_2^2 - g_j \leq \| x - Y_i \|_2^2 - g_i\}$ for $j \in [n]$.
        \item $O$ satisfies for any $g,\eta \in \mathbb{R}^{n}$,
        \[
        O(\eta) \leq O(g) + \nabla O(g)^{T}(\eta - g).
        \]
        \item Any $g^*$ achieves the maximium $\max_{g \in \mathbb{R}^{n}} O(g)$ if and only if $\nabla O(g^*) = 0 $. Additionally, there exists at least one such $g^{*}$.
        \item There exists a $g^{*} \in \mathbb{R}^{n}$ achieving the maximum $\max_{g \in \mathbb{R}^{n}} O(g)$ such that $g^{*}_n = 0$. For such a $g^{*}$, if $Y_1,\dots,Y_n \in A \subset \mathbb{R}^{m}$ and $P \in \mathcal{P}_{2}(A)$ and $\sup_{x,y \in A} \| x - y\|_2 \leq D$ where $D > 0$, then
        \[
        \|g^{*} \|_{\infty} \leq  D^2.
        \]
    \end{enumerate}
    
\end{lemma}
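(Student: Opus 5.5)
Part 1 is handled by splitting $h_1(x,g)=z_2(x,g)+z_1(g)$ as in the proof of Lemma \ref{helper:subDifferential}. The smooth part $z_1(g)=-\frac{1}{\lambda}\log(\frac1n\sum_t e^{-\lambda g_t})$ does not depend on $x$, and its gradient is $\pmb{p}_g$. For the nonsmooth part $z_2(x,g)=\min_j(\|x-Y_j\|_2^2-g_j)$, I differentiate under the expectation. The ties between two Laguerre cells $L_i(g)$ and $L_j(g)$ lie in the set $\{x: 2x^T(Y_j-Y_i)=\|Y_j\|^2-\|Y_i\|^2+g_i-g_j\}$. When $Y_i\neq Y_j$ this is a hyperplane, which has Lebesgue measure zero and hence $P$-measure zero. (Duplicate data points can be merged, or the cell tie broken by index; the formula is unchanged.) So for $P$-a.e. $x$ the minimizing index $j^*(x,g)$ is unique and locally constant in $g$. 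At such points $\partial_{g}z_2(x,g)=-\mathbf{e}_{j^*(x,g)}$. Moreover $z_2(x,\cdot)$ is $1$-Lipschitz in $g$ in $\ell_\infty$, and $z_2(x,0)$ is integrable because $P\in\mathcal P_2$. Dominated convergence applied to the difference quotients therefore gives $\nabla\mathbb E z_2(X,g)=-(P(L_1(g)),\dots,P(L_n(g)))$, which is the claimed formula.

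Part 2 follows by taking expectations. Lemma \ref{helper:subDifferential} gives the pointwise inequality $h_1(x,\eta)\le h_1(x,g)+(\pmb p_g-\mathbf e_{j^*(x,g)})^T(\eta-g)$. Integrating this over $X\sim P$ and using Part 1 yields $O(\eta)\le O(g)+\nabla O(g)^T(\eta-g)$.

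Part 3 has two directions. If $\nabla O(g^*)=0$, Part 2 gives $O(\eta)\le O(g^*)$ for every $\eta$, so $g^*$ is a maximizer. Conversely, $O$ is differentiable and concave, so any maximizer has zero gradient. For existence, note that $h_1(x,g+c\mathbf 1)=h_1(x,g)$ for every constant $c$, since the min part drops by $c$ and the log part rises by $c$. It therefore suffices to maximize over $\{g:g_n=0\}$. On that set Lemma \ref{coercivityOfO} gives $O(g)\le B_n-\|g\|_\infty+\log(n)/\lambda$, so the superlevel set $\{O\ge O(0)\}$ is bounded. It is also closed, because $O$ is concave and finite everywhere, hence continuous. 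A maximizer therefore exists on this compact set, and by shift invariance it is a global maximizer. This also gives the first claim of Part 4.

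The \textbf{main obstacle} is the bound $\|g^*\|_\infty\le D^2$ in Part 4. My plan is to use the first-order condition $\pmb p_{g^*,j}=P(L_j(g^*))$ for all $j$. First suppose $g^*_i-g^*_j>D^2$ for some $i,j$. Then for every $x\in A$ we have $\|x-Y_i\|^2-g^*_i< 0\le\|x-Y_j\|^2-g^*_j$, and since $\|x-Y_j\|^2\le D^2$ the difference is strict. Hence $P(L_j(g^*))=0$, while $\pmb p_{g^*,j}>0$, which is a contradiction. So $\max_t g^*_t-\min_t g^*_t\le D^2$. Since $g^*_n=0$ lies between the max and the min, $|g^*_t|\le D^2$ for all $t$. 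The step I would check most carefully is the strictness: the gap exceeds $D^2$ strictly, so $L_j$ lies outside $A$ up to a null set and the argument holds.
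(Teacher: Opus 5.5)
Your route is the paper's proof almost step for step: dominated convergence using a.e.\ uniqueness of the minimizing index, integration of the pointwise supergradient inequality of Lemma \ref{helper:subDifferential}, shift invariance plus Lemma \ref{coercivityOfO} for existence, and an empty-Laguerre-cell contradiction with $\pmb p_{g^*,j}=P(L_j(g^*))$. Two steps need correcting.

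First, your remark that duplicate data points can be merged or tie-broken with ``the formula unchanged'' is false; the lemma genuinely needs $Y_1,\dots,Y_n$ distinct. Suppose $Y_i=Y_j$ and $g_i=g_j$. Then $L_i(g)=L_j(g)$, and $g_i\mapsto\mathbb{E}\,z_2(X,g)$ has right derivative $-P(L_i(g))$ but left derivative $0$, because lowering $g_i$ just hands the minimum to index $j$. So $O$ is not differentiable there whenever $P(L_i(g))>0$. Merging changes $n$ and the log-sum-exp term, hence changes $O$, and tie-breaking does not restore differentiability. Moreover, every maximizer is such a point. If $g^*_i>g^*_j$, index $j$ never attains the minimum, so $\partial_{g_j}O(g^*)=\pmb p_{g^*,j}>0$. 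If $P(L_i(g^*))=0$, the right derivative in $g_i$ is $\pmb p_{g^*,i}>0$. For example, with $n=2$ and $Y_1=Y_2$, $O(g)=\mathbb{E}\|X-Y_1\|_2^2-\max(g_1,g_2)-\frac1\lambda\log\bigl(\frac12\sum_t e^{-\lambda g_t}\bigr)$ is maximized exactly on the non-differentiable diagonal, so Parts 1 and 3 fail there. The paper's proof makes the same implicit assumption when it asserts that the set $S$ of unique minimizers has full measure. So this is a defect of the statement rather than of your route. The fix is to add distinctness as a hypothesis, under which your hyperplane argument is correct, not to claim it can be dropped.

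Second, in Part 4 the chain $\|x-Y_i\|_2^2-g^*_i<0\le\|x-Y_j\|_2^2-g^*_j$ does not follow from $g^*_i-g^*_j>D^2$; the intermediate value must be $-g^*_j$. The correct chain for $x\in A$ is $\|x-Y_i\|_2^2-g^*_i\le D^2-g^*_i<-g^*_j\le\|x-Y_j\|_2^2-g^*_j$. It uses $\|x-Y_i\|_2\le D$ on the left and $\|x-Y_j\|_2^2\ge 0$ on the right; the bound $\|x-Y_j\|_2^2\le D^2$ you cite plays no role. With this repair your pairwise argument is correct and slightly stronger than the paper's. The paper compares only against the anchored index $n$, where $0=-g^*_n$ is the legitimate intermediate value. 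You instead get the normalization-free bound $\max_t g^*_t-\min_t g^*_t\le D^2$ for every maximizer, and $g^*_n=0$ then gives $\|g^*\|_\infty\le D^2$.
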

\begin{proof}
    For the function $Z : \mathbb{R}^{n} \to \mathbb{R}$ satisfying
    \[
    Z(g) := \mathbb{E}_{X \sim P} \min_{j \in [n]}\|X - Y_j \|_2^2 - g_j 
    \]
    Let $\delta \in \mathbb{R}$ and define for $g \in \mathbb{R}^{n}$, $g_{j'}^{\delta} := (g_1,g_2,\dots,g_{j'-1},g_{j'}+\delta,g_{j'+1},\dots,g_{n})$ for $j' \in [n]$. Let $S := \{x \in \mathbb{R}^{m}: \argmin_{j \in [n]} \|x -Y_j\|_2
    ^2 - g_j \text{ is unique} \}$. We have that
    \begin{equation}
        \lim_{\delta \to 0} \frac{\min_{j} \| x - Y_j \|_2^2 -(g_j+\delta \mathbb{I}(j=j')) - \min_{j} \|x - Y_j\|_2^2 - g_j}{\delta} = \begin{cases} 
            -1 & x \in S \cap L_{j'}(g) \\
            0 & x \in S \cap \left(\mathbb{R}^{m} \setminus L_{j'}(g) \right).
        \end{cases}
    \end{equation} 
    The term in the limit is also uniformily over $\delta \in \mathbb{R}$ dominated by $1$, since the difference between the mins is never more in absolute value than $\delta$. Thus by DCT and using that $S$ has Lebesgue measure $1$, we have that
    \begin{equation}
        \begin{split}
        \lim_{\delta \to 0} \frac{ z(g_{j'}^{\delta}) - z(g)}{\delta} = \lim_{\delta \to 0} \mathbb{E}_{X \sim P} \frac{\min_{j} \| X - Y_j \|_2^2 -(g_j+\delta \mathbb{I}(j=j')) - \min_{j} \|X - Y_j\|_2^2 - g_j}{\delta} = & \\
        \mathbb{E}_{X \sim P} \mathbb{I}(X \in S) \lim_{\delta \to 0} \frac{\min_{j} \| X - Y_j \|_2^2 -(g_j+\delta \mathbb{I}(j=j')) - \min_{j} \|X - Y_j\|_2^2 - g_j}{\delta} = & \\
        -P(L_{j'}(g)).
        \end{split}
    \end{equation}
    In particular, for $j' \in [n],$
    \[
    \nabla Z(g)_{j'} = -P(L_{j'}(g)).
    \]
    The gradient of $\log(\frac{1}{n} \sum_{j=1}^{n} \exp(-\lambda g_j))$ is obtainable through elementary calculus. Adding $\nabla Z$ to the gradient of the log-sum-exp term yields (1). \\

    For (2), by Lemma \ref{helper:subDifferential}, for every $x \in \mathbb{R}^{m}$ and $g \in \mathbb{R}^{n}$, we have that for every $\eta \in \mathbb{R}^{n}$,
    \begin{equation}
        \label{subGradConclusion}
        h_1(x,\eta) \leq h_1(x,g) + z(x,g)^T (\eta - g).
    \end{equation}
    where
    \[
    z(x,g) :=  \left( \left( \frac{\exp(-\lambda g_1)}{\sum_{t=1}^{n} \exp(-\lambda g_t) }, \dots, \frac{\exp(-\lambda g_n)}{\sum_{t=1}^{n} \exp(-\lambda g_t)}  \right)-\mathbf{e}_{j^*(x,g)} \right)^T
    \]
    where $j^*(x,g)$ refers to lowest index achieving the min in $\min_{j \in [n]} \|x - Y_j\|_2^2 - g_j$. Note that because $P$ is absolutely continuous with respect to Lebesgue measure, for $j \in [n]$, the $j^{th}$ entry of the random vector $\mathbf{e}_{j^*(X,g)}$ is $1$ with probability equal to the probability that $X \in L_j(g)$, and $0$ otherwise when $X \sim P$. In particular, for $g \in \mathbb{R}^{n}$,
    \[
    \mathbb{E}_{X \sim p} z(X,g) = \left( \frac{\exp(-\lambda g_1)}{\sum_{t=1}^{n} \exp(-\lambda g_t) }, \dots, \frac{\exp(-\lambda g_n)}{\sum_{t=1}^{n} \exp(-\lambda g_t)}  \right)^{T}-(P(L_1(g)),\dots,P(L_n(g))^T.
    \]
    Using this and (1) and Equation \ref{subGradConclusion}, we have that for every $g \in \mathbb{R}^{n}$ and $\eta \in \mathbb{R}^{n}$,
    \begin{equation}
        O(\eta) \leq O(g) + \nabla O(g)^{T}(\eta - g),
    \end{equation}
    concluding (2). For (3), if we suppose that $g \in \mathbb{R}^{n}$ satisfies that $\nabla_{j} O(g) \neq 0$ for some $j$, then by (1), for some $r(\delta) \to 0$ as $\delta \to 0$,
    \[
    \frac{O(g + \mathbf{e}_{j} \delta ) - O(g)}{\delta} - \nabla_{j} O(g)= r(\delta).
    \]
    In particular, choosing $\delta$ to have the same sign as $\nabla_{j} O(g)$ and then taking $\delta$ sufficiently close to zero so that $\delta r(\delta) \geq -\frac{\delta}{2} \nabla_{j} O(g)$, we have that
    \[
    O(g+ \mathbf{e}_{j}\delta) = O(g)+\delta (\nabla_{j} O(g) + r(\delta)) \geq O(g)+ \frac{\delta}{2} \nabla_{j} O(g) \geq O(g).
    \]
    This shows that if $g$ maximizes $O$ on $\mathbb{R}^{n}$, then $\nabla O(g) = 0$. By (2), we also have that if $\nabla O(g) = 0$, then $g$ maximizes $O$ on $\mathbb{R}^{n}$. Thus $g$ maximizes $O$ on $\mathbb{R}^{n}$ if and only if $\nabla O(g) = 0$. All that is left to show for (3) is that there exists at least one maximizer.
    
    Let $\mathcal{G}_{0} := \{ g \in \mathbb{R}^{n} : g_n  = 0\}$. By Lemma \ref{coercivityOfO}, there exists an $M >0$ such that whenever $\| g \|_{\infty} > M$ and $g \in \mathcal{G}_{0}$, $O(g) \leq O(\mathbf{0}_n)$. In particular, if there is a maximizer of $O$ on $\mathbb{R}^{n} \cap \mathcal{G}_{0}$ it is within the set $\mathcal{G}_{0} \cap \{g : \| g \|_{\infty} \leq M\}$. Continuous functions on compact sets achieve their Suprema (Theorem 4.16 of \cite{rudin1976principles}), hence $O$ achieves a local maxima $g^{*}$ on $\mathcal{G}_{0} \cap \{g : \| g \|_{\infty} \leq M\}$ which consequently is maximal on $\mathbb{R}^{n} \cap \mathcal{G}_{0}$. Specifically,
    \[
    O(g^{*}) = \max_{g \in \mathbb{R}^{n} \cap \mathcal{G}_{0}} O(g).
    \]
    Finally note that for any vector $g \in \mathbb{R}^{n}$ and constant $c \in \mathbb{R}$, $O(g) = O(g+c\mathbf{1}_{n})$. This invariance implies for any $g \in \mathbb{R}^{n}$, there is a $g' \in \mathbb{R}^{n} \cap \mathcal{G}_{0}$ such that $O(g) = O(g')$. Thus $\max_{g \in \mathbb{R}^{n}} O(g) \leq \max_{g \in \mathbb{R}{n} \cap \mathcal{G}_{0}} O(g) $. This together with $\mathbb{R}^{n} \cap \mathcal{G}_{0} \subseteq \mathbb{R}^{n}$ implies
    \[
    \max_{g \in \mathbb{R}^{n} \cap \mathcal{G}_{0}} O(g) = \max_{g \in \mathbb{R}^{n}} O(g), 
    \]
    hence a maximizer of $O$ exists on $\mathbb{R}^{n}$. This completes (3).

    For (4), via (3) and the translation invariance, there must exist a $g^{*} \in \mathcal{G}_{0}$ achieving the maximum. Additionally, by (3) this $g^{*}$ must satisfy the zero gradient condition. In particular, for $j \in [n]$,
    \begin{equation}
        \label{firstOrderCondition}
        P(L_j(g^{*})) = \frac{\exp(-\lambda g_j^{*})}{\sum_{t=1}^{n} \exp(-\lambda g_t^{*})}.
    \end{equation}
    Now let $i$ be the index of $g^{*}$ achieving $\| g^{*} \|_{\infty}$. 
    Suppose for sake of contradiction, $\| g^{*} \|_{\infty} > D^2$. Then if $\| g^{*} \|_{\infty} > 0$, for every $x \in A$,
    \[
    \| x - Y_i \|_2^2 - g_i \leq D^2 - D^2 = 0 \leq \| x - Y_n \|_2^2,
    \]
    implying $P(L_n(g^{*})) = 0$, which cannot be true via Equation \ref{firstOrderCondition}. Alternatively if $\| g \|_{\infty} \leq 0$, then for every $x \in A$
    \[
    \| x - Y_n \|_2^2 \leq D^2 \leq - g_i \leq \| x - Y_i \|_2^2 - g_i,
    \]
    implying $P(L_i(g^{*})) = 0$, which cannot be true via Equation \ref{firstOrderCondition}. Thus we conclude that $\| g^{*} \|_{\infty} \leq D^2$.
    
\end{proof}

Although the below fact regarding Kantorovich's functional (see \cite{kitagawa2019convergence} for a deeper dive) is extremely well known, for the sake of completeness we state it.

\begin{lemma}
    \label{kantorovichsFunctional}
    With $Y_1,\dots,Y_n \in \mathbb{R}^{m}$ and $P \in \mathcal{P}_{2}(\mathbb{R}^{m})$ such that $P$ is absolutely continuous with respect to Lebesgue measure, and $w \in \Sigma_n$, then for the function
    \[
    O_2(g) = \mathbb{E}_{X \sim P} \left(\min_{j \in [n]} \|X - Y_j \|_2^2 - g_j\right) + \sum_{t=1}^{n} w_j g_j \text{ (a.k.a Kantorovich's Functional) } 
    \]
    \begin{enumerate}
        \item $\nabla O_2(g) = \left(w_1-P(L_1(g)),\dots,w_n - P(L_n(g)) \right)^T$
        \item Any $g^{*}$ achieves the maximum $\max_{g \in \mathbb{R}^{n}} O_2(g)$ if and only if $\nabla O_2(g) = 0$. Additionally, there exists at least one such $g^{*}$.
        \item $W_2^2(P,\sum_{t=1}^{n} w_t \delta_{Y_t}) = \max_{g \in \mathbb{R}^{n}} O_2(g)$.
    \end{enumerate}
\end{lemma}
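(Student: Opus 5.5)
The three claims are the standard semi-discrete Kantorovich duality, and I will mirror the structure of the proof of Lemma \ref{lem:infoAboutOurObjective}, with the log-sum-exp term replaced by the linear term $\sum_t w_t g_t$.

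\emph{Gradient (1).} I would reuse the directional-derivative computation from part (1) of Lemma \ref{lem:infoAboutOurObjective}. Perturb one coordinate $g_{j'}$ by $\delta$. The difference quotient of $\min_j(\|x-Y_j\|_2^2-g_j)$ is bounded by $1$ in absolute value. On the set $S$ where the argmin is unique, it converges to $-\mathbb{I}(x\in L_{j'}(g))$. Since the complement of $S$ lies in a finite union of hyperplanes $\{x: \|x-Y_i\|^2-g_i=\|x-Y_j\|^2-g_j\}$, it is Lebesgue-null and hence $P$-null. Dominated convergence then gives $\partial_{j'}\mathbb{E}\min = -P(L_{j'}(g))$. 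The linear term contributes $w_{j'}$.

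\emph{Maximizers (2).} The integrand is concave in $g$, being a minimum of affine functions, so $O_2$ is concave. Moreover, the subgradient inequality \eqref{subgradForDiscontinuousPart}, averaged over $X\sim P$, gives $O_2(\eta)\le O_2(g)+\nabla O_2(g)^T(\eta-g)$. Hence a zero gradient implies a global maximum. The converse, that a maximizer has zero gradient, follows by the same one-coordinate perturbation argument as in part (3) of Lemma \ref{lem:infoAboutOurObjective}.

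For existence I would follow the same route as before. First, $O_2(g+c\mathbf{1})=O_2(g)$, so I can restrict to $g_n=0$. Next I need a coercivity bound playing the role of Lemma \ref{coercivityOfO}. Set $B_n$ as in that lemma and suppose $g_n=0$.
\begin{itemize}
\item If $\|g\|_\infty=\max_t g_t=g_i$, bound the minimum by the $i$-th term. This gives $O_2(g)\le B_n-g_i+\sum_t w_t g_t$.
\item If instead $\|g\|_\infty=-\min_t g_t$, bound the minimum by the $n$-th term. This gives $O_2(g)\le B_n+\sum_t w_tg_t$.
\end{itemize}
In the first case the linear term is at most $g_i\sum_t w_t$. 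In the second case the linear term is at most $w_{\min}\cdot(-\|g\|_\infty)+\max_t g_t$.

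This is the main obstacle. With generic $w$ the two cases must be combined, and zero weights allow non-coercivity: if $w_j=0$, sending $g_j\to-\infty$ leaves $O_2$ bounded but not decreasing. My plan is therefore to first assume all $w_j>0$. In that case I would show $O_2(g)\le B_n-c\|g\|_\infty$ on $\{g_n=0\}$ for some $c>0$. I expect this to follow by ordering coordinates and using the variational characterization of the linear term, but I have not checked the combination of cases. Compactness then gives a maximizer. For zero weights, I would clamp those coordinates at a large negative value, which makes the corresponding Laguerre cells empty, and reduce to the positive-weight problem on the remaining support points.

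\emph{Duality (3).} Weak duality comes first. For any coupling $\pi$ of $P$ and $\sum_t w_t\delta_{Y_t}$, and any $g$, the inequality $\min_j(\|x-Y_j\|^2-g_j)\le \|x-Y_k\|^2-g_k$ integrated against $\pi$ gives $O_2(g)\le W_2^2$. For strong duality, take the maximizer $g^*$ from (2). Then $P(L_j(g^*))=w_j$, so the map $T(x)=Y_{j^*(x,g^*)}$ pushes $P$ to $\sum_t w_t\delta_{Y_t}$. Its cost is
\[
\mathbb{E}\|X-T(X)\|^2=\mathbb{E}\min_j(\|X-Y_j\|^2-g^*_j)+\sum_j w_jg^*_j=O_2(g^*).
\]
This matches the upper bound, so $W_2^2(P,\sum_t w_t\delta_{Y_t})=\max_g O_2(g)$.
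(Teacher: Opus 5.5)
\textbf{The zero-weight step fails, and it cannot be repaired, because the claim is false there.} You flagged the zero-weight case of (2) as the main obstacle, and your clamping fix does not work. The Laguerre cell $L_j(g)$ is the intersection over $i\neq j$ of the half-spaces $\{x: 2x^T(Y_i-Y_j)\le \|Y_i\|_2^2-\|Y_j\|_2^2+g_j-g_i\}$. Suppose $Y_j$ is a vertex of $\mathrm{conv}\{Y_1,\dots,Y_n\}$ and $v$ satisfies $v^TY_j>v^TY_i$ for all $i\neq j$. Then $tv$ is an interior point of $L_j(g)$ for all large $t$, however negative $g_j$ is. So if $P$ has an everywhere positive density, $P(L_j(g))>0$ for every $g\in\mathbb{R}^n$. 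When $w_j=0$, your formula (1) then gives $\partial_jO_2(g)=-P(L_j(g))<0$ everywhere, and $O_2$ has no maximizer. Concretely, take $m=1$, $Y_1=0$, $Y_2=1$, $P=\mathrm{Normal}(0,1)$, $w=(1,0)$. Then $O_2(g)=\mathbb{E}\min\bigl(X^2,(X-1)^2-(g_2-g_1)\bigr)<1=W_2^2(P,\delta_0)$ for every $g$, and the value $1$ is approached only as $g_2-g_1\to-\infty$. So existence in (2), and the ``max'' in (3), fail on the boundary of the simplex. Clamping makes $P(L_j(g))=0$ only in special situations, e.g.\ when $P$ has bounded support. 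The correct repair is a hypothesis change: assume $w_j>0$ for all $j$. This is all Theorem \ref{thm:minMaxFlip} needs, since it applies the lemma to softmax weights. For general $w$, state (3) with a supremum; your weak duality plus sending the zero-weight coordinates to $-\infty$ gives that. The paper's one-line proof (``same strategy as Lemma \ref{lem:infoAboutOurObjective}'') glosses over the same point: Lemma \ref{coercivityOfO} gets its coercivity from the log-sum-exp term, which $O_2$ lacks.

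\textbf{Positive weights: the coercivity step is still unchecked.} As written, your first case only yields $O_2(g)\le B_n$. One case suffices. Take $i$ with $g_i=\max_t g_t$ and bound the minimum by its $i$-th term; on $\{g_n=0\}$, where $\max_t g_t\ge 0\ge \min_t g_t$,
\[
O_2(g)\le B_n-\sum_t w_t(g_i-g_t)\le B_n-\bigl(\min_t w_t\bigr)\bigl(\max_t g_t-\min_t g_t\bigr)\le B_n-\bigl(\min_t w_t\bigr)\|g\|_\infty .
\]
With this bound, your part (2) is complete. Your part (3) is then a correct, self-contained alternative to the paper's bare citation: weak duality by integrating $\min_j(\|x-Y_j\|_2^2-g_j)\le\|x-Y_k\|_2^2-g_k$ against a coupling, and strong duality via the map $x\mapsto Y_{j^*(x,g^*)}$, which pushes $P$ to $\sum_t w_t\delta_{Y_t}$ and has cost $O_2(g^*)$.

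\textbf{Hidden hypothesis: distinct points.} Your claim that ties lie on finitely many hyperplanes needs the $Y_j$ to be distinct, as does the paper's use of the unique-argmin set $S$ in Lemma \ref{lem:infoAboutOurObjective}. If $Y_i=Y_j$ for some $i\neq j$ and $g_i=g_j$, the argmin is non-unique on the whole common Laguerre cell. Then $O_2$ is not differentiable there, yet every maximizer has $g_i=g_j$. So (1) and the zero-gradient characterization in (2) require that hypothesis.
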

\begin{proof}
    The first two statements are proved using the same strategy as in the proof of Lemma \ref{lem:infoAboutOurObjective}. For part 3, see for example Chapter 5, Section 1 of \cite{peyre2019computational}.
\end{proof}

\subsection{Proofs of lemmas from section \ref{sec:basicProperties}}



\divergence*
\begin{proof}
The non-negativity follows since $\mathrm{KL}$ and $W_2^2$ are both divergences. Suppose $P_1 = P_2$. Set $Q = P_1$. Then both the $\mathrm{KL}$ and $W_2^2$ terms are zero. Now suppose $\ell_{\lambda}(P_1,P_2) = 0$. Then there exists a sequence of measures $\{Q_i\}_{i=1}^{\infty}$ (all absolutely continuous with respect to $P_1$) such that $\lim_{i \to \infty} \frac{1}{\lambda} \mathrm{KL}(Q_i,P_1)+ W_2^2(P_2,Q_i) = 0$. This implies $\lim_{i \to \infty} \mathrm{KL}(Q_i,P_1) = 0$. By Pinsker's inequality, $\lim_{i \to \infty} \mathrm{TV}(Q_i,P_1) = 0$. By the Portmanteau Lemma (see for example Theorem 13.16 of \cite{klenke2008probability}), $Q_i$ converges weakly to $P_1$. Also $\lim_{i \to \infty} W_2^2(P_2,Q_i) = 0$. Thus, by \cite{villani2009optimal} Theorem 6.9, $Q_i$ converges weakly to $P_2$. Since the weak limit of a sequence of measures in $\mathcal{P}_{2}(\mathbb{R}^{m})$ is unique, $P_1 = P_2$.   
\end{proof}
\smallLam*
\begin{proof}
    Let $\epsilon > 0$. Consider any sequence of measures $\{Q_{\lambda,\epsilon}\}_{\lambda >0}$ such that 
    \begin{equation}
        \label{QlamDef}
    \frac{1}{\lambda}\mathrm{KL}(Q_{\lambda,\epsilon},P_1)+W_2^2(P_2,Q_{\lambda,\epsilon}) \leq  \ell_{\lambda}(P_1,P_2)+\epsilon
    \end{equation}
    By setting $Q = P_1$, for each $\lambda > 0$, $ \ell_{\lambda}(P_1,P_2) \leq W_2^2(P_1,P_2)$. In particular
    \[
    \frac{1}{\lambda} \mathrm{KL}(Q_{\lambda,\epsilon},P_1) \leq \frac{1}{\lambda}\mathrm{KL}(Q_{\lambda,\epsilon},P_1)+W_2^2(P_2,Q_{\lambda,\epsilon}) \leq W_2^2(P_1,P_2)+\epsilon.
    \]
    Thus for every $\lambda > 0$,
    \[
    \mathrm{KL}(Q_{\lambda,\epsilon},P_1) \leq \lambda W_2^2(P_1,P_2)+\lambda \epsilon.
    \]
    Taking $\lambda \to 0$ yields, for every $\epsilon >0$
    \begin{equation}
        \label{Qlim}
    \lim_{\lambda \to 0} \mathrm{KL}(Q_{\lambda,\epsilon},P_1) = 0.
    \end{equation}
    Since $Q_{\lambda,\epsilon}$ and $P_1$ are defined on bounded diameter spaces, $W_2^2(Q_{\lambda,\epsilon},P_1) \leq C \mathrm{TV}(Q_{\lambda,\epsilon},P_1) \leq C \sqrt{\mathrm{KL}(Q_{\lambda,\epsilon},P_1)}$ where the last inequality is due to Pinsker (and here $C>0$ represents some positive constant). Using this and Equation \ref{Qlim}, we conclude that
    \begin{equation}
        \label{w2LimA}
    \lim_{\lambda \to 0} W_2(Q_{\lambda,\epsilon},P_1) = 0.
    \end{equation}
    By triangle inequality for $W_2$
    \[
    W_2(P_1,P_2) - W_2(P_1,Q_{\lambda,\epsilon}) \leq W_2(Q_{\lambda,\epsilon},P_2)
    \]
    Using this and Equation \ref{w2LimA}, we have for every $\epsilon > 0$
    \[
    W_2^2(P_1,P_2) \leq \liminf_ { \lambda \to 0}W_2^2(Q_{\lambda,\epsilon},P_2)
    \]
    Thus for $\epsilon >0$
    \[
    W_2^2(P_1,P_2) \leq \liminf_{\lambda \to 0} \frac{1}{\lambda} \mathrm{KL}(Q_{\lambda,\epsilon},P_1)+W_2^2(Q_{\lambda,\epsilon},P_2) 
    \]
    Finally, applying Equation \ref{QlamDef} for every $\lambda,\epsilon >0$ yields, for every $\epsilon >0$,
    \begin{equation}
        W_2^2(P_1,P_2) \leq \liminf_{\lambda \to 0}  \ell_{\lambda}(P_1,P_2) + \epsilon
    \end{equation}
    This holds for every $\epsilon >0$, and so we have that
    \[
    W_2^2(P_1,P_2) \leq \liminf_{\lambda \to 0} \ell_{\lambda}(P_1,P_2) \leq \limsup_{\lambda \to 0}  \ell_{\lambda}(P_1,P_2) \leq W_2^2(P_1,P_2)
    \]
    where the upper bound again follows by setting $Q = P_1$.
\end{proof}

Before proving the next lemma, we need the following foundational lemma.

\begin{lemma}
    \label{voronoiWeightingIsBest}
    Let $y_1,y_2,\dots,y_n \in \mathbb{R}^{m}$, $P_2 \in \mathcal{P}_2(\mathbb{R}^{m})$ absolutely continuous with respect to Lebesgue Measure,  $\pmb{w}^{*} = (w_1^{*},w_2^{*},\dots,w_n^{*})$ where for $i \in [n]$,
    \[
    w_i^{*} = P_2(x \in \mathbb{R}^{m}: \| x - y_i\|_2^2 \leq \| x - y_j \|_2^2, \forall j \in [n]).
    \]
    Then defining the map $L: \Delta^{n-1} \to \mathbb{R}$ as $L(\pmb{w}) = W_2^2(\sum_{i=1}^{n} w_i \delta_{y_i},P_2)$, we have that
    \[
    \argmin_{\pmb{w} \in \Delta^{n-1}} L(\pmb{w}) = \pmb{w}^{*}
    \]
    and for every $\pmb{w} \neq \pmb{w}^{*}$, $L(\pmb{w}) > L(\pmb{w}^{*})$. Also $L$ is continuous on $\Delta^{n-1}$ with respect to Euclidean distance.
\end{lemma}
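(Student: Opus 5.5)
We prove the lemma by a direct lower bound: every transport plan between $P_2$ and any discrete measure on $\{y_1,\dots,y_n\}$ costs at least the nearest-neighbour integral, with equality only for the Voronoi weights $\pmb{w}^{*}$. Continuity of $L$ is handled separately at the end.

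\textbf{Step 1: lower bound.} Fix $\pmb{w} \in \Delta^{n-1}$ and a coupling $\pi \in \Pi(P_2, \sum_i w_i \delta_{y_i})$. Since $\pi$ is supported on $\mathbb{R}^m \times \{y_1,\dots,y_n\}$,
\[
\mathbb{E}_{(X,Y)\sim\pi}\|X-Y\|_2^2 \;\geq\; \mathbb{E}_{X\sim P_2} \min_{j\in[n]} \|X-y_j\|_2^2 =: c^{*}.
\]
Taking the infimum over $\pi$ gives $L(\pmb{w}) \geq c^{*}$ for every $\pmb{w}$.

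\textbf{Step 2: the bound is attained at $\pmb{w}^{*}$.} Let $V_i = \{x : \|x-y_i\|_2^2 \leq \|x-y_j\|_2^2\ \forall j\}$ be the Voronoi cells. If $y_i = y_k$ for some $i \neq k$, either assume the points are distinct or combine the duplicated weights; I would state this assumption explicitly. For distinct points, each boundary $V_i \cap V_k$ lies in a hyperplane. These hyperplanes are Lebesgue-null, hence $P_2$-null by absolute continuity. So $x \mapsto T(x) = y_{j(x)}$, with $j(x)$ the a.s.\ unique nearest index, is a well-defined map $P_2$-a.s. It pushes $P_2$ forward to $\sum_i w_i^{*}\delta_{y_i}$ and has cost exactly $c^{*}$. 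Hence $L(\pmb{w}^{*}) = c^{*}$, and $\pmb{w}^{*}$ is a minimizer. The same null-boundary fact shows $\sum_i w_i^{*} = 1$, so $\pmb{w}^{*} \in \Delta^{n-1}$.

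\textbf{Step 3: uniqueness (strictness).} This is the main point. Suppose $L(\pmb{w}) = c^{*}$. By compactness of the set of couplings with fixed marginals and lower semicontinuity of the cost (standard existence of optimal plans, \cite{villani2009optimal} Theorem 4.1), an optimal $\pi$ exists, and its cost equals $c^{*}$. Subtracting the two sides of Step 1 gives
\[
\mathbb{E}_\pi\bigl[\|X-Y\|_2^2 - \min_j\|X-y_j\|_2^2\bigr] = 0,
\]
where the integrand is nonnegative. Hence $\pi$-a.s.\ $Y$ is a nearest point to $X$. Since nearest points are $P_2$-a.s.\ unique, $Y = T(X)$ $\pi$-a.s. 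Therefore $w_i = \pi(Y = y_i) = P_2(V_i) = w_i^{*}$, which gives $L(\pmb{w}) > L(\pmb{w}^{*})$ for $\pmb{w} \neq \pmb{w}^{*}$.

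\textbf{Step 4: continuity.} I would use the dual representation from Lemma \ref{kantorovichsFunctional}:
\[
L(\pmb{w}) = \max_g \Bigl[\mathbb{E}_{P_2} \min_j(\|X-y_j\|_2^2 - g_j) + \pmb{w}^T g\Bigr].
\]
This exhibits $L$ as a supremum of affine functions of $\pmb{w}$, so $L$ is convex and finite on the simplex. A finite convex function is automatically continuous on the relative interior, but the boundary of the simplex needs more care, so I would instead argue directly via $W_2$.

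For nearby $\pmb{w}, \pmb{w}'$, there is a coupling of the two discrete measures that moves mass at most $\|\pmb{w}-\pmb{w}'\|_1/2$ at cost at most $\mathrm{diam}(\{y_i\})^2$. Combining this with the triangle inequality for $W_2$ gives
\[
|W_2(\mu_{\pmb{w}},P_2) - W_2(\mu_{\pmb{w}'},P_2)| \leq \mathrm{diam}(\{y_i\}) \sqrt{\|\pmb{w}-\pmb{w}'\|_1},
\]
where $\mu_{\pmb{w}} = \sum_i w_i \delta_{y_i}$. Since $W_2$ is finite on $\mathcal{P}_2$, continuity of $L = W_2^2$ follows.
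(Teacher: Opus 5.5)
Your proof is correct, and it takes a more elementary route than the paper's. The paper computes $L(\pmb{w}^{*})$ through the semi-dual, checking that $\pmb{g}=\mathbf{0}_n$ satisfies the first-order condition of Lemma \ref{kantorovichsFunctional}. It then invokes Brenier's theorem to write each $L(\pmb{w})$ as $\int\|x-T_{\pmb{w}}(x)\|_2^2\,P_2(dx)$ for a transport map $T_{\pmb{w}}$, and compares $T_{\pmb{w}}$ with the nearest-neighbour map on the set $\mathcal{S}_{\pmb{w}}$ where the two differ. You replace both ingredients with three steps: the pointwise bound $\|x-y\|_2^2\ge\min_j\|x-y_j\|_2^2$, valid under any coupling; its attainment by the explicit Voronoi map; and strictness from the equality case of an optimal plan. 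So you need only existence of optimal couplings, not Brenier or duality.

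Your $\pi$-a.s.\ argument is also tighter at the one place where the paper's is loose. The paper infers positivity of the integral from $\mathcal{S}_{\pmb{w}}$ having positive \emph{Lebesgue} measure, but $P_2\ll$ Lebesgue does not turn that into positive $P_2$-measure. The repair is to note directly that $P_2(\mathcal{S}_{\pmb{w}})>0$ when $\pmb{w}\neq\pmb{w}^{*}$.

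For continuity, the paper argues qualitatively: convergent weights give weak convergence with convergent second moments, hence $W_2$-convergence by \cite{villani2009optimal} Theorem 6.9. Your total-variation coupling instead gives an explicit H\"older-$\tfrac12$ modulus for $\pmb{w}\mapsto W_2(\mu_{\pmb{w}},P_2)$.

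Your distinctness caveat is warranted, and the paper does not address it. If $y_i=y_k$ with $i\neq k$, the two Voronoi cells coincide. Whenever that cell has positive $P_2$-mass, $\sum_i w_i^{*}>1$, so $\pmb{w}^{*}\notin\Delta^{n-1}$, and any split of that mass between the duplicates is also a minimiser. The lemma should be read for distinct points, or with duplicated atoms merged.
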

\begin{proof}
    The first order optimality condition of the semi-dual representation of $W_2^2(\sum_{i=1}^{n} w_i^{*} \delta_{y_i},P_2) = \max_{\pmb{g} \in \mathbb{R}^{n}}  \int_{x \in \mathbb{R}^{m}} \min_{j \in [n]} \left( \| x - y_j \|_2^2 - g_j \right)P_2(dx)+ \sum_{j=1}^{n} w_j^{*} g_j$ is that 
    \[
    w_i^{*} = P_2(L_i(\pmb{g}))
    \]
    where $\pmb{g} \in \mathbb{R}^{n}$ and $L_j$ is the $j^{th}$ Laguerre cell of $\pmb{g}$ (see for example \cite{peyre2019computational} Section 5.2). By definition of $w_i^{*}$ and the Laguerre cells, this implies an optimal $\pmb{g}$ is the zero vector. Plugging this $\pmb{g}$ into the semi-dual yields
    \[
    L(\pmb{w}^{*}) = W_2^2(\sum_{i=1}^{n} w_i^{*} \delta_{y_i},P_2) = \int_{x \in \mathbb{R}^{m}} \| x - N(x) \|_2^2 P_2(dx)
    \]
    where $N$ is the map that takes $x$ to its nearest neighbor in the set $y_1,y_2,\dots,y_n$. By Bernier's theorem (see \cite{chewi2025statistical} Theorem 1.8), for every $\pmb{w} \in \Delta^{n-1}$, there exists a map $T_{\pmb{w}}: \mathbb{R}^{m} \to \{y_1,y_2,\dots,y_n\}$ satisfying $P_2(x \in \mathbb{R}^{m}: T(x) = y_j) = w_j$ for each $j \in [n]$ and
    \[
    L(\pmb{w}) = W_2^2(\sum_{i=1}^{n} w_i \delta_{y_i},P_2) = \int_{x \in \mathbb{R}^{m}} \| x - T_{\pmb{w}}(x) \|_2^2 P_2(dx).
    \]
    Thus for every $\pmb{w} \in \Delta^{n-1}$,
    \[
    L(\pmb{w}) - L(\pmb{w}^{*}) = \int_{x \in \mathbb{R}^{m}} \| x - T_{\pmb{w}}(x) \|_2^2 - \| x - N(x) \|_2^2 P_2(dx)
    \]
    Now suppose there exists a $\pmb{w} \neq \pmb{w}^{*}$ such that the Lebesgue measure of $\mathcal{S}_{\pmb{w}} = \{x \in \mathbb{R}^{m}: N(x) \neq T_{\pmb{w}}(x) \} = 0$. Then since $P_2$ is absolutely continuous with respect to Lebesgue measure, for every $j$, $w_j = P_2(x \in \mathbb{R}^{m}: T(x) = y_j) = P_2(x \in \mathbb{R}^{m}: N(x) = y_j) = w_j^{*}$. This contradiction implies that if $\pmb{w} \neq \pmb{w}^{*}$, then the lebesgue measure of $\mathcal{S}_{\pmb{w}}$ is positive. In particular for $\pmb{w} \neq \pmb{w}^{*}$
    \begin{equation}
        L(\pmb{w}) - L(\pmb{w}^{*}) = \int_{x \in \mathcal{S}_{\pmb{w}}} \| x - T_{\pmb{w}}(x) \|_2^2- \| x - N(x) \|_2^2 P_2(dx) >0
    \end{equation}
    where the $>0$ is since $P_2$ has is absolutely continuous with respect to Lebesgue measure, $\mathcal{S}_{\pmb{w}}$ has positive Lebesgue measure, and $N$ is nearest neighbor assignment, which implies that $\|x - T_{\pmb{w}}(x) \|_2^2 > \|x - N(x) \|_2^2$ for every $x \in \mathcal{S}_{\pmb{w}}$.

    Finally, regarding continuity of $L$ at $\pmb{w}^{*}$, first note for a sequence $\{\pmb{w}_{\ell} \}_{\ell >0} \in \Delta^{n-1}$ and $\pmb{\mu} \in \Delta^{n-1}$
    \begin{equation}
        \label{continuityOfLpenultimate}
    |W_2(\sum_{i=1}^{n} w_{\ell,i} \delta_{y_i}, P_2) - W_2(\sum_{i=1}^{n} \mu_{i} \delta_{y_i},P_2)| \leq W_2(\sum_{i=1}^{n} w_{\ell, i} \delta_{y_i} , \sum_{i=1}^{n} \mu_{i} \delta_{y_i})
    \end{equation}
    Now note that for any continuous and bounded function $f: \mathbb{R}^{m} \to \mathbb{R}$, if $\| \pmb{w}_{\ell} - \pmb{\mu} \|_2 \to 0$ as $\ell \to \infty$, then
    \[
    |\sum_{i=1}^{n} w_{\ell,i} f(y_i) - \sum_{i=1}^{n} \mu_i f(y_i)| \leq C \| \pmb{w}_{\ell} - \pmb{\mu} \|_1 \leq nC \| \pmb{w}_{\ell} - \pmb{\mu} \|_2 \to 0.
    \]
    as $\ell \to \infty$ where $C$ is the upper bound on $f$. Additionally, if $\| \pmb{w}_{\ell}-\pmb{\mu} \|_2 \to 0$ as $\ell \to \infty$, then
    \[
    |\sum_{i=1}^{n} w_{\ell,i} \| y_i \|_2^2 - \sum_{i=1}^{n} \mu_i \| y_i \|_2^2 | \leq n \max_{i} \| y_i \|_2^2 \| \pmb{w}_{\ell} - \pmb{\mu} \|_2 \to 0.
    \]
    as $\ell \to \infty$. This establishes that the sequence of probability measures $\sum_{i=1}^{n} w_{\ell,i} \delta_{y_i}$ converges weakly in $\mathcal{P}_2(\mathbb{R}^{m})$ to $\sum_{i=1}^{n} \mu_i \delta_{y_i}$ as $\ell \to \infty$.
    By \cite{villani2009optimal} Theorem 6.9, we thus have that $W_2(\sum_{i=1}^{n} w_{\ell,i} \delta_{y_i}, \sum_{i=1}^{n} \mu_i \delta_{y_i}) \to 0$ as $\ell \to \infty$ whenever $\| \pmb{w}_{\ell} - \pmb{\mu} \|_2 \to 0$ as $\ell \to \infty$. By Equation \ref{continuityOfLpenultimate} and continuity of the square function, we conclude that whenever $\| \pmb{w}_{\ell} - \pmb{\mu} \|_2 \to 0$ as $\ell \to \infty$ for a sequence $\{\pmb{w}_{\ell} \}_{\ell >0} \in \Delta^{n}$, then
    \[
    |L(\pmb{w}_{\ell,i}) - L(\pmb{\mu})| = |W_2^2(\sum_{i=1}^{n} w_{\ell,i} \delta_{y_i}, P_2) - W_2^2(\sum_{i=1}^{n} \mu_{i} \delta_{y_i},P_2)| \to 0
    \]
    as $\ell \to \infty$. In particular since $\pmb{\mu} \in \Delta^{n-1}$ was arbitrary, $L$ is continuous on $\Delta^{n-1}$ with respect to Euclidean distance.
\end{proof}

\largeLamDiscrete*
\begin{proof}
    Let $Q^{*}_{\infty}$ be the probability measure supported on $y_1,y_2,\dots,y_n$ with weights $w_1^{*},\dots,w_{n}^{*}$ satisfying for $j \in [n]$,
    \[
    w_j^{*} := P_2 \left( \{x \in \mathbb{R}^{m} : \| x - y_j \|_2^2 \leq \| x - y_t \|_2^2, \forall t \} \right)
    \]
    Let $\epsilon > 0$. For $\lambda >0$, by the definition of $\ell_{\lambda}$, there exists a $\pmb{w}_{\lambda,\epsilon} = (w_{1,\lambda,\epsilon},\dots,w_{n ,\lambda,\epsilon}) \in \Delta^{n-1}$ such that the probability measure $Q_{\lambda,\epsilon} = \sum_{i=1}^{n} w_{i ,\lambda,\epsilon} \delta_{y_i}$ satisfies
    \begin{equation}
    \label{orderingForLargeLamDiscrete}
        \begin{split}
     W_2^2(P_2,Q_{\lambda,\epsilon}) \leq & \\
    \frac{1}{\lambda} \mathrm{KL}(Q_{\lambda,\epsilon},P_n) +  W_2^2(P_2,Q_{\lambda,\epsilon}) \leq & \\
    \ell_{\lambda}(P_n,P_2) + \epsilon \leq & \\
    \frac{1}{\lambda} \mathrm{KL}(Q^{*}_{\infty},P_n) +  W_2^2(P_2,Q^{*}_{\infty})+\epsilon
        \end{split}
    \end{equation}
    By Lemma \ref{voronoiWeightingIsBest}, we additionally have that
    \begin{equation}
        \label{tempString}
    W_2^2(P_2,Q_{\infty}^{*}) \leq \liminf_{\lambda \to \infty} W_2^2(P_2,Q_{\lambda,\epsilon}) \leq \limsup_{\lambda \to \infty} W_2^2(P_2,Q_{\lambda,\epsilon}) \leq \limsup_{\lambda \to \infty} W_2^2(P_2,Q_{\lambda,\epsilon}) + \frac{1}{\lambda} \mathrm{KL}(Q_{\lambda,\epsilon},P_n).
    \end{equation}
    Now applying Equation \ref{orderingForLargeLamDiscrete}, we further have that
    \begin{equation}
        \label{tempString2}
        \begin{split}
        \limsup_{\lambda \to \infty} W_2^2(P_2,Q_{\lambda,\epsilon}) + \frac{1}{\lambda} \mathrm{KL}(Q_{\lambda,\epsilon},P_n) \leq \liminf_{\lambda \to \infty} \ell_{\lambda}(P_n,P_2)+\epsilon \leq \limsup_{\lambda \to \infty} \ell_{\lambda}(P_n,P_2) + \epsilon \leq & \\
        \liminf_{\lambda \to \infty} \frac{1}{\lambda} \mathrm{KL}(Q_{\infty}^{*},P_n)+W_2^2(P_2,Q_{\infty}^{*}) = & \\
        W_2^2(P_2,Q_{\infty}^{*})+\epsilon
        \end{split}
    \end{equation}
    Combining equations \ref{tempString} and \ref{tempString2} yields for 
    $\epsilon >0$
    \begin{equation}
        W_2^2(P_2,Q^{*}_{\infty}) \leq \liminf_{\lambda \to \infty} \ell_{\lambda}(P_n,P_2) + \epsilon \leq \limsup_{\lambda \to \infty} \ell_{\lambda}(P_n,P_2) + \epsilon \leq W_2^2(P_2,Q_{\infty}^{*})+\epsilon.
    \end{equation}
    Since this holds for every $\epsilon >0$ we conclude
    \[
    \lim_{\lambda \to \infty} \ell_{\lambda}(P_n,P_2) = W_2^2(P_2,Q_{\infty}^{*}).
    \]
    Finally note that the optimal $g$ in the kantorovich's optimization for the problem $W_2^2(P_2,Q^{*}_{\infty})$ is $g = \mathbf{0}_{n}$.

\end{proof}

\RobustnessUnderContaminationModel*
\begin{proof}
First note that both $Q \ll P$ and $F \ll P$ due to the the definition of $P$. Let $q = \frac{dQ}{dP}$ and $f = \frac{dF}{dP}$ be the Radon-Nikodym derivatives of $Q$ and $F$ with respect to $P$. For every $P$ measurable set $A \subset \mathbb{R}^{m}$, we have
\begin{equation}
    \mathbb{E}_{X \sim P} \mathbb{I}(X \in A) = (1-\epsilon) \int_{A}q(x) P(dx) + \epsilon \int_{A} f(x) P(dx).
\end{equation}
Thus for every $P$ measurable $A \subseteq \mathbb{R}^{m}$,
\[
\int_{A} 1 - \left((1-\epsilon) q(x) +  \epsilon f(x) \right) P(dx) = 0.
\]
Hence we conclude $P$ almost surely
\[
1 = (1-\epsilon) q(x) + \epsilon f(x).
\]
In particular, $P$ almost surely,
\begin{equation}
    q(x) = \frac{1 - \epsilon f(x)}{1-\epsilon} \leq \frac{1}{1-\epsilon}
\end{equation}
Since $q = \frac{dQ}{dP}$, by defintion of $\mathrm{KL}$, we have that
\begin{equation}
    \mathrm{KL}(Q,P) \leq -\log(1-\epsilon) = \epsilon + \frac{1}{2\zeta^2} \epsilon^2.
\end{equation}
where for the last inequality we used Taylor's theorem and $\zeta \in [1-\epsilon,1]$. Since $\epsilon \leq 1 - \frac{1}{\sqrt{2}}$, we further have
\[
\mathrm{KL}(Q,P) \leq \epsilon + \epsilon^{2}.
\]
Thus 
\begin{equation}
    \label{robustnessB}
\ell_{\lambda}(P,P_{\theta^{*}}) \leq \frac{1}{\lambda} \mathrm{KL}(Q,P) + W_2^2(Q,P_{\theta^{*}}) \leq \frac{\epsilon+\epsilon^2}{\lambda}+\rho^2.
\end{equation}
which also proves $\theta^{*} \in \Omega$. To show $\theta_{\lambda,\eta} \in \Omega$, furthermore note that
\begin{equation}
\label{robustnessPartA}
\inf_{\theta \in \Theta} \ell_{\lambda}(P,P_{\theta}) \leq \ell_{\lambda}(P,P_{\theta^{*}}) \leq \frac{\epsilon+\epsilon^2}{\lambda}+\rho^{2} + \inf_{\theta \in \Theta} \ell_{\lambda}(P,P_{\theta}).
\end{equation}
In this last line we used Lemma \ref{prop:positiveDefinite}, which guarantees $\inf_{\theta \in \Theta} \ell_{\lambda}(P,P_{\theta}) \geq 0$. What is left to show is that $\theta_{\lambda,\eta} \in \Omega$. By definition \ref{def:MinDivergenceEstimator}, we have for $\eta > 0$ and by Equations \ref{robustnessB} and \ref{robustnessPartA},
\begin{equation}
    \ell_{\lambda}(P,P_{\theta_{\lambda,\eta}}) \leq \inf_{\theta \in \Theta} \ell_{\lambda}(P,P_{\theta}) + \eta \leq \frac{\epsilon+\epsilon^2}{\lambda}+\rho^2 + \eta.
\end{equation}
Multiplying by $\lambda$ and using the definition of $\ell_{\lambda}(P,P_{\theta_{\lambda,\eta}})$, it follows $\theta_{\lambda,\eta} \in \Omega$.
\end{proof}

\subsection{Proofs of lemmas from section \ref{sec:algorithms}}

We'll use a variational identity for the Kullback-Liebler divergence. 
\begin{lemma}
    \label{variationalIdentity}
    Let $\pmb{y} = (Y_1,\dots,Y_n) \in \mathbb{R}^{n}$, $\pmb{f} = (f_1,\dots,f_n) \in \mathbb{R}^{n}$. Then 
    \[
    \log \left( \sum_{j=1}^{n} \frac{1}{n} \exp(f_j) \right) = \sup_{\pmb{w} \in \Sigma_n} \left( \sum_{j=1}^{n} f_j w_j - \mathrm{KL}(\sum_{i=1}^{n} \delta_{Y_j} w_j, \sum_{j=1}^{n} \frac{1}{n} \delta_{Y_j}) \right)
    \]
    Additionally, the suprema is achieved with
    \[
    w_j^{*} := \frac{\exp(f_j)}{\sum_{t=1}^{n} \exp(f_j)}
    \]
    for $j \in [n]$.
\end{lemma}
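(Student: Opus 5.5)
This is the Gibbs variational principle (Donsker--Varadhan) on a finite set, and the plan is a direct computation. Throughout, write $\Sigma_n$ for the probability simplex $\Delta^{n-1}$. For $\pmb{w} \in \Sigma_n$ we have
\[
\mathrm{KL}\Bigl(\sum_j w_j \delta_{Y_j}, \sum_j \tfrac1n \delta_{Y_j}\Bigr) = \sum_j w_j \log(n w_j),
\]
using the convention $0\log 0 = 0$. Here we treat the points $Y_j$ as distinct, or equivalently identify the measure with its weight vector, as the statement intends. The objective is therefore the explicit function
\[
\Phi(\pmb{w}) = \sum_j f_j w_j - \sum_j w_j \log(n w_j)
\]
on $\Sigma_n$.

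Define the Gibbs weights $w_j^* = e^{f_j}/Z$ with $Z = \sum_t e^{f_t}$; this is clearly the intended normalization in the statement. The key step is the algebraic identity, valid for every $\pmb{w} \in \Sigma_n$:
\[
\Phi(\pmb{w}) = \log\Bigl(\tfrac1n Z\Bigr) - \mathrm{KL}(\pmb{w}, \pmb{w}^*).
\]
To see it, write $f_j = \log w_j^* + \log Z$. Then
\[
\sum_j w_j f_j - \sum_j w_j \log(n w_j) = \log Z - \log n - \sum_j w_j \log\frac{w_j}{w_j^*},
\]
where we used $\sum_j w_j = 1$. Terms with $w_j = 0$ contribute zero on both sides, and $w^*_j > 0$ for all $j$, so the KL term is finite.

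Since $\mathrm{KL}(\pmb{w},\pmb{w}^*) \ge 0$, with equality if and only if $\pmb{w} = \pmb{w}^*$ (Gibbs' inequality, i.e.\ Jensen applied to the concave function $\log$), it follows that $\Phi(\pmb{w}) \le \log\bigl(\sum_j \tfrac1n e^{f_j}\bigr)$. Equality holds exactly at $\pmb{w}^*$, so the supremum is achieved there and equals the left-hand side.

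There is no real obstacle here. The only delicate points are boundary weights with $w_j = 0$, which are handled by the convention above, and making sure $\pmb{w}^*$ lies in $\Sigma_n$, which is immediate. As an alternative, a Lagrange-multiplier argument on the strictly concave $\Phi$ gives the same maximizer, but the KL-decomposition identity is cleaner because it proves optimality and uniqueness in a single step.
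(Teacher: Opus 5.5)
Your proof is correct and is essentially the paper's argument: the paper also rewrites the objective as $\log(H/n) - \mathrm{KL}(\pmb{w},\pmb{w}^{*})$ with $H=\sum_t e^{f_t}$ and concludes from nonnegativity of KL, verifying attainment at $\pmb{w}^{*}$ by direct substitution. Your explicit conventions (identifying the measures with their weight vectors, i.e.\ treating the $Y_j$ as distinct, and normalizing by $\sum_t e^{f_t}$) are used only implicitly in the paper's proof, and the first one genuinely matters: with repeated points the KL between the measures can be strictly smaller than $\sum_j w_j\log(nw_j)$.
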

\begin{proof}
Let $H = \sum_{t=1}^{n} \exp(f_j)$. For any $\pmb{w} \in \Sigma_n$, using the definition of $w_j^*$, we have
\begin{equation}
    \begin{split}
    \mathrm{KL}(\sum_{j=1}^{n} \delta_{Y_j} w_j, \sum_{j=1}^{n} \delta_{Y_j} w_j^{*}) - \mathrm{KL}(\sum_{j=1}^{n} \delta_{Y_j} w_j, \sum_{i=1}^{n} \delta_{Y_j} \frac{1}{n}) = & \\
    \sum_{j=1}^{n} w_j \left(\log(\frac{w_j}{w_j^*}) - \log(n w_j)\right) = & \\
    \sum_{j=1}^{n} w_j \left( \log(w_j) - f_j + \log(H) -\log(n) - \log(w_j)\right) = & \\
    - \sum_{j=1}^{n} w_j f_j + \log(\frac{H}{n})
    \end{split}
\end{equation}
Rearranging terms yields that for any $\pmb{w} \in \Sigma_n$,
\begin{equation}
    \label{upperBoundonSup}
\sum_{j=1}^{n} w_j f_j - \mathrm{KL}(\sum_{j=1}^{n} \delta_{Y_j} w_j, \sum_{i=1}^{n} \delta_{Y_j} \frac{1}{n}) = \log(\frac{H}{n}) - \mathrm{KL}(\sum_{j=1}^{n} \delta_{Y_j} w_j, \sum_{j=1}^{n} \delta_{Y_j} w_j^{*}) \leq \log(\frac{H}{n}).
\end{equation}
Finally note that again using the definition of $w_j^*$,
\begin{equation}
    \label{supAchievement}
    \begin{split}
    \sum_{j=1}^{n} w_j^{*} f_j - \mathrm{KL}(\sum_{j=1}^{n} \delta_{Y_j} w_j^{*}, \sum_{i=1}^{n} \delta_{Y_j} \frac{1}{n}) = \sum_{j=1}^{n} w_j^* f_j - \sum_{j=1}^{n} w_j^{*} \log(n w_j^{*}) = & \\
    \sum_{j=1}^{n} w_j^{*} f_j - \sum_{j=1}^{n} w_j^{*} \left( f_j - \log(\frac{H}{n}) \right) = & \\
    \log(\frac{H}{n}).
    \end{split}
\end{equation}
Equations \ref{upperBoundonSup} and \ref{supAchievement} complete the proof.
\end{proof}

\minMaxFlip*
\begin{proof}
Using the dual expression for semi-discrete Optimal Transport (see for example \cite{peyre2019computational} Chapter 5), we have that
\begin{equation}
    \begin{split}
        \ell_{\lambda}(P_n,P) = \inf_{\pmb{w} \in \Sigma_n} \sup_{\pmb{g} \in \mathbb{R}^{n}}  \mathbb{E}_{X \sim P} \min_{j} \| X - Y_j \|_2^2 - g_j) + \sum_{j=1}^{n} w_j g_j + \frac{1}{\lambda} \mathrm{KL}(\sum_{j=1}^{n} \delta_{Y_j} w_j, \sum_{j=1}^{n} \delta_{Y_j} \frac{1}{n})
    \end{split}
\end{equation}
The expression is concave in $\pmb{g}$ and convex in $\pmb{w}$. Applying Sion's min-max theorem \citep{komiya1988elementary}, we can flip the min with max. This yields
\begin{equation}
    \label{dualConnectionViaSion}
    \begin{split}
    \ell_{\lambda}(P_n,P) = & \\
    \sup_{\pmb{g} \in \mathbb{R}^n}  \mathbb{E}_{X \sim P} (\min_{j} \| X - Y_j \|_2^2 - g_j) + \inf_{\pmb{w} \in \Sigma_n} \sum_{j=1}^{n} w_j  g_j + \frac{1}{\lambda} \mathrm{KL}(\sum_{j=1}^{n} \delta_{Y_j} w_j, \sum_{j=1}^{n} \delta_{Y_j} \frac{1}{n}) = & \\
    \sup_{\pmb{g} \in \mathbb{R}^n} \mathbb{E}_{X \sim P} (\min_{j} \| X - Y_j \|_2^2 - g_j) - \frac{1}{\lambda} \left(\sup_{\pmb{w} \in \Sigma_n} \sum_{j=1}^{n} w_j (- \lambda g_j) -  \mathrm{KL}(\sum_{j=1}^{n} \delta_{Y_j} w_j, \sum_{j=1}^{n} \delta_{Y_j} \frac{1}{n}) \right) = & \\
    \sup_{\pmb{g} \in \mathbb{R}^{n}}  \mathbb{E}_{X \sim P}  \left(  \min_{j} (\| X - Y_j \|_2^2 - g_j) - \frac{1}{\lambda} \log(\frac{1}{n} \sum_{t=1}^{n} \exp(-\lambda g_t) ) \right) = & \\
    \sup_{\pmb{g} \in \mathbb{R}^{n}} \mathbb{E}_{X \sim P} h_1(X,g).
    \end{split}
\end{equation}
where in the second last equality we used Lemma \ref{variationalIdentity}. This proves Equation \ref{dualExpectation}. 
By Equation \ref{dualConnectionViaSion} and Lemma \ref{lem:infoAboutOurObjective} part (1) and (3), we have that there exist $g^{*} \in \mathbb{R}^{n}$ achieving the suprema in Equation \ref{dualExpectation} and any such $g^{*}$ must satisfies the gradient zero condition, which is the set of $n$ constraints
\begin{equation}
    \label{fixedPointEquation}
    P(L_j(g^{*})) = \frac{\exp(-\lambda g_j^{*})}{\sum_{t=1}^{n} \exp(-\lambda g^{*}_t)} \text{ for } j \in [n].
\end{equation}
What is left to show is Equation \ref{optWeightVectorExtraction}. Defining for $j \in [n], w^{*}_j = \frac{\exp(-\lambda g_j^{*})}{\sum_{t=1}^{n} \exp(-\lambda g_t^{*})}$ and $\hat{Q}_{\lambda} = \sum_{t=1}^{n} w_t^{*} \delta_{Y_t}$, Lemma \ref{kantorovichsFunctional} (1-3) and Equation \ref{fixedPointEquation} implies that
\begin{equation}
    \label{defOfW2}
    W_2^2(P,\hat{Q}_{\lambda}) = \mathbb{E}_{X \sim P} \left(\min_{j \in [n]} \| X - Y_j \|_2^2 - g_j^{*}\right) + \sum_{t=1}^{n} g_j^{*} w_j^{*}.
\end{equation}
In particular,
\begin{equation}
    \begin{split}
     \frac{1}{\lambda}\mathrm{KL}(\hat{Q}_{\lambda},P_n) +  W_2^2(P,\hat{Q}_{\lambda}) = & \\
     \mathbb{E}_{X \sim P} \left(\min_{j \in [n]} \| X - Y_j \|_2^2 - g_j^{*}\right) - \sum_{t=1}^{n} w_j^{*} (-  g_j^{*}) +  \frac{1}{\lambda} \mathrm{KL}(\hat{Q}_{\lambda},P_n)  =& \\
     \mathbb{E}_{X \sim P} \left(\min_{j \in [n]} \| X - Y_j \|_2^2 - g_j^{*}\right) -  \frac{1}{\lambda}\left(  \sum_{t=1}^{n} w_j^{*} (-\lambda g_j^{*}) - \mathrm{KL}(\sum_{t=1}^{n} w_t^{*} \delta_{Y_t},\sum_{t=1}^{n} \frac{1}{n} \delta_{Y_t}) \right)  \overset{\text{ Lemma \ref{variationalIdentity}} }{=} & \\
     \mathbb{E}_{X \sim P} \left(\min_{j \in [n]} \| X - Y_j \|_2^2 - g_j^{*}\right) - \frac{1}{\lambda} \log\left( \sum_{t=1}^{n} \frac{1}{n} \exp(-\lambda g_t^{*}) \right) \overset{\text{Def of }h_1}{=} & \\
    \mathbb{E}_{X \sim P} h_1(X,g^{*}) = & \\
    \ell_{\lambda}(P_n,P).
    \end{split}
\end{equation}
where the last equality is by Equation \ref{dualExpectation} and the definition of $g^{*}$.

\end{proof}

\sgaConvergence*
\begin{proof}

For an arbitrary $B >0$ to be specified later, we let $\gamma_{t} = B \sqrt{\frac{n}{t}}$ for $t \geq 1$. Additionally, let $g_{t-1}$ be the value of dual potential at the end of iteration $t$. By Theorem \ref{thm:minMaxFlip}, $\ell_{\lambda}(P_n,P) = \max_{g \in \mathbb{R}^{n}}  \mathbb{E}_{X \sim P} h_1(X,g)$. Applying Lemma \ref{lem:infoAboutOurObjective} part (4), there exists a $g^{*} \in \mathbb{R}^{n}$ achieving the maximum such that $\| g^{*} \|_{\infty} \leq D^2$. Now let $X_{t} \sim P$ be the sample from $P$ at the $t^{th}$ iteration of the algorithm. By Lemma \ref{helper:subDifferential}, $h_1$ is concave in its second argument. In particular, with $P$ probability $1$, for any subgradient in the second argument of $h_1$, denoted $\nabla_{g} h_1(X_{t},\cdot)$, we have
\begin{equation}
    \label{concavityUsage}
    h_1(X_{t},g_{t-1})+\nabla_{g} h_1(X_t,g_{t-1})^T(g^{*} - g_{t-1}) \geq h_1(X_t,g^{*}).
\end{equation}
In particular,
\begin{equation}
    \label{concavityUsage2}
    h_1(X_{t},g_{t-1})-h_1(X_{t},g^{*}) \geq \nabla_{g} h_1(X_{t},g_{t-1})^T(g_{t-1}-g^{*}).
\end{equation}
Taking the expectation on both sides and using the definition of $g^{*}$ and dividing by $\lambda$, we have
\begin{equation}
    \label{concavityUsage3}
    0 \geq \mathbb{E}_{X_t \sim P} h_1(X_{t},g_{t-1}) -  \ell_{\lambda}(P_n,P) \geq  \mathbb{E}_{X_t \sim P} \nabla_{g} h_1(X_t,g_{t-1})^{T}(g_{t-1}-g^{*}).
\end{equation}
By Lemma \ref{helper:subDifferential} and since $P$ is absolutely continuous with respect to Lebesgue measure, with $P$ probability $1$,
\begin{equation}
    g_{t} = g_{t-1}+\gamma_{t}  \nabla_{g} h_1(X_t,g_{t-1})
\end{equation}
for some subgradient $\nabla_{g} h_1$ (which is explicitly given in the algorithm), where $\gamma_{t}$ is the learning rate at iteration $t$. Expanding the square, we have that for $t \geq 1$,
\begin{equation}
    \label{gradientRecursion}
     \mathbb{E}_{P} \| g_t - g^{*} \|_2^2 = \mathbb{E}_{P} \| g_{t-1} - g^{*} \|_2^2 + 2 \gamma_t  \mathbb{E}_{X_{t} \sim P} \nabla_g h_1(X_t,g_{t-1})^T(g_{t-1}-g^{*})+ \mathbb{E}_{X_t \sim P} \gamma_t^2  \| \nabla_g h_1(X_t,g_{t-1}) \|_2^2.
 \end{equation}
 With $P$ probability $1$, the subgradient used in defining $g_{t}$ is the difference of two probability vectors. Hence,
 \[
 \mathbb{E}_{X_t \sim P} \gamma_{t}^{2}  \| \nabla_{g} h_1(X_t,g_{t-1}) \|_2^2 \leq 4 \gamma_t^2.
 \]
 Using this, and Equations \ref{concavityUsage3} and \ref{gradientRecursion}, we have that
 \begin{equation}
    \mathbb{E}_{P} \| g_t - g^{*} \|_2^2 - \mathbb{E}_{P} \| g_{t-1} - g^{*} \|_2^2 \leq 2 \gamma_{t} \left( \mathbb{E}_{X_t \sim P} h_1(X_t,g_{t-1}) -  \ell_{\lambda}(P_n,P) \right) + 4 \gamma_t^2.
 \end{equation}
 Rearranging terms yields, for $t \geq 1$,
 \begin{equation}
     0 \leq \gamma_{t} \left( \ell_{\lambda}(P_n,P) -  \mathbb{E}_{X_t \sim P} h_1(X_{t},g_{t-1})  \right) \leq \frac{1}{2} \left( \mathbb{E}_{P} \| g_{t-1} - g^{*} \|_2^2 - \mathbb{E}_{P} \| g_{t} - g^{*} \|_2^2 \right) + 2 \gamma_{t}^{2}. 
 \end{equation}

 Summing the above inequality over $t \in \{1,2,\dots,s\}$ for $s \geq 1$ and then dividing by $\sum_{t=1}^{s} \gamma_t$ yields
 \begin{equation}
     \label{penultimateEqn}
     0 \leq \mathbb{E}_{P} \left(  \ell_{\lambda}(P_n,P) - \frac{1}{\sum_{t=1}^{s} \gamma_t} \sum_{t=1}^{s} \gamma_t  h_1(X_t,g_{t-1}) \right) \leq \frac{\| g_0 - g^{*} \| _2^2 }{2  \sum_{t=1}^{s} \gamma_t} + 2 \frac{\sum_{t=1}^{s} \gamma_t^2}{\sum_{t=1}^{s} \gamma_s}.
 \end{equation}
 Now note that
 \begin{equation}
    \label{learningRateBoundA}
 \sum_{t=1}^{s} \gamma_t \geq s \gamma_s =  B \sqrt{n s}
 \end{equation}
 and
 \begin{equation}
    \label{learningRateBoundB}
 \sum_{t=1}^{s} \gamma_t^2 = nB^2 \sum_{t=1}^{s} \frac{1}{t} \leq n B^2(1+\sum_{t=2}^{s} \frac{1}{t}) \leq n B^2 (1+\int_{1}^{s} t^{-1} dt) = n B^2 (1+\log(s)).
 \end{equation}
 Using these and equation \ref{penultimateEqn}, we have that
 \begin{equation}
     0 \leq \mathbb{E}_{P} \left(  \ell_{\lambda}(P_n,P) - \frac{1}{\sum_{s=1}^{t} \gamma_t} \sum_{t=1}^{s} \gamma_t  h_1(X_t,g_{t-1}) \right) \leq (ns)^{-1/2} \left( \frac{1}{2B} \| g_0 - g^{*} \|_2^2 +2nB(1+\log(s)) \right) 
 \end{equation}
 Finally recall that $\| g^{*} \|_{\infty} \leq D^2$ and $g_0 = \mathbf{0}_{n}$. Hence we conclude, 
 \begin{equation}
    \label{biasBound}
 0 \leq \mathbb{E}_{P} \left(  \ell_{\lambda}(P_n,P) - \frac{1}{\sum_{t=1}^{s} \gamma_t} \sum_{t=1}^{s} \gamma_t  h_1(X_t,g_{t-1}) \right) \leq \sqrt{\frac{n}{s}} \left( \frac{D^{4}}{2B} +2B(1+\log(s)) \right) 
 \end{equation}

Now recalling the definition of $O$ (Equation \ref{defOfO}), we have that $\mathbb{E}_{P} h_1(X_{t},g_{t-1}) = \mathbb{E}_{P} \mathbb{E}_{P} \left(h_1(X_{t},g_{t-1}) | X_1,\dots,X_{t-1}\right) = \mathbb{E}_{P} O(g_{t-1})$. Using this, and that by Theorem \ref{thm:minMaxFlip}, $\ell_{\lambda}(P_n,P) \geq O(g)$ for $g \in \mathbb{R}^{n}$, we have
 \begin{equation}
     \begin{split}
         \mathbb{E}_{P}  \left| \ell_{\lambda}(P_n,P) - \frac{1}{\sum_{t=1}^{s} \gamma_t} \sum_{t=1}^{s} \gamma_t O(g_{t-1}) \right| = & \\
         \mathbb{E}_{P}  \left( \ell_{\lambda}(P_n,P) - \frac{1}{\sum_{t=1}^{s} \gamma_t} \sum_{t=1}^{s} \gamma_t O(g_{t-1}) \right) = & \\
         \ell_{\lambda}(P_n,P) - \frac{1}{\sum_{t=1}^{s} \gamma_t} \sum_{t=1}^{s} \gamma_t \mathbb{E}_{P} O(g_{t-1}) = & \\
         \ell_{\lambda}(P_n,P) - \frac{1}{\sum_{t=1}^{s} \gamma_t} \sum_{t=1}^{s} \gamma_t  \mathbb{E}_{P} h_1(X_t,g_{t-1}) = & \\
         \mathbb{E}_{P} \left(\ell_{\lambda}(P_n,P) - \frac{1}{\sum_{t=1}^{s} \gamma_t} \sum_{t=1}^{s} \gamma_t h_1(X_t,g_{t-1}) \right).
     \end{split}
 \end{equation}
Using this and Equation \ref{biasBound}, we have that
\begin{equation}
    \label{biasBound2}
    \mathbb{E}_{P}  \left| \ell_{\lambda}(P_n,P) - \frac{1}{\sum_{t=1}^{s} \gamma_t} \sum_{t=1}^{s} \gamma_t O(g_{t-1}) \right| \leq \sqrt{\frac{n}{s}} \left( \frac{D^{4}}{2B} +2B(1+\log(s)) \right).
\end{equation}

The next step is to produce an upper bound on $Var_{P} \frac{1}{\sum_{t=1}^{s} \gamma_t} \sum_{t=1}^{s} F_t$ with $F_t := \gamma_t (O(g_{t-1}) - h_1(X_t,g_{t-1}))$ for $t \geq 1$. Immediately we have that $\mathbb{E}_{P} \left(F_t | X_1,\dots,X_{t-1}\right) = 0$ (by definition of $O$ (equation \ref{defOfO})). Moreover, 
\begin{equation}
    \label{FtVariancePart1}
    Var_{P} F_t = \mathbb{E}_{P} Var_{P}(F_t | X_1,\dots,X_{t-1}) + Var_{P} E_{P} (F_t | X_1,\dots,X_{t-1}) = \mathbb{E}_{P} \gamma_t^2 Var_{P}( h_1(X_{t},g_{t-1}) | X_1,\dots,X_{t-1})
\end{equation}
 For every $g \in \mathbb{R}^{n}$ and $t \geq 1$,
 \begin{equation}
    \label{simpleVarianceUpperBound}
     Var_{P} h_1(X_t,g)  \leq \mathbb{E}_{P} \left( \left( h_1(X_t,g) - h_1(\mathbb{E}_{X \sim P} X,g) \right)^2 \right)
 \end{equation}
 where the above equation uses that the function $f(c) = \mathbb{E}_{P} (h_1(X_t,g) - c)^2$ is minimized with the choice $c = \mathbb{E}_{P} h_1(X_t,g)$. Now applying Lemma \ref{lipchitzH}, which is the $3D$-Lipchitz continuity of $h_1$ in its first argument on $A$, for any second argument input, we have that for $t \geq 1$, $g \in \mathbb{R}^{n}$
 \begin{equation}
     Var_{P} \left( h_1(X_t,g) \right) \leq 9 D^2 \mathbb{E}_{P} \| X_t - \mathbb{E}_{X \sim P} X \|_2^2 \leq 9  D^{4}.
 \end{equation}
 Using this and Equation \ref{FtVariancePart1}, we have that
 \begin{equation}
     \label{FtVariancePart2}
     Var_{P} F_t \leq \gamma_t^2 9 D^4.
 \end{equation}
 Using this and that we previously established $\mathbb{E}_{P} F_t | X_1,\dots,X_{t-1} = 0$ for $t \geq 1$ (which implies the summands are uncorrelated), we have that
 \begin{equation}
    \begin{split}
    \label{varianceBound}
    \mathbb{E}_{P} \left( \frac{1}{\sum_{t=1}^{s} \gamma_t} \sum_{t=1}^{s} F_t \right)^2 = & \\
     Var_{P} \frac{1}{\sum_{t=1}^{s} \gamma_t} \sum_{t=1}^{s} F_t = & \\
     \left(\frac{1}{\sum_{t=1}^{s} \gamma_t} \right)^2 \sum_{t=1}^{s}  Var_{P} F_t \leq & \\
     \frac{\sum_{t=1}^{s} \gamma_t^2 }{(\sum_{t=1}^{s} \gamma_t)^2} 9D^4
     \leq & \\
     9D^4\frac{n B^2 (1+\log(s))}{B^2 ns}
     \end{split}
 \end{equation}
 where in the last inequality we used Equations \ref{learningRateBoundA} and \ref{learningRateBoundB}. Finally, using both the bias bound (Equation \ref{biasBound2}) and the variance bound (Equation \ref{varianceBound}) and Jensen's inequality and that $\frac{1}{\sum_{t=1}^{s} \gamma_t} \sum_{t=1}^{s} \gamma_t h_1(X_t,g_{t-1}) = \hat{x}_{s,g_0,B}$,
 \begin{equation}
    \begin{split}
     \mathbb{E}_{P} |  \ell_{\lambda}(P_n,P) - \hat{x}_{s,g_0,B}| \leq \mathbb{E}_{P}|  \ell_{\lambda}(P_n,P) - \frac{1}{\sum_{t=1}^{s} \gamma_t} \sum_{t=1}^{s} \gamma_t O(g_{t-1}) | +  \mathbb{E}_{P} | \frac{1}{\sum_{t=1}^{s} \gamma_t}  \sum_{t=1}^{s} \gamma_t O(g_{t-1}) - \hat{x}_{s,g_0,B} | \leq & \\
     |  \ell_{\lambda}(P_n,P) - \frac{1}{\sum_{t=1}^{s} \gamma_t} \sum_{t=1}^{s} \gamma_t O(g_{t-1}) | + \sqrt{ \mathbb{E}_{P} | \frac{1}{\sum_{t=1}^{s} \gamma_t} \sum_{t=1}^{s} \gamma_t \left( O(g_{t-1}) - h_1(X_t,g_{t-1}) \right) |^{2} } \leq & \\
     \sqrt{\frac{n}{s}} \left( \frac{D^4}{2B} + 2B(1+\log(s))\right) + \sqrt{\mathbb{E}_{P} \left( \left( \frac{1}{\sum_{t=1}^{s} \gamma_t} \sum_{t=1}^{s} F_t \right)^2 \right) }\leq & \\
     \sqrt{\frac{n}{s}} \left( \frac{D^4}{2B} + 2B(1+\log(s))\right) + 3D^2 \sqrt{\frac{1+\log(s)}{s}}
     \end{split}
 \end{equation}
 Setting $B=1$ and $B = D^2$ yields the final result in Equation \ref{sgaPrecision}. The algorithm runtime is immediate as sub-gradient computation and $h_1$ evaluation takes $O(n)$ computation time, and there are $s$ iterations.
 
\end{proof}

\section{Full experimental settings and additional experimental results}
\label{sec:experimentalDetails}

The code for all experiments in this paper, in addition to detailed documentation for the code, can be found in the Supplemental zip file.

Table \ref{tab:param_settings} gives the parameter settings that we consistently use across all experiments in this paper. The only parameters that are experiment specific are: $\theta^{0}$ (the initial location for the parameter), and the upper and lower bounds for each parameter passed to CMA-ES (which vary across statistical model because the parameter spaces are different across statistical model). Also note that when estimating the function value using SGA Algorithm \ref{alg:SGA}, in practice it helps to wait a certain percentage of iterations before starting to accumulate the running average that is ultimately returned. For all experiments, we do this, only accumulating the running average for the last $40\%$ of iterations.
\begin{table}[h]
\caption{Consistent settings used across all experiments}
\centering
\begin{tabular}{lll}
\hline
\textbf{Variable} & \textbf{Description} & \textbf{Setting} \\
\hline
$M'$ & Number of bootstrap samples per $\lambda$ during $\lambda$ selection & 15 \\
$M$ & Number of bootstrap samples for primary inference (after $\lambda$ selected) & 100 \\
$s$ & Number of iterations per SGA & 20000 \\
$B$ & Learning Rate Scaling in SGA & 1 \\
$g_0$ & initial start vector in SGA & $\mathbf{0}_n$ \\
$R$ & Number of rounds per CMA-ES Optimization & 50 \\
$K$ & Population size per round of CMA-ES & 16 \\
$\sigma_0$ & initial step size of CMA-ES & 1 \\
\hline
\end{tabular}
\label{tab:param_settings}
\end{table}

\subsection{Univariate Normal experiments (including figure \ref{fig:lamSelect})}
\label{subsec:univariateNormalExperiments}

We run three experiments with the univariate normal statistical model, which is parameterized by the mean and the variance. The sample size is $n = 1000$ in all experiments. 

For comparison against minimum Wasserstein-2, we implement the strategy suggested by \cite{bernton2019approximate}. The quantity $W_2(P_n,P_{\theta})$ is approximated by drawing $x_1$ independent samples of size $x_2$ from $P_{\theta}$. This produces $x_1$ empirical measures of $P_{\theta}$ and the estimate of $W_2(P_n,P_{\theta})$ is the average over the $x_1$ values $W_2(P_n,\hat{P}_{\theta,j})$ where $\hat{P}_{\theta,j}$ is the $j^{th}$ empirical measure of $P_{\theta}$ (from $x_2$ samples of $P_{\theta}$). As in \cite{bernton2019approximate}, Nelder-Mead \citep{singer2009nelder} is used for optimization in the parameter space. We set $x_1 = 20$ and $x_2 = 20000$ as is done in \cite{bernton2019approximate}.

In each of the examples included in this section we do the following three things.(1): Run $\lambda$ selection to identify the elbow (and hence select a $\lambda$). Call it $\lambda^{*}$. (2): Given the choice $\lambda^{*}$, collect $M = 100$ bootstrap samples of the MRSW estimator using algorithm \ref{alg:singleBootstrapSample}; in addition we also collect 100 bootstrap samples from the Minimum Wasserstein 2 Estimator. (3): Using $\lambda^{*}$, run MRSW on the empirical measure of the data itself (which corresponds to running Algorithm \ref{alg:singleBootstrapSample} but with the empirical measure rather than constructing a bootstrap sample of the empirical measure), from which we extract $\hat{Q}_{\lambda}(\hat{\theta}_{\lambda})$ which is the reweighting of the empirical measure. 

In each of the experiments in this section, the figure corresponding to the experiment shows on the LEFT: the result of applying the $\lambda$ selection procedure explained in Section \ref{sec:lamSelect}. MIDDLE: the fitted densities corresponding to the bootstrap samples for B-MRSW vs the bootstrap samples from minimum Wasserstein-2. RIGHT: The reweighting of the raw sample achieved via $\hat{Q}_{\lambda}(\hat{\theta}_{\lambda})$.

For our method, we set the initial $\mu^{0} = -5$ and $\sigma^{0} = 0.15$. The parameter bounds for our method are $(-10,10)$ for $\mu$ and $(.10,20)$ for $\sigma$.

\subsubsection{Student T for geometric contamination. Dirac measure for Huber contamination}
\label{subsec:InPaperFullPipelineExample}
$P = (.95)t_{\nu = 22} + (.05) \delta_{10}$ where $t_{\nu}$ refers to the student T distribution with $22$ degrees of freedom. 

This is the experiment displayed as figure \ref{fig:lamSelect} in the main body of the text. It is redisplayed here for convenience as figure \ref{fig:tGeomContam}. The elbow is around $\lambda = 2.5$.

\begin{figure}
    \includegraphics[width=\textwidth]{Neurips/images/lam_select_diagnostic_normal_N1000_contamA.pdf}
    \caption{See Section \ref{subsec:InPaperFullPipelineExample} for full details}
    \label{fig:tGeomContam}
    
\end{figure}

\subsubsection{Discretization error for geometric contamination, biased Normal for Huber contamination}
\label{subsec:discretizationErrorNormal}

$P = (.95) T_{\rho} \# Normal(0,1) + (.05) Normal(8,1)$ where $T_{\rho}(x) = \lfloor \frac{x}{\rho} \rfloor \rho$ where $\#$ is the usual push-forward notation. Here we set $\rho = 0.05$. See Figure \ref{fig:discretizationErrorNormal}. Note that $\lambda$ selection results boxplots are similar to the previous experiment except for a general global decrease in the diagnostic value. The elbow is again around $\lambda = 2.5$.

\begin{figure}
    \includegraphics[width=\textwidth]{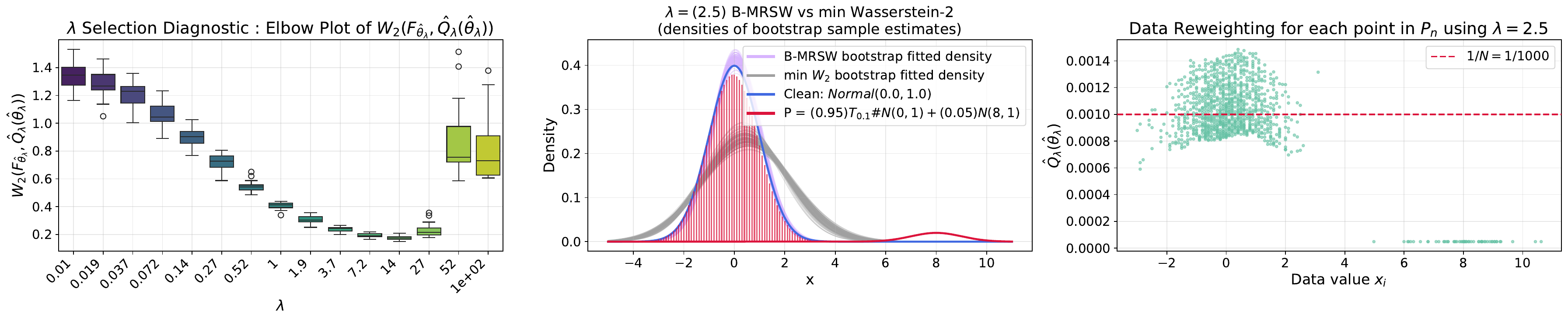}
    \caption{See Section \ref{subsec:discretizationErrorNormal} for details} \label{fig:discretizationErrorNormal}
\end{figure}

\subsubsection{Clean (no Huber or geometric contamination)}
\label{subsec:normalClean}
This example is included to demonstrate that the $\lambda$ selection exercise works properly even if there is no contaminant. See Figure \ref{fig:cleanNormal}. Here we see no elbow, indicating $\lambda \approx 0$ is reasonable to use. We choose $\lambda = .001$. Note that the gray and purple bands overlap each other, and the gray dominates the purple, which is why it appears there are only density estimates for minimum Wasserstein-2. But the two methods overlap completely and both do well, which makes sense given there is no contamination. 

\begin{figure}
    \includegraphics[width=\textwidth]{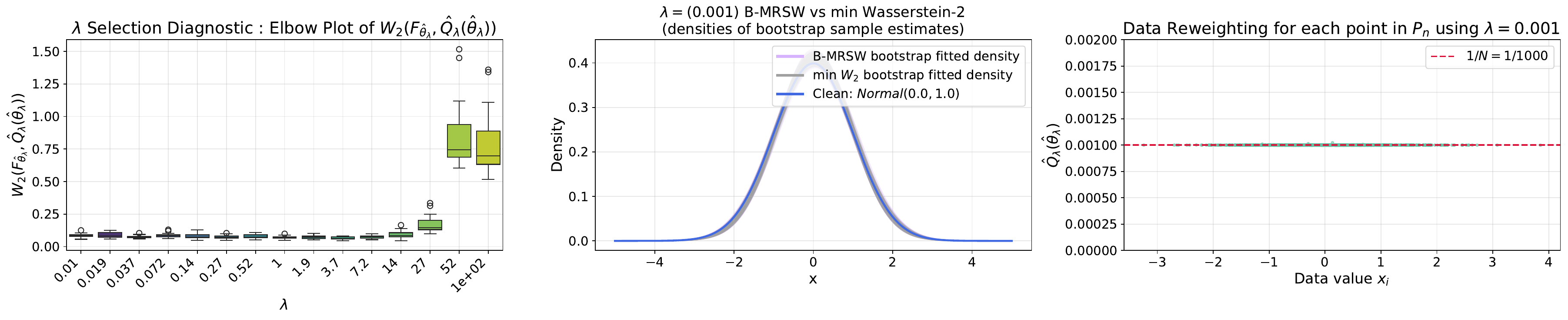}
    \caption{See Section \ref{subsec:normalClean} for details} \label{fig:cleanNormal}
\end{figure}



\subsection{g-and-k experiment}
\label{subsec:gandkExperimentDetails}

The contaminated data generating distribution for the experiment is $P = (1-\epsilon) Q + \epsilon \delta_{z}$ where $Y \sim Q$ if $X \sim GandK(a = 3, b = 1,g = 2, k =0.5)$ and $Y = \lfloor \frac{X}{\rho} \rfloor \rho$. We set $\rho = \epsilon =.05$ and $z = 50$.

We set the initial value to $(a^{0},b^{0},g^{0},k^{0}) = (5,.15,.05,.05)$ with bounds for $a$ of $(-10,10)$, for $b$ of $(.1,10)$, for $g$, $(.03,40)$ and $k$, $(.05,3.0)$.

For the NPL-MMD method, we use their code at \verb|https://github.com/haritadell/npl_mmd_project|. We use the default learning rate in their code base of $\gamma = 1$ and the default batch size in their code of $200$ -- (each iteration of their gradient descent procedure uses \verb|batch_size| samples from the empirical measure to represent the empirical measure). Since our total noise generation is $s (M'+M) = 115 * (20,000) = 2.3$ Million and the total noise generation cost in their method is $t \times  I \times M$ where $t$ is the number of noise vectors from $P_{\theta}$ per iteration of $SGD$ and $I$ is the number of iterations of their SGD, we set $t = 20$ and $I = 1000$. With $M = 100$ bootstrap samples, the total noise generation cost is $2$ million simulations which given the scale of the noise generation size is effectively equivalent to our noise generation budget. For more discussion regarding simulation efficiency, see subsection \ref{subsec:simEfficiencyComparison}.

The results of the $\lambda$ selection algorithm for our method are presented in Figure \ref{fig:gandkCleanvsContam} panel B. The elbow is in the approximate region $[0.5,3.5]$. To investigate sensitivity to the choice of $\lambda$ in the elbow, we show the results of inference using a range of $\lambda$ in the elbow region. Since \cite{dellaporta2022robust} arbitrarily uses bandwidth equal to $0.15$, we explore bandwidths near to this value, as well as the common Median Heuristic \citep{garreau2017large}. We compare both methods at sample sizes $n \in \{1000,5000\}$. For each sample size, we generate $20$ independent datasets. For each dataset, we compute the marginal bootstrap median estimate of each parameter using a given method, as well as compute a marginal confidence interval for each parameter. This produces $20$ marginal median estimates per parameter and $20$ confidence intervals per parameter. From the median estimates an average MSE to the ground truth parameter is computed. From the confidence intervals, coverage rates are computed. The complete results are presented in Table \ref{tab:gandk_comparison}. The marginal Median MSEs are further visualized in Figure \ref{fig:gandkCleanvsContam} panels C and D. Likewise, in Figure \ref{fig:gandkCleanvsContam} panels E and F the confidence interval coverages are visualized . Coverage rates for NPL-MMD are brittle for $g$ as a function of bandwidth. For B-MRSW, coverage rates are high across $\lambda$ except for kurtosis at $n = 5000$, for which our coverage greatly improves as $n$ increases. Coupled with the low average MSE for Kurtosis (across all $\lambda$) observed in Panel C, the under-coverage is a result of a finite sample bias correcting itself as $n$ increases. 


\begin{table}[h]
\caption{Comparison of G-and-K Inference Methods: Coverage, Median Interval Width (Wid), and MSE of Posterior Median per parameter}
\centering
\resizebox{\textwidth}{!}{
\begin{tabular}{llcc | cc | cc | cc | cc}
\hline
N & Method & $\lambda$ & Bandwidth & \multicolumn{2}{c|}{$a$} & \multicolumn{2}{c|}{$b$} & \multicolumn{2}{c|}{$g$} & \multicolumn{2}{c}{$k$} \\
 & & & & Cov/Wid & MSE & Cov/Wid & MSE & Cov/Wid & MSE & Cov/Wid & MSE \\
\hline
1000 & NPL-MMD & -- & Med. Heuristic & 0.70 (0.30) & 0.0136 & 1.00 (0.70) & 0.0240 & 0.00 (1.42) & 27.5786 & 0.95 (0.73) & 0.0248 \\
1000 & NPL-MMD & -- & 0.15 & 1.00 (0.48) & 0.0078 & 1.00 (1.07) & 0.0051 & 1.00 (6.40) & 0.3086 & 1.00 (3.69) & 0.0705 \\
1000 & NPL-MMD & -- & 0.3 & 1.00 (0.39) & 0.0075 & 1.00 (0.81) & 0.0150 & 0.90 (6.16) & 1.8782 & 1.00 (1.48) & 0.0326 \\
1000 & NPL-MMD & -- & 0.5 & 0.40 (0.28) & 0.0220 & 1.00 (0.78) & 0.0238 & 0.00 (2.56) & 30.2764 & 0.95 (1.13) & 0.0510 \\
1000 & B-MRSW & 0.5 & -- & 0.95 (0.20) & 0.0025 & 0.85 (0.39) & 0.0155 & 1.00 (0.89) & 0.0338 & 0.35 (0.24) & 0.0209 \\
1000 & B-MRSW & 1.5 & -- & 0.85 (0.17) & 0.0026 & 0.90 (0.37) & 0.0081 & 0.85 (0.73) & 0.0672 & 0.20 (0.23) & 0.0236 \\
1000 & B-MRSW & 2.5 & -- & 0.90 (0.17) & 0.0026 & 1.00 (0.36) & 0.0055 & 0.60 (0.70) & 0.1004 & 0.15 (0.24) & 0.0258 \\
1000 & B-MRSW & 3.5 & -- & 0.85 (0.17) & 0.0027 & 1.00 (0.35) & 0.0048 & 0.55 (0.63) & 0.1216 & 0.15 (0.22) & 0.0263 \\
\noalign{\hrule height 1.5pt}
5000 & NPL-MMD & -- & Med. Heuristic & 0.30 (0.26) & 0.0171 & 1.00 (0.56) & 0.0222 & 0.00 (1.07) & 28.7610 & 1.00 (0.60) & 0.0153 \\
5000 & NPL-MMD & -- & 0.15 & 1.00 (0.42) & 0.0041 & 1.00 (0.59) & 0.0044 & 0.80 (3.98) & 0.1407 & 0.90 (2.27) & 0.0528 \\
5000 & NPL-MMD & -- & 0.3 & 1.00 (0.33) & 0.0042 & 1.00 (0.54) & 0.0101 & 0.75 (5.69) & 0.2369 & 1.00 (0.79) & 0.0156 \\
5000 & NPL-MMD & -- & 0.5 & 0.20 (0.24) & 0.0232 & 1.00 (0.58) & 0.0162 & 0.00 (2.04) & 32.5825 & 1.00 (0.66) & 0.0337 \\
5000 & B-MRSW & 0.5 & -- & 1.00 (0.12) & 0.0005 & 0.75 (0.26) & 0.0063 & 1.00 (1.04) & 0.0048 & 0.75 (0.18) & 0.0043 \\
5000 & B-MRSW & 1.5 & -- & 0.90 (0.09) & 0.0007 & 1.00 (0.18) & 0.0010 & 1.00 (0.36) & 0.0100 & 0.70 (0.16) & 0.0038 \\
5000 & B-MRSW & 2.5 & -- & 0.90 (0.09) & 0.0009 & 1.00 (0.20) & 0.0010 & 1.00 (0.35) & 0.0139 & 0.60 (0.17) & 0.0047 \\
5000 & B-MRSW & 3.5 & -- & 0.90 (0.09) & 0.0009 & 1.00 (0.20) & 0.0010 & 0.90 (0.36) & 0.0202 & 0.45 (0.18) & 0.0056 \\
\hline
\end{tabular}
}
\label{tab:gandk_comparison}

\caption{Comparison of G-and-K Inference Methods: Coverage, Median Interval Width (MedW), and MSE of Posterior Median Estimator. $20$ unique datasets are generated according to the contaminated data generating distribution with $n \in \{1000,5000\}$. Each method uses $100$ bootstrap samples per dataset. The marginal coverage rates per parameter, and median interval width are reported. The final column considers point estimation, where the bootstrap samples are used to construct the vector of marginal medians. The Mean MSE of the vector of marginal medians to the ground-truth parameters of the clean distribution are reported.}
\label{tab:gandk_comparison}
\end{table}

\subsection{Regarding simulation efficiency}
\label{subsec:simEfficiencyComparison}

NPL-MMD is inherently not robust even to Huber Contamination when the bandwidth is selected poorly (see Section \ref{sec:MMDnotRobust}). However, given a number of noise generations $Z$ to solve an optimization in parameter space, NPL-MMD is more efficient in its total number of calls to the simulator relative to B-MRSW. In performing a single stochastic gradient descent in parameter space, NPL-MMD  only calls the simulator once per noise, whereas in CMA-ES for optimization in parameter space, B-MRSW calls the simulator (\verb|CMA-ES Rounds|)*(\verb|CMA-ES population size|) times per noise.

B-MRSW uses many more queries to the simulator because it is attempting to perform global optimization, rather than just finding a critical point. NPL-MMD on the other hand is only seeking out a critical point in its objective function landscape. Comparing the number of queries to the simulator between B-MRSW and NPL-MMD is thus not a fair comparison because NPL-MMD does not guarantee that with a sufficiently high number of queries to the simulator, the optimum in the loss landscape will actually be found.

But we again emphasize that even if NPL-MMD performed a global optimization, with a poorly chosen bandwidth inferences are not robust to Huber contamination (see Section \ref{sec:MMDnotRobust} for additional details), and no systematic procedure for bandwidth selection to achieve robust inferences is provided by \cite{dellaporta2022robust}.

\subsection{Regarding experiment runtime}
\label{subsec:wallClockTime}

On an 8 core, 2024 Macbook Air with M3 chip and 16GB of memory the runtime is 3 minutes and 32 seconds to perform one CMA-ES optimization using the parameter settings in Table \ref{tab:param_settings} when $n = 5000$ data points are from the contaminated G-and-K distribution of Section \ref{sec:numericalIllustrations} and the G-and-K simulator is used. Constructing a confidence interval based on $M = 100$ bootstrap samples thus takes approximately 350 minutes. However, one can leverage that the bootstrap samples can be collected in parallel. We used Google Cloud Services; using 100 N2 standard 4CPU machines in parallel, the runtime to process 100 bootstrap samples reduces to 9 - 10 minutes. 

A single optimization using NPL-MMD at the settings described in Section \ref{subsec:gandkExperimentDetails} takes only 50 seconds on the 8 core, 2024 Macbook Air with M3 chip and 16GB of memory (for the $N = 5000$ points from the contaminated G-and-K distribution when using the G-and-K simulator). However, again we note that their Stochastic Gradient Descent algorithm can only guarantee convergence to a critical point in the objective landscape, whereas CMA-ES targets a global optimizer in the objective function landscape.

\section{The limits of Gaussian kernel MMD for bandwidth selection}
\label{sec:MMDnotRobust}

This section establishes that Gaussian-kernel MMD is non-robust at the high bandwidth extreme, helping to explain the bandwidth sensitivity of NPL-MMD observed in Section \ref{sec:numericalIllustrations}.

The Gaussian MMD is defined for two distributions $P$ and $Q$ as
\begin{equation}
    \label{gaussianMMDdefinition}
    \begin{split}
    MMD^2(P,Q) = \mathbb{E}_{X \sim P, Y \sim P} \exp(-\frac{1}{2 \sigma_0^2} (X-Y)^2 ) + \mathbb{E}_{X \sim Q, Y \sim Q} \exp(-\frac{1}{2 \sigma_0^2} (X-Y)^2) - & \\
    2 \mathbb{E}_{X \sim P, Y \sim Q} \exp(-\frac{1}{2 \sigma_0^2} (X-Y)^2)
    \end{split}
\end{equation}


\subsection{Gaussian MMD behaves like a first moment distance for large bandwidth when dimension = 1}
\label{MMDnotRobust}

\begin{lemma}
Suppose $P$ and $Q$ have finite fourth moments ($\mathbb{E}_{X \sim P} X^4 < \infty$, $\mathbb{E}_{Y \sim Q} Y^4 < \infty$). Then
\[
\lim_{\sigma_0 \to \infty} \sigma_0^2 MMD^2(P,Q) = (\mathbb{E}_{X \sim P} X - \mathbb{E}_{X \sim Q} X)^2
\]
\end{lemma}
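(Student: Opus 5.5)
We will Taylor expand the Gaussian kernel in the small parameter $u = (x-y)^2/(2\sigma_0^2)$. Write $e^{-u} = 1 - u + r(u)$ with $0 \le r(u) \le u^2/2$ for $u \ge 0$; this follows from the alternating-series bound, or from the fact that $g(u)=e^{-u}-1+u$ satisfies $g(0)=0$ and $0 \le g'(u) = 1-e^{-u} \le u$. Substitute this into each of the three expectations in Equation \ref{gaussianMMDdefinition}.

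The constant terms contribute $1+1-2 = 0$. The linear terms contribute
\[
-\frac{1}{2\sigma_0^2}\Big(\mathbb{E}_{P\otimes P}(X-X')^2 + \mathbb{E}_{Q\otimes Q}(Y-Y')^2 - 2\,\mathbb{E}_{P\otimes Q}(X-Y)^2\Big).
\]
We compute these with independent copies. Let $\mu_P,\mu_Q$ be the means and $v_P,v_Q$ the variances. Then $\mathbb{E}(X-X')^2 = 2v_P$ and $\mathbb{E}(Y-Y')^2 = 2v_Q$. For independent $X\sim P$, $Y\sim Q$ we get $\mathbb{E}(X-Y)^2 = v_P + v_Q + (\mu_P-\mu_Q)^2$. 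Hence the bracket equals $-2(\mu_P-\mu_Q)^2$, and the linear terms contribute exactly $(\mu_P-\mu_Q)^2/\sigma_0^2$. After multiplying by $\sigma_0^2$, this gives the claimed limit identically in $\sigma_0$. Finite second moments are enough for this step.

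It remains to show that the remainder terms, multiplied by $\sigma_0^2$, vanish. Each remainder is bounded by
\[
\mathbb{E}\, r\!\left(\frac{(X-Y)^2}{2\sigma_0^2}\right) \le \frac{\mathbb{E}(X-Y)^4}{8\sigma_0^4},
\]
where $X,Y$ are independent with laws drawn from $\{P,Q\}$. By $(a-b)^4 \le 8(a^4+b^4)$, this is finite thanks to the fourth moment hypothesis. So $\sigma_0^2$ times each of the three remainders is $O(\sigma_0^{-2}) \to 0$, and the limit follows.

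The only step needing care is this remainder control, which is where the fourth moments enter; the main algebra is the variance cancellation, which is routine. Alternatively one could use dominated convergence on $\sigma_0^2(e^{-u}-1+u)$, but the explicit $u^2/2$ bound is cleaner and gives a rate.
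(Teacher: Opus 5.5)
Your proof is correct and follows essentially the same route as the paper: a second-order Taylor expansion of the Gaussian kernel, cancellation of the constant terms, and the variance identity $\mathbb{E}(X-Y)^2 = v_P + v_Q + (\mu_P-\mu_Q)^2$ for the quadratic part. The only difference is in the remainder: the paper passes to the limit by dominated convergence with the majorant $(x-y)^2/2 + (x-y)^4/4$ for $\sigma_0>1$, whereas your explicit bound $0 \le r(u) \le u^2/2$ avoids DCT and also gives an $O(\sigma_0^{-2})$ rate.
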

\begin{proof}
The two term taylor series expansion of the Kernel is
\begin{equation}
\label{taylorForMMD}
\exp(-\frac{1}{2 \sigma_0^2} (x-y)^2) - \exp(0) = -\frac{1}{2 \sigma_0^2} (x-y)^2 + \frac{\exp(\xi(x,y)) \left( \frac{1}{2 \sigma_0^2} (x-y)^2 \right)^2}{2}
\end{equation}
where $\xi(x,y) \in \left[  -\frac{1}{2\sigma_0^2} (x-y)^2, 0 \right]$. By the above, we have the pointwise limit, for every $x,y$, as $\sigma_0 \to \infty$
\[
\sigma_0^2 \left(\exp(-\frac{1}{2 \sigma_0^2} (x-y)^2 ) -1\right) \to -\frac{1}{2} (x-y)^2
\]
By Equation \ref{taylorForMMD}, and the finite fourth moment condition, we also have for any $\sigma_0 > 1$
\[
| \sigma_0^2 \left( \exp(-\frac{1}{2 \sigma_0^2} (x-y)^2) -1\right) | \leq \frac{(x-y)^2}{2} + \frac{(x - y)^{4}}{4}
\]
By the finite fourth moment condition, the right hand side function above is $L_1( P \times P), L_1(P \times Q), L_1(Q \times Q)$. By the definition of $MMD^2$ and DCT we thus have
\[
\sigma_0^2 MMD^2(P,Q) =
\]
\begin{equation}
    \begin{split}
     \sigma_0^2 \mathbb{E}_{X \sim P, Y \sim P} \left( \exp(-\frac{1}{2 \sigma_0^2} (X-Y)^2) -1\right)+ \sigma_0^2 \mathbb{E}_{X \sim Q, Y \sim Q} \left( \exp(-\frac{1}{2 \sigma_0^2} (X-Y)^2) -1\right) - & \\
    2 \sigma_0^2 \mathbb{E}_{X \sim P, Y \sim Q} \left( \exp(-\frac{1}{2 \sigma_0^2} (X-Y)^2)-1\right) \to & \\
    -\frac{1}{2} \mathbb{E}_{X \sim P, Y \sim P} (X-Y)^2 - \frac{1}{2} \mathbb{E}_{X \sim Q, Y \sim Q} (X-Y)^2 + \mathbb{E}_{X \sim P, Y \sim Q} (X-Y)^2 = & \\
    - Var(P) - Var(Q) + \mathbb{E}_{X \sim P, Y \sim Q} (X-Y)^2 = & \\
    -Var(P)-Var(Q) + Var(P)+Var(Q) + (\mathbb{E}_{X \sim P} X - \mathbb{E}_ {X \sim Q} X)^2 = & \\
    (\mathbb{E}_{X \sim P} X - \mathbb{E}_{X \sim Q} X)^2
    \end{split}
\end{equation}

\end{proof}

\newpage

\end{document}